\def\phi{\varphi}
\newcommand{\mathrmL}{{\mathchoice{\mbox{\rm\L}}{\mbox{\rm\L}}{\mbox{\rm\scriptsize\L}}{\mbox{\rm\tiny\L}}}}
\renewcommand{\lnot}{\mathord{\sim}}
\newtheorem{theorem}{Theorem}[section]
\newtheorem{definition}[theorem]{Definition}
\newtheorem{lemma}[theorem]{Lemma}
\newtheorem{remark}[theorem]{Remark}
\newtheorem{corollary}[theorem]{Corollary}
\newtheorem{fact}[theorem]{Fact}
\newtheorem{example}[theorem]{Example}
\newcommand{\lang}[1]{\ensuremath{\mathcal #1}}
\newcommand{\Fm}{\ensuremath{\mathit{Tm}}}
\newcommand{\Tm}{\Fm}
\newcommand{\Vbl}{\ensuremath{\mathrm{Var}}}
\newcommand{\logic}[1]{\ensuremath{\mathrm {#1}}}
\newcommand\FL[1]{\ensuremath{ \logic{{FL}}_\mathrm{#1} }}
\newcommand\FLew{\FL{ew}}
\newcommand\WCon{\logic{WCon}}
\newcommand\CL{\ensuremath{\mathrm{CL}}}
\newcommand\MTL{\logic{MTL}}
\newcommand\SMTL{\logic{SMTL}}
\newcommand\BL{\logic{BL}}
\newcommand\SBL{\logic{SBL}}
\newcommand\Int{\logic{Int}}
\newcommand{\alg}[1]{ {\ensuremath{\mathcal {#1}}}}
\newcommand{\Alg}[1]{\ensuremath{\mathbbmss {#1}}}
\newcommand{\aFLew}{\Alg{FL_{ew}}}
\newcommand{\BA}{\Alg{BA}}
\newcommand{\HA}{\Alg{HA}}
\newcommand{\aWCon}{\Alg{WCon}}
\newcommand{\MV}{\Alg{MV}}
\newcommand{\aBL}{\Alg{BL}}
\newcommand{\aMTL}{\Alg{MTL}}
\newcommand{\aG}{\Alg{G}}
\newcommand{\aInFLew}{\Alg{InFL_{ew}}}
\newcommand\standardL{\ensuremath{[0,1]_\mathrmL}}
\newcommand\standardP{\ensuremath{[0,1]_{\Pi}}}
\newcommand\standardG{\ensuremath{[0,1]_{\mathrm G}}}
\newcommand\stBool{ \ensuremath{ {\{0,1\}_\mathrm{B} }} }
\newcommand\TAUT{\ensuremath{\mathrm{TAUT}}}
\newcommand\SAT{\ensuremath{\mathrm{SAT}}}
\newcommand\SATPOS{\ensuremath{\mathrm{SATPOS}}}
\newcommand\oSAT{\ensuremath{\mathrm{\overline{SAT}}}}
\newcommand\oSATPOS{\ensuremath{\mathrm{\overline{SATPOS}}}}
\newcommand\ComplexityClass[1]{\ensuremath{\mathrm{#1}}}
\newcommand\Po{\ensuremath{\ComplexityClass{P}}}
\newcommand\NP{\ensuremath{\ComplexityClass{NP}}}
\newcommand\coNP{\ensuremath{\ComplexityClass{coNP}}}
\newcommand\DP{\ensuremath{\ComplexityClass{DP}}}
\begin{document}

\title{Term satisfiability in $\FLew$-algebras\footnote{http://dx.doi.org/10.1016/j.tcs.2016.03.009}
\footnote{\copyright\ 2016. This manuscript version is made available under the CC-BY-NC-ND 4.0 license http://creativecommons.org/licenses/by-nc-nd/4.0/.}}
\author{Zuzana Hanikov\'a and Petr Savick\'y\\
Institute of Computer Science, Czech Academy of Sciences,\\ 182 07 Prague, Czech Republic\\
hanikova@cs.cas.cz, savicky@cs.cas.cz}
\date{}
\maketitle

\begin{abstract}
$\FLew$-algebras form the algebraic semantics of the full Lambek calculus with exchange and weakening.
We investigate two relations, called \emph{satisfiability} and \emph{positive satisfiability}, 
between $\FLew$-terms and $\FLew$-algebras.  
For each $\FLew$-algebra, the sets of its satisfiable and positively satisfiable terms 
can be viewed as fragments of its existential theory;
we identify and investigate the complements as fragments of its universal theory. 
We offer characterizations of those algebras that (positively) satisfy just those terms
that are satisfiable in the two-element Boolean algebra providing its semantics to classical propositional logic.
In case of positive satisfiability, these algebras are just the nontrivial weakly contractive $\FLew$-algebras.
In case of satisfiability, we give a characterization by means of another property of the algebra, 
the existence of a two-element congruence.
Further, we argue that (positive) satisfiability problems in $\FLew$-algebras are computationally hard.
Some previous results in the area of term satisfiability in MV-algebras or BL-algebras 
 are thus brought to a common footing with known facts on satisfiability in Heyting algebras.
\end{abstract}

\section{Introduction}

This work investigates two satisfiability relations between 
terms of a particular algebraic language and 
algebras interpreting that language. 
It often refers to Boolean term satisfiability, 
its semantic setting is however broader,
and one of its main aims is to determine whether or not,
and how, this broader setting in fact extends 
the set of terms satisfiable in the two-element Boolean algebra.\footnote{In this paper, 
the interpretation provided by the two-element Boolean algebra is referred to shortly as `classical'.}
As our base, we choose  a class of algebras that forms the equivalent algebraic semantics 
of the full Lambek calculus with exchange and weakening (traditionally denoted $\FLew$). 
This propositional logic is regarded as a \emph{substructural} logic
in the sense of \cite{Schroeder-Heister-Dosen:Substructural, Galatos-JKO:ResiduatedLattices}. 
The framework of substructural logics 
is a means of bringing many different logical systems to a common denominator. 
In particular, all of the logics are considered in the same language.\footnote{As a consequence,
some function symbols in this language become term-definable in some of the stronger logics, such as classical logic.}  
Extensions of the logic $\FLew$ include classical propositional logic, 
as well as the intuitionistic and superintuitionistic logics, 
the monoidal t-norm logic $\MTL$, H\'ajek's logic $\BL$, and {\L}ukasiewicz logic $\L$;  
likewise, the class of $\FLew$-algebras is quite comprehensive.   

We work with the algebraic semantics of $\FLew$. 
We do not use its formal deductive systems, 
nor do we present any here.
No distinction is made between algebraic terms and propositional formulas and
the logic is introduced algebraically, that is, it is formally identified with the set of 
terms that are valid under all interpretations given by $\FLew$-algebras.
An $\FLew$-algebra\footnote{Cf.~Definition \ref{def:FLew-algebra}.} $\alg{A}$ 
has the partial order of a bounded lattice, with least element $0^\alg{A}$,  greatest element $1^\alg{A}$, 
and two lattice operations $\wedge^\alg{A}$ and $\vee^\alg{A}$.
Further there is a commutative residuated monoidal operation $\cdot^\alg{A}$, with neutral element $1^\alg{A}$
and residuum $\to^\alg{A}$. 
A term is \emph{tautologous} (valid) in an $\FLew$-algebra $\alg{A}$
iff all $\alg{A}$-assignments send it to $1^\alg{A}$,
whereas it is \emph{satisfiable} in $\alg{A}$ iff 
the same occurs under some $\alg{A}$-assignment,
and \emph{positively satisfiable} iff some $\alg{A}$-assignment sends it to an element greater than $0^\alg{A}$. 

In other works, the term `satisfiability' may relate not just to terms but rather
to first-order formulas, and what is in fact investigated is the existential theory of the structures (algebras). 
Needless to say, that approach is more general, subsuming ours as one of its fragments. 
Still, in Section \ref{section:MV} we reformulate a theorem of Gispert, 
as given in \cite{Gispert-Mundici:MVAlgebras}, to show that in an MV-chain (a linearly
ordered MV-algebra),  
tautologousness and satisfiability fully determine its universal (and hence existential) theory. 

Why $\FLew$-algebras? As we intend to make comparisons to classical 
satisfiability, we wish to preserve its flavour (despite the new semantics of terms);
it is perhaps best preserved with  algebras that interpret the (full) classical language, 
carry an order, and have a least element and a greatest element in this order.
This demand cautions not to settle for a too broad class of algebras. 
At the same time, there is a call for a comprehensive class: there are previous results 
not only on Boolean satisfiability and Heyting satisfiability, but also on satisfiability 
in MV-algebras, G\"odel-Dummett algebras and product algebras (\cite{Mundici:Satisfiability, Hajek:1998}),
and we want to be able to relate to these results. 
Apart from these two demands that seem to balance each other,
one wants to be somewhat familiar with the class one works with; 
rather a lot is known about the lattice of subvarieties 
of the variety of $\FLew$-algebras and about some of the subvarieties in themselves (\cite{Galatos-JKO:ResiduatedLattices}).

This paper can be read as a study of the features satisfiability acquires 
when one departs from the classical interpretation;
despite the many differences, we show (positive) satisfiability in $\FLew$-algebras 
ties to classical satisfiability in several ways.
Let us briefly reflect on the analogies and the distinctions we are facing.
As remarked, the class of $\FLew$-algebras subsumes 
the class of Boolean algebras.
Importantly, the two-element Boolean algebra is a subalgebra of 
each nontrivial $\FLew$-algebra, obtained by considering 
just the least and the greatest element of the bounded lattice
with the restricted operations. Therefore, a term 
that is classically satisfiable is also satisfiable in any nontrivial $\FLew$-algebra.
Under a fixed interpretation provided by the two-element Boolean algebra, 
tautologousness and satisfiability are just 
\emph{properties} of terms, and they are related:
a term is a classical tautology iff its negation\footnote{where $\neg \varphi$, 
the negation of a term $\varphi$, is defined as $\varphi\to 0$} is not classically satisfiable,
and vice versa. 
Both tautologousness and satisfiability depend essentially on the algebraic interpretation: 
within this paper, the algebra interpreting the language is not fixed,
but is viewed as an argument to a satisfiability operator, while 
satisfiability itself is considered as a binary relation between algebras and terms.
The interpretation does make a difference:
a prime example is the well-known standard MV-algebra on the real unit interval [0,1],
interpreting the infinite-valued logic of {\L}ukasiewicz, which satisfies terms that
are classically unsatisfiable, such as 
 $x \equiv \neg x$ under the assignment $x\mapsto 1/2$. 
Satisfiability and positive satisfiability of terms 
in the standard MV-algebra has been investigated by Mundici in \cite{Mundici:Satisfiability},
who has shown  both problems to be $\NP$-complete.
The result is exciting: the domain of the standard MV-algebra is infinite
(it has the cardinality of the continuum),
thus there is no obvious way 
either of testing satisfiability of a term or of certifying it succinctly. 
Indeed for many subalgebras of the standard MV-algebra
satisfiability of terms is algorithmically undecidable;
we show this in Section~\ref{section:MV}. 

As another well-known example, one may consider the class of Heyting algebras,
the algebraic counterpart of intuitionistic logic. 
It is well known (cf.~\cite{Chagrov-Zakharyaschev:ModalLogic}) 
that subvarieties of Heyting algebras present a rich structure,
so in this class, tautologousness is clearly a \emph{relation} 
between terms of the above language and algebras of the given class:
considering different Heyting algebras, the set of their tautologies may differ.
Yet it is well known by Glivenko theorem (\cite{Glivenko:Equivalence}) 
that a term is satisfiable in a nontrivial Heyting algebra
iff it is classically satisfiable.
These two facts make it apparent that the link, familiar from classical logic, 
between satisfiability and tautologousness is missing for Heyting algebras; 
despite the fact that there is a vast number of logics/sets of tautologies
given by Heyting algebras, the sets of satisfiable terms for any two nontrivial Heyting algebras coincide.

The observation that satisfiability need not link simply to tautologousness, under a semantics
more general than the classical one, 
prompts a study of satisfiability in its own right. 
The classical link, occasioned by the duality of quantifiers, is preserved when,
for an $\FLew$-algebra under consideration, or indeed for any first-order structure, 
one looks not at its terms but at its full existential theory and its full universal theory: 
an existential sentence $\Phi$ is in the existential theory of the structure 
iff $\lnot\Phi$ (a universal sentence, where $\lnot$ denotes the classical negation) 
is not in the universal theory of that structure.
Still, tautologousness is a tiny fragment of the universal theory and satisfiability 
is a tiny fragment of the existential theory, 
so the duality need not be reasonably helpful regarding 
what can be said of the relation of these two fragments.
To reconstruct part of the duality, we identify the fragments of universal
theory that complement satisfiability and positive satisfiability; this is done in Section \ref{section:satisfiability}.

Under a given interpretation, both tautologousness and satisfiability of terms 
constitute a decision problem, and one may ask how difficult it is to recognize the set of such terms. 
Tautologousness/theoremhood problems for logics extending $\FLew$, including their decidability 
and computational complexity, have merited a lot of attention, while
satisfiability studies (apart from the classical SAT problem,
which is the standard NP-complete problem, cf.~\cite{Cook:Complexity1971}) are more scarce.
Works in term satisfiability, particularly in its computational complexity,
for $\FLew$-algebras include Mundici's work \cite{Mundici:Satisfiability}  
showing $\NP$-completeness of satisfiability and positive satisfiability in the standard MV-algebra and
results in H\'ajek's book \cite{Hajek:1998}, showing that satisfiability in 
the standard G\"odel and product algebras is classical and hence $\NP$-complete; 
the $\NP$-completeness results are extended to any standard BL-algebra in \cite{Hanikova:thesis}.
\cite{Cintula-Hajek:ComplexityLukasiewicz} shows $\NP$-completeness for satisfiability
in axiomatic extensions of {\L}ukasiewicz logic. 
Recently, the  paper \cite{vanAlten:PartialAlgebras} has addressed the application of partial algebras to the existential theory
of Boolean and Heyting algebras.\footnote{The last two mentioned results
are concerned with satisfiability in \emph{classes} of algebras.}

One of the two following definitions of the classical satisfiability problem is usually considered:
\begin{align}
 \SAT & = \{\varphi \mid  \alg{A} \models \exists \bar x (\varphi(\bar x)\approx 1 )\} \label{FL_SAT1_BOOL} \\
 \SAT & = \{\varphi \mid \alg{A} \models \exists \bar x (\varphi(\bar x) >0 ) \} \label{FL_SAT0_BOOL}
\end{align}
where  $\varphi$ ranges over well-formed terms of the language. 
These two definitions yield, interpreted classically, the same set of terms.
Given an $\FLew$-algebra $\alg{A}$, one can distinguish 
\begin{itemize}
\item (fully) satisfiable terms: as in (\ref{FL_SAT1_BOOL});
\item positively satisfiable terms: as in (\ref{FL_SAT0_BOOL});
\item unsatisfiable terms:  the complement of  (\ref{FL_SAT0_BOOL}).
\end{itemize}
In a nontrivial $\FLew$-algebra, 
positively satisfiable terms subsume (fully) satisfiable ones.
If $\alg{A}$ is distinct from the two-element Boolean algebra, 
it is often not obvious whether or not 
(\ref{FL_SAT1_BOOL}) and (\ref{FL_SAT0_BOOL}) yield the same set of terms, i.e., 
whether there are any terms that are neither fully satisfiable nor unsatisfiable in $\alg{A}$.
This is one of the points addressed in this paper.

\medskip
{\bf Paper structure overview.} 
Section \ref{section:prelim} defines $\FLew$-algebras and introduces some subvarieties
that are of interest within this paper.
Section \ref{section:satisfiability} introduces the binary relations of 
full satisfiability and positive satisfiability between
$\FLew$-terms and $\FLew$-algebras and discusses some of their properties.  
Section \ref{section:satpos} focuses on positive satisfiability. It gives a characterization 
of those $\FLew$-algebras where positive satisfiability is classical (Theorem \ref{theorem:char_satpos}).  
It turns out that those are exactly the nontrivial \emph{weakly contractive algebras} within $\FLew$; 
the variety of weakly contractive algebras can be delimited by a single identity within $\FLew$. 
Moreover, the class of nontrivial weakly contractive $\FLew$-algebras characterizes those algebras 
where full satisfiability coincides with positive satisfiability.
We also show that there are continuum many different positive satisfiability problems
for $\FLew$-algebras.
Section \ref{section:fulsat} is about full satisfiability; 
it gives a characterization of $\FLew$-algebras with classical full satisfiability
in terms of another property, the existence of a congruence with exactly two classes (Corollary \ref{corSATvsHom}).
Moreover, for $\FLew$-chains, nonclassical full satisfiability can be 
characterized using a single term (Theorem \ref{thChainsClassicalSAT}).
Section \ref{section:MV} discusses satisfiability in MV-algebras, 
showing, i.a., that the set of the full satisfiability problems
given by MV-chains (a fortiori, by $\FLew$-algebras) has the cardinality of the continuum (Theorem \ref{theorem:MV_SATs}).
Some fragments of the $\FLew$-language are considered in Section \ref{section:CNFandDNF}, 
and it is shown that in a nontrivial $\FLew$-algebra, satisfiability and positive satisfiability are $\NP$-hard (Corollary \ref{CF-NP-complete}).
In Section \ref{section:DP} we look at the difference of the set of positively satisfiable terms
and the set of fully satisfiable ones in an $\FLew$-algebra; 
we show that, if nonempty, this set is $\DP$-hard (Theorem \ref{th_DP_hard}); 
if, moreover, both the full satisfiability and the positive satisfiability problems
 are in $\NP$, then it is  $\DP$-complete.

\section{Preliminaries}
\label{section:prelim}

This section introduces the class of $\FLew$-algebras along with some subclasses.  

This paper is concerned with the algebraic semantics of propositional logics, 
therefore, speaking of a \emph{logic} 
(such as classical, intuitionistic, or $\FLew$), 
what is meant is just the propositional part thereof.
Given an algebraic language $\lang{L}$ (a set of function symbols) containing $\to$, 
a \emph{logic} in the language $\lang{L}$ is a set of $\lang{L}$-terms 
that is substitution invariant and closed under logical consequence 
(the modus ponens rule). 
Given two logics $\logic{L,L'}$ in a language $\lang{L}$, 
one says $\logic{L'}$ \emph{extends} $\logic{L}$
iff $\logic{L}\subseteq \logic{L'}$. 
A logic $\logic{L}$ in a language $\lang{L}$ is \emph{consistent} 
iff it is distinct from the set of all $\lang{L}$-terms,
otherwise it is inconsistent. 

As remarked, this paper makes no distinction between 
logical connectives and function symbols 
nor between formulas and terms. 
Because our setting is algebraic, our preference is the latter respectively.
The language of $\FLew$ has four binary function symbols: 
$\cdot$ (called \emph{multiplication}, or \emph{multiplicative conjunction}), 
$\to$ (\emph{implication} or \emph{residuation}), 
$\wedge$ and $\vee$ (\emph{lattice conjunction} and \emph{disjunction}),  
and two constants $0$ and $1$. 

Moreover, a countably infinite set of variables is considered:  
$\Vbl=\{x_i\}_{i\in N}$, where elements are informally denoted with lowercase letters such as $x,y,z$.
An $n$-tuple $x_1,\dots, x_n$ of variables may be denoted $\bar x$.   
$\FLew$-terms are defined inductively as usual, 
and denoted with lowercase Greek letter such as $\varphi, \psi, \chi$. 
The notation $\varphi(x_1,\dots, x_n)$ (or $\varphi(\bar x)$) signifies that
all the variables occurring in the term $\varphi$ are among $x_1,\dots, x_n$ (or $\bar x$). 
The set of all terms of the language of $\FLew$ is denoted  $\Fm$.
All terms within this paper are implicitly considered $\FLew$-terms,
unless stated otherwise.

The unary symbol $\neg$ (negation) is introduced 
by writing $\neg \phi$ for $\phi\to 0$ for any term $\phi$; 
moreover, we write $\phi \equiv \psi$ for $(\phi\to\psi)\cdot(\psi\to\phi)$
and $\varphi+\psi$ for $\neg(\neg \varphi \cdot \neg \psi)$
for any pair of terms $\phi$ and $\psi$.
Precedence of function symbols is as follows:
$\neg$ binds stronger than any binary symbol; 
$\cdot$, $\wedge$ and $\vee$ bind stronger than $\to$ and $\equiv$.
For a term $\varphi$, we write $\varphi^n$ for $\varphi\cdot\varphi\cdot\dots\cdot\varphi$ ($n$ terms)
and $n\varphi$ for $\varphi+\varphi+\dots+\varphi$ ($n$ terms).

An interpretation of a function/predicate symbol $f$ in an algebra $\alg{A}$ is denoted $f^\alg{A}$.
We use $\approx$ as the identity symbol and $=$ 
for equality in an algebra $\alg{A}$. 
The superscripts may be omitted if no confusion can arise.

Moreover, as we work with fragments of algebraic theories,
we need notation for connectives of classical logic, occurring in first-order algebraic formulas
(whose atoms are algebraic identities):
we shall use $\&$ for the conjunction, $\Rightarrow$ for the implication,
$\lnot$ for the negation, and $\bot$ for falsity.

\begin{definition}
\label{def:FLew-algebra}
An algebra $\alg{A} = \langle A, \cdot^\alg{A}, \to^\alg{A}, \wedge^\alg{A}, \vee^\alg{A}, 0^\alg{A}, 1^\alg{A} \rangle$
with four binary operations and two constants is an $\FLew$-algebra if
\begin{itemize}
\item[(1)] $\langle A, \wedge^\alg{A}, \vee^\alg{A}, 0^\alg{A}, 1^\alg{A} \rangle$ is a bounded lattice with the least element $0^\alg{A}$ and the greatest element $1^\alg{A}$; we use $\leq^\alg{A}$ for the lattice order;
\item[(2)] $\langle A, \cdot^\alg{A}, 1^\alg{A} \rangle $ is a commutative monoid with the unit element $1^\alg{A}$; 
\item[(3)] $\cdot^\alg{A}$ and $\to^\alg{A}$ form a residuated pair, i.e., 
$x\cdot^\alg{A} y\leq^\alg{A} z$ iff $x\leq^\alg{A} y\to^\alg{A} z$.
\end{itemize}
\end{definition} 

\begin{remark}{\rm 
The acronym $\FLew$, standing for Full Lambek calculus with exchange and weakening, 
indicates that $\FLew$ can be obtained as an extension of another logic---namely, 
the full Lambek calculus, $\FL{}$---with two axioms/rules: \emph{exchange} (yielding commutativity  of $\cdot$)
and \emph{weakening} (vouchsafing that the lattice order is bounded with $0$ as bottom and\/ $1$ as top).
Since exchange and weakening can be rendered as two of three structural rules in a particular 
sequent calculus for intuitionistic logic (the remaining rule being contraction),
the logics obtained from this calculus by removing some of the structural rules, and their axiomatic extensions,
are called \emph{substructural}. 
Namely, the logic $\FLew$ is often mentioned as a ``contraction-free'' logic,
 as opposed to intuitionistic logic, which can be rendered as $\FL{ewc}$.
}
\end{remark}

\begin{example}{\rm 
In a commutative unital ring, 
the set of its ideals, endowed with the inclusion order (which yields a modular lattice), 
ideal multiplication, and the corresponding residuum, is an $\FLew$-algebra; cf.~\cite{Galatos-JKO:ResiduatedLattices}.
The $0$-free reduct of this algebra was an important example of a \emph{residuated lattice} 
as originally considered in \cite{Ward-Dilworth:ResLat}.\footnote{In \cite{Galatos-JKO:ResiduatedLattices},
a residuated lattice is the $0$-free reduct of an $\FL{}$-algebra.} 
}
\end{example}

The paper \cite{Ono:withoutContraction} is dedicated to $\FLew$ and its extensions.
The logic was also studied in \cite{Hohle:Monoids}.
See \cite{Galatos-JKO:ResiduatedLattices} for a development of $\FLew$ inside the substructural logic landscape.
$\FLew$-algebras can be shown to form a variety of algebras, which will be denoted  $\aFLew$.

Residuation entails  
$\langle A, \cdot^\alg{A}, 1^\alg{A}, \leq^\alg{A} \rangle$
is a partially ordered monoid, i.e., $\cdot^\alg{A}$ preserves the order.
An $\FLew$-algebra $\alg{A}$ is \emph{linearly ordered}, or, a \emph{chain},
whenever $\leq^\alg{A}$ is a linear order on $A$. 

\begin{definition}
Let $\alg{A}$ be an $\FLew$-algebra and $\varphi$ an $\FLew$-term.
The term $\varphi$ is a \emph{tautology} of $\alg{A}$ 
iff $\alg{A} \models \forall \bar x ( \varphi(\bar x)\approx 1)$.
The set of all tautologies of $\alg{A}$ is denoted\/ $\TAUT(\alg{A})$.
\end{definition}

For a class $\Alg{K}$ of $\FLew$-algebras, 
the term $\varphi$ is a tautology of $\Alg{K}$  
iff it is a tautology of each $\alg{A}\in \Alg{K}$. 
In case an $\FLew$-term $\varphi$ is a tautology of an $\FLew$-algebra $\alg{A}$,
we also say that $\varphi$ is \emph{valid} in $\alg{A}$ 
or that it \emph{holds} in $\alg{A}$,
and write simply $\alg{A}\models\varphi$.
For each $\alg{A}$, the set $\TAUT(\alg{A})$ is a logic,
often called \emph{the logic of $\alg{A}$} and denoted $\logic{L}(\alg{A})$.

\begin{definition} 
The logic $\FLew$ is the set of\/ $\FLew$-terms that are tautologies in each $\FLew$-algebra.
\end{definition}

We forfeit introducing a deductive system for the logic $\FLew$; 
see \cite{Galatos-JKO:ResiduatedLattices, Ono:withoutContraction} and references therein.
Thus we implicitly rely on completeness theorems for $\FLew$, which follow from algebraizability.

\medskip

We list some statements on $\FLew$-algebras (the superscripts denoting interpretation
are omitted for the sake of readability).

\begin{fact}
\label{FLew_facts}
Let $\alg{A}$ be an $\FLew$-algebra and $x,y,z\in A$. 
\begin{itemize} 
\item[(1)] $x\leq y$ iff $x\to y = 1$; in particular (taking $0$ for $y$), $\neg x = 1$ iff $x = 0$. 
\item[(2)] $x \cdot (x\to y) \leq  y$; in particular (taking $0$ for $y$), $x\cdot\neg x \leq 0$ and $x \leq \neg\neg x$.
\item[(3)] $x\to y \leq (y \to z) \to (x \to z)$ (transitivity of $\to$); in particular (taking $0$ for $z$), 
    $x \to y \leq \neg y \to \neg x$. This yields $\neg x = \neg\neg\neg x$.
\item[(4)] $x \leq y \to x$ (weakening).
\end{itemize}
\end{fact}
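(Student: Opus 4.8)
The plan is to derive all four items from the residuation law of Definition~\ref{def:FLew-algebra}(3), namely $a\cdot b\leq c$ iff $a\leq b\to c$, together with the observation recorded just above the statement that $\cdot$ preserves the order in each argument. Everything is bookkeeping with these two ingredients plus commutativity and associativity of $\cdot$; the only item requiring a little assembly is~(3).

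First I would establish the evaluation law~(2). Instantiating residuation with $a=x\to y$, $b=x$, $c=y$, the trivially true inequality $x\to y\leq x\to y$ yields $(x\to y)\cdot x\leq y$, and commutativity gives $x\cdot(x\to y)\leq y$. Taking $y=0$ produces $x\cdot\neg x\leq 0$; reading this back through residuation in the form ``$x\cdot\neg x\leq 0$ iff $x\leq\neg x\to 0$'' yields $x\leq\neg\neg x$. Item~(1) is then immediate: since $1$ is the top element, $x\to y=1$ iff $1\leq x\to y$, and residuation with $a=1$, $b=x$, $c=y$ rewrites $1\leq x\to y$ as $1\cdot x\leq y$, i.e.\ $x\leq y$; the case $y=0$ gives $\neg x=1$ iff $x=0$. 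Item~(4) is equally direct: residuation with $a=x$, $b=y$, $c=x$ reduces $x\leq y\to x$ to $x\cdot y\leq x$, which holds because $y\leq 1$ and monotonicity give $x\cdot y\leq x\cdot 1=x$.

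The step needing slightly more care is the transitivity~(3). I would peel off both implications by residuation: $x\to y\leq(y\to z)\to(x\to z)$ is equivalent to $(x\to y)\cdot(y\to z)\leq x\to z$, which is in turn equivalent to $(x\to y)\cdot(y\to z)\cdot x\leq z$. Two applications of the evaluation law~(2), using commutativity, associativity, and monotonicity, close this: from $x\cdot(x\to y)\leq y$ we get $x\cdot(x\to y)\cdot(y\to z)\leq y\cdot(y\to z)\leq z$. The instance $z=0$ gives $x\to y\leq\neg y\to\neg x$. For the final identity $\neg x=\neg\neg\neg x$, I would first note that this last inequality makes $\neg$ antitone: if $x\leq y$ then $x\to y=1$, so $1\leq\neg y\to\neg x$ and hence $\neg y\leq\neg x$ by~(1). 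Applying antitonicity to $x\leq\neg\neg x$ from~(2) gives $\neg\neg\neg x\leq\neg x$, while~(2) applied with $\neg x$ in place of $x$ gives $\neg x\leq\neg\neg\neg x$, so the two sides coincide.

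I do not expect a genuine obstacle: the content is entirely routine manipulation of the residuation biconditional. The only place to stay attentive is tracking the direction in which residuation is applied (distinguishing the two readings of the biconditional) and the order in which commutativity and associativity of $\cdot$ are invoked when reassociating the triple product in~(3).
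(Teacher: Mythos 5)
Your proof is correct. The paper states Fact~\ref{FLew_facts} without any proof, treating these items as standard residuated-lattice folklore; your derivation of all four items from the residuation biconditional of Definition~\ref{def:FLew-algebra}(3), together with the order-preservation of $\cdot$ recorded just before the statement, is exactly the standard argument the paper implicitly relies on, and the two delicate points---reassociating the triple product $(x\to y)\cdot(y\to z)\cdot x$ in item~(3) and combining antitonicity of $\neg$ with $x\leq\neg\neg x$ to get $\neg x=\neg\neg\neg x$---are both handled correctly.
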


\noindent
 On the other hand, for a given $x\in A$, 
$\neg x=0$ need not imply $x=1$.

\begin{fact}
\label{dist_law}
In an $\FLew$-algebra $\alg{A}$,
multiplication distributes over existing joins, i.e.,
if $\bigvee_{i\in I} y_i$ exists in $A$ for a nonempty $I$, where $y_i \in A$ for each $i\in I$, 
then for each $x\in A$, $\bigvee_{i\in I} x\cdot y_i$ exists also and
$ x\cdot \bigvee_{i \in I} y_i  = \bigvee_{i \in I}(x\cdot y_i)$.
\end{fact}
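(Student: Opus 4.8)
The plan is to derive the distribution law purely from residuation (clause~(3) of Definition~\ref{def:FLew-algebra}), using the fact that a residuated operation, being a left adjoint, preserves all existing joins. Write $s = \bigvee_{i\in I} y_i$, which exists by hypothesis. I will show that $x\cdot s$ is the least upper bound of the set $\{x\cdot y_i : i\in I\}$; this simultaneously establishes that $\bigvee_{i\in I}(x\cdot y_i)$ exists and that it equals $x\cdot s$.

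First I would check that $x\cdot s$ is an upper bound of $\{x\cdot y_i : i\in I\}$. Since $y_i \leq s$ for every $i\in I$, and multiplication preserves the order (as already noted from residuation, $\langle A,\cdot,1,\leq\rangle$ is a partially ordered monoid), we get $x\cdot y_i \leq x\cdot s$ for each $i$.

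The key step is to show that $x\cdot s$ is the \emph{least} such upper bound, and here residuation does the work in both directions. Let $z\in A$ be any upper bound of $\{x\cdot y_i : i\in I\}$, so $x\cdot y_i \leq z$ for all $i$. By residuation this is equivalent to $y_i \leq x\to z$ for all $i$, so $x\to z$ is an upper bound of $\{y_i : i\in I\}$. As $s$ is the least upper bound of the $y_i$, we obtain $s \leq x\to z$; applying residuation once more yields $x\cdot s \leq z$. Thus $x\cdot s$ lies below every upper bound of $\{x\cdot y_i : i\in I\}$, hence it is the least upper bound, as claimed.

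There is no genuine obstacle here: the argument is simply the adjunction $x\cdot(-)\dashv(x\to -)$ transporting the least-upper-bound property of $s$ to $x\cdot s$. The one point worth stating carefully is that we are asserting the \emph{existence} of $\bigvee_{i\in I}(x\cdot y_i)$, and not merely an equation between two joins that are assumed to exist; this comes for free, because the two uses of residuation above exhibit $x\cdot s$ concretely as a least upper bound. The same computation specializes to the finite binary case, giving $x\cdot(y\vee z) = (x\cdot y)\vee(x\cdot z)$.
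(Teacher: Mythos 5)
Your proof is correct and complete: the adjunction argument (an upper bound $z$ of $\{x\cdot y_i\}$ corresponds under residuation exactly to an upper bound $x\to z$ of $\{y_i\}$) establishes both the existence of $\bigvee_{i\in I}(x\cdot y_i)$ and its value, and the order-preservation you invoke is already noted in the paper as a consequence of residuation. The paper states this as a Fact without proof, treating it as standard; your argument is precisely the canonical one that is being tacitly assumed, so there is nothing to reconcile.
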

\medbreak

\begin{definition}
For a logic $\logic{L}$ extending the logic $\FLew$, an $\logic{L}$-algebra is an $\FLew$-algebra such that
all terms $\varphi\in \logic{L}$ are tautologies of $\alg{A}$.
\end{definition}

Logics extending $\FLew$ form a complete lattice ordered by inclusion, where the bottom is the logic
$\FLew$ and the top is the inconsistent logic $\Fm$.
This complete lattice structure also exists on the (dually isomorphic) lattice of subvarieties of $\aFLew$,
where the bottom is the class of trivial (one-element) $\FLew$-algebras and the top is the whole variety.

The following translations between terms and identities provide algebraizability of $\FLew$:
\begin{itemize}
\item $\varphi\mapsto\varphi\approx 1$, 
\item $\varphi\approx \psi \mapsto \varphi\equiv \psi$, 
\end{itemize}
where $\varphi$ and $\psi$ are $\FLew$-terms.

We have mentioned that a logic $\logic{L}$ is consistent iff it differs from the set of all terms in the language.
Since $0\to\varphi$ is an $\FLew$-tautology for any $\varphi$, 
one can define consistency of logics extending $\FLew$ by the condition that they do not contain the term $0$.

\medskip

Some important classes of $\FLew$-algebras are introduced below, with the corresponding logics;
 references are given.

\begin{itemize}

\item \emph{$\MTL$-algebras}, also known as semilinear $\FLew$-algebras, 
form a subvariety of $\aFLew$ delimited by the identity $(x\to y) \lor (y\to x) \approx 1$.
This variety is generated by $\FLew$-chains.
The logic $\MTL$, with the variety of $\MTL$-algebras that 
forms its equivalent algebraic semantics, was introduced 
by Esteva and Godo in \cite{Esteva-Godo:Monoidal}.

\item \emph{Weakly contractive $\FLew$ algebras}, $\aWCon$, form a subvariety of $\aFLew$ 
delimited by any of the following identities (equivalent over $\FLew$):
\begin{align*}
\neg (x^2) &\approx \neg x \\
x\to \neg x &\approx \neg x \\
x\land \neg x &\approx 0
\end{align*}
This subvariety is considered by Ono in  \cite{Ono:withoutContraction}.
The term $\SMTL$-algebras is used for weakly contractive $\MTL$-algebras.
 
\item \emph{Heyting algebras}, $\HA$, form a subvariety of $\aWCon$
delimited by the identity $x\cdot x \approx x$ 
(this identity also delimits $\HA$ within $\aFLew$).
In a Heyting algebra, the operations $\wedge$ and $\cdot$ coincide;
it follows that the bounded lattice order in a Heyting algebra is always a distributive lattice order,
and complete distributive lattices satisfying the distributive law given in Fact \ref{dist_law}
(taking $\wedge$ for $\cdot$) provide examples of Heyting algebras (in particular, 
any finite distributive lattice can be expanded to a Heyting algebra).
Heyting algebras form the equivalent algebraic semantics of intuitionistic logic $\Int$
(cf.~\cite{Chagrov-Zakharyaschev:ModalLogic} and references therein).

\item \emph{Boolean algebras}, $\BA$, 
form a subvariety of $\HA$ delimited by the identity $x\vee \neg x\approx 1$
(this identity also delimits $\BA$ within $\aFLew$).
Another identity that delimits $\BA$ within $\HA$
is the involutive law, $\neg\neg x \approx x$.  
Up to an isomorphism, there is just one Boolean algebra with exactly two distinct elements:
this is the algebra giving semantics to classical propositional logic $\CL$,
and throughout this paper it is referred to as \emph{the two-element Boolean algebra} and
denoted $\stBool$.
It generates the variety $\BA$. 

\item \emph{Involutive $\FLew$-algebras}, $\aInFLew$, form a subvariety of $\aFLew$
delimited by the identity $\neg\neg x \approx x$.
In an involutive $\FLew$-algebra $\alg{A}$, $\neg^\alg{A}$ is as an order-reversing bijection on $A$.
As $x+y$ stands for $\neg(\neg x \cdot \neg y)$,
in each involutive $\FLew$-algebra
$x + y$ is equivalent to  $\neg x\to y$, and $x\to y$ is equivalent to $\neg x +y$. 
Moreover, $x\cdot y$ is equivalent to $\neg (\neg x + \neg y)$.

\item \emph{$\BL$-algebras}, $\aBL$, form a subvariety of $\aMTL$ delimited by the identity
$x\land y \approx x\cdot(x\to y)$. 
They form the equivalent algebraic semantics of H\'ajek's logic $\BL$ (cf.~\cite{Hajek:1998}).
\emph{$\SBL$-algebras} are weakly contractive $\BL$-algebras.

\item \emph{MV-algebras}, $\MV$, are involutive $\BL$-algebras.
This variety forms the equivalent algebraic semantics of {\L}ukasiewicz logic 
(cf.~\cite{Cignoli-Ottaviano-Mundici:AlgebraicFoundations,DiNola-Leustean:Handbook} for references). 
The variety $\MV$ is generated by its single element, the 
\emph{standard MV-algebra} $\standardL$: the domain is $[0,1]$ and
the lattice order is the usual order of reals, while
the operations interpreting the language of $\FLew$ are, for each $x,y\in [0,1]$, as follows: 
$x\cdot y = \max(0,x+y-1)$; $x\to y = \min(1, 1-x+y)$;
$\neg x = 1-x$.
Other example of MV-algebras include the \emph{Komori chains} $\alg{K}_{n+1}$, for $n\geq 1$:
the domain of $\alg{K}_{n+1}$ is the interval $[ \langle 0,0\rangle, \langle n,0\rangle ]$ in the group 
$\mathds{Z} \times_{\rm lex} \mathds{Z}$; cf.~\cite{Komori:SuperLukasiewiczPropositional, DiNola-Leustean:Handbook}.  
In particular, $\alg{K}_2$ is the \emph{Chang algebra}. 

MV-algebras also satisfy the identity $x\cdot (\neg x+y) \approx y\cdot (\neg y +x)$.

\item \emph{G\"odel-Dummett algebras}, $\aG$, are semilinear Heyting algebras.
The variety is  generated by linearly ordered elements of $\HA$ and, in fact,
by a single element $\standardG$, the standard G\"odel-Dummett algebra.

\item \emph{product algebras} form a subvariety of $\SBL$-algebras, 
given by the identity 
$(x\to z)\lor(( x\to(x\cdot y))\to y) \approx 1$.
This variety is generated by the  standard product algebra,
with domain $[0,1]$ and $\cdot$ interpreted as multiplication.    
See \cite{Hajek:1998} and references therein for G\"odel-Dummett and product algebras.

\end{itemize}

\begin{fact}
\label{fact_inv_wcon}
{\rm (\cite{Ono:withoutContraction})}
 $\aWCon\cap\aInFLew=\BA$.
\end{fact}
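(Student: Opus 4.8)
The plan is to prove the two inclusions separately, the easy one first. For $\BA\subseteq\aWCon\cap\aInFLew$ I would read off the delimiting identities directly: every Boolean algebra satisfies the involutive law $\neg\neg x\approx x$ (listed above as an identity delimiting $\BA$ within $\HA$), so $\BA\subseteq\aInFLew$; and every Boolean algebra satisfies $x\wedge\neg x\approx 0$, which is one of the three equivalent identities defining $\aWCon$, so $\BA\subseteq\aWCon$.

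For the reverse inclusion, take $\alg{A}\in\aWCon\cap\aInFLew$. Since the identity $x\vee\neg x\approx 1$ delimits $\BA$ within $\aFLew$, it suffices to show that $\alg{A}$ satisfies it. Fix $a\in A$. Weak contraction gives $a\wedge\neg a=0$, and $\neg 0=1$ by Fact~\ref{FLew_facts}(1), so $\neg(a\wedge\neg a)=1$. It then remains to rewrite the left-hand side as $a\vee\neg a$.

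This rewriting is where involutivity enters. In any $\FLew$-algebra negation is antitone (a consequence of Fact~\ref{FLew_facts}(3)), which already yields $\neg a\vee\neg\neg a\leq\neg(a\wedge\neg a)$. Under involution $\neg$ is moreover an order-reversing bijection on $A$, hence a dual lattice automorphism interchanging binary meets and joins; this supplies the reverse inequality, so that $\neg(a\wedge\neg a)=\neg a\vee\neg\neg a$. Applying $\neg\neg a=a$ gives $a\vee\neg a=1$, and since $a$ was arbitrary, $\alg{A}$ satisfies $x\vee\neg x\approx 1$ and is therefore Boolean.

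I do not expect a genuine obstacle here; the one point that deserves care is the De Morgan identity $\neg(a\wedge\neg a)=\neg a\vee\neg\neg a$. Its nontrivial direction genuinely fails for general $\FLew$-algebras---for instance it already fails in suitable non-Stone Heyting algebras, which are weakly contractive yet not Boolean---and it is recovered precisely because $\neg$ is a bijection, i.e.\ only once involutivity is assumed.
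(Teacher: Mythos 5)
Your proof is correct. Note that the paper itself offers no proof of this fact: it is stated with a citation to \cite{Ono:withoutContraction}, so there is no internal argument to compare against; what you have supplied is a self-contained verification from ingredients the paper does state. The easy inclusion is immediate, as you say. For the converse, every step checks: in an involutive $\FLew$-algebra, $\neg$ is an order-reversing bijection equal to its own inverse, hence a dual lattice isomorphism, so $\neg(a\wedge\neg a)=\neg a\vee\neg\neg a$; combined with $a\wedge\neg a=0$, $\neg 0=1$, and $\neg\neg a=a$, this gives $a\vee\neg a=1$, and the paper explicitly records that $x\vee\neg x\approx 1$ delimits $\BA$ within $\aFLew$ (behind that parenthetical claim: $x=x\cdot(x\vee\neg x)=x^2\vee(x\cdot\neg x)=x^2$ by Fact~\ref{dist_law} and Fact~\ref{FLew_facts}(2), so one lands in $\HA$ with excluded middle). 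Your diagnosis of where involutivity is indispensable is also accurate: the direction $\neg(a\wedge b)\leq\neg a\vee\neg b$ already fails in non-Stone Heyting algebras, which lie in $\aWCon$.

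For comparison, a marginally more economical route avoids the delicate De Morgan direction entirely and localizes involutivity to a single application. In every $\FLew$-algebra the law $\neg(a\vee b)=\neg a\wedge\neg b$ holds outright: the inequality $\neg a\wedge\neg b\leq\neg(a\vee b)$ follows since $(\neg a\wedge\neg b)\cdot(a\vee b)\leq(\neg a\cdot a)\vee(\neg b\cdot b)=0$ by Fact~\ref{dist_law} and Fact~\ref{FLew_facts}(2), and the other inequality is antitonicity. Hence in any $\aWCon$-algebra, $\neg(a\vee\neg a)=\neg a\wedge\neg\neg a=0$, the last step being weak contraction applied to $\neg a$; involutivity then enters exactly once, via $a\vee\neg a=\neg\neg(a\vee\neg a)=\neg 0=1$. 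Both routes are sound; yours trades this for the (correct) observation that involutivity upgrades $\neg$ to a dual automorphism.
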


\section{On satisfiability and positive satisfiability}
\label{section:satisfiability}

We define two binary relations $\SAT$ and $\SATPOS$ on $\aFLew \times \Fm$
and we discuss their basic properties.

\begin{definition} 
Let $\alg{A}$ be an $\FLew$-algebra and $\varphi$ an $\FLew$-term. Then\footnote{Recall that $\lnot$ denotes the Boolean negation in the algebraic theory.} 
\begin{align*}
\SAT(\alg{A}, \varphi) &\mbox{ iff } \alg{A} \models \exists \bar x (\varphi(\bar x) \approx 1)\\
\SATPOS(\alg{A}, \varphi) &\mbox{ iff } \alg{A} \models \exists \bar x \lnot (\varphi(\bar x) \approx 0)
\end{align*}
\end{definition}

For an $\FLew$-algebra $\alg{A}$, write 
\begin{align*}
\SAT(\alg{A}) &= \{ \varphi \mid \SAT ( \alg{A}, \varphi) \} \\
\SATPOS(\alg{A}) &= \{ \varphi \mid \SATPOS ( \alg{A}, \varphi) \}   
\end{align*}
In an analogous way, one might define $\SAT(\varphi)$ and $\SATPOS(\varphi)$ for an $\FLew$-term $\varphi$; 
the statements $\SAT(\alg{A}, \varphi)$; $\varphi \in \SAT(\alg{A})$; $\alg{A}\in \SAT(\varphi)$ are equivalent,
and analogously for $\SATPOS$.
$\SAT$ and $\SATPOS$ are used throughout this paper 
as unary operators, producing sets of $\FLew$-terms.

For an $\FLew$-algebra $\alg{A}$, the sets $\SAT(\alg{A})$ and $\SATPOS(\alg{A})$ are referred to as
(fully) \emph{satisfiable} and  \emph{positively satisfiable} terms of $\alg{A}$, respectively.
The set-theoretic difference $ \SATPOS(\alg{A})\setminus\SAT(\alg{A})$ 
of terms that are positively, but not fully satisfiable in $\alg{A}$ 
will be denoted shortly $\SATPOS\setminus\SAT(\alg{A})$. 

Clearly $\SAT(\stBool)=\SATPOS(\stBool)$. 
We say that an $\FLew$-algebra $\alg{A}$ has \emph{classical satisfiability} if $\SAT(\alg{A})=\SAT(\stBool)$;
 it has \emph{classical positive satisfiability} if $\SATPOS(\alg{A})=\SAT(\stBool)$.

\begin{lemma}
\label{sat_incl}
Let $\alg{A}$ be an $\FLew$-algebra; then $\SAT(\stBool)\subseteq\SAT(\alg{A})$.
 If moreover $\alg{A}$ is nontrivial, then 
$\SAT(\alg{A}) \subseteq \SATPOS(\alg{A})$.
\end{lemma}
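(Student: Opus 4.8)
The plan is to prove the two inclusions separately, each by exhibiting an appropriate assignment in the given algebra $\alg{A}$ that witnesses membership on the right-hand side, using the witnessing assignment guaranteed on the left.

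\medskip

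\textbf{First inclusion: $\SAT(\stBool)\subseteq\SAT(\alg{A})$.} The key observation, already noted in the introduction, is that $\stBool$ embeds into every nontrivial $\FLew$-algebra via the map sending $0\mapsto 0^\alg{A}$ and $1\mapsto 1^\alg{A}$; here we do not even need nontriviality, since the claim is vacuous in the trivial case (where $0^\alg{A}=1^\alg{A}$ and every term evaluates to that single element, hence is satisfiable). So let $\varphi\in\SAT(\stBool)$. By definition there is a Boolean assignment $\bar a\in\{0,1\}^n$ with $\varphi^{\stBool}(\bar a)=1$. I would transfer this assignment to $\alg{A}$ by composing with the embedding: define the $\alg{A}$-assignment sending each $x_i$ to $0^\alg{A}$ or $1^\alg{A}$ according to whether $a_i$ is $0$ or $1$. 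Since the embedding is a homomorphism commuting with all the operations (and preserving the constants $0,1$), the value of $\varphi$ under this assignment in $\alg{A}$ is the image of $\varphi^{\stBool}(\bar a)=1$, namely $1^\alg{A}$. Hence $\alg{A}\models\exists\bar x(\varphi(\bar x)\approx 1)$, i.e.\ $\varphi\in\SAT(\alg{A})$.

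\medskip

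\textbf{Second inclusion: $\SAT(\alg{A})\subseteq\SATPOS(\alg{A})$ for nontrivial $\alg{A}$.} Let $\varphi\in\SAT(\alg{A})$, so there is an $\alg{A}$-assignment under which $\varphi$ evaluates to $1^\alg{A}$. In a nontrivial algebra $0^\alg{A}\neq 1^\alg{A}$, so this same assignment sends $\varphi$ to a value distinct from $0^\alg{A}$, witnessing $\alg{A}\models\exists\bar x\lnot(\varphi(\bar x)\approx 0)$, that is, $\varphi\in\SATPOS(\alg{A})$. This direction is essentially immediate once nontriviality is invoked.

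\medskip

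I do not expect any genuine obstacle here; both parts are short. The only points demanding a little care are (i) confirming that the two-element subset $\{0^\alg{A},1^\alg{A}\}$ with the restricted operations really is a subalgebra isomorphic to $\stBool$ — this requires checking that $0^\alg{A}$ and $1^\alg{A}$ are closed under $\cdot,\to,\wedge,\vee$, which follows from the lattice bounds together with $1$ being the monoid unit and residuation (e.g.\ $0\to 0=1$, $1\to 0=0$, $0\cdot 1=0$, and so on, matching the Boolean tables) — and (ii) handling the trivial-algebra edge case in the first inclusion, which is harmless since every term is trivially satisfiable there. The transfer-of-assignment argument via the homomorphism is the conceptual heart, but it is routine given the embedding.
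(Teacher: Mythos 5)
Your proof is correct and follows exactly the route the paper intends: the lemma is stated there without an explicit proof, resting on the observation (made in the introduction) that $\{0^\alg{A},1^\alg{A}\}$ is a subalgebra of any nontrivial $\FLew$-algebra isomorphic to $\stBool$, so satisfying assignments transfer along the embedding, while the second inclusion is immediate from $0^\alg{A}\neq 1^\alg{A}$. Your added care about the trivial-algebra case (where every term is satisfiable anyway) and the explicit verification that $\{0^\alg{A},1^\alg{A}\}$ is closed under the operations are both sound and fill in exactly what the paper leaves implicit.
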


Let us write $\oSAT$ and $\oSATPOS$ for complements of 
(the binary relations) $\SAT$ and $\SATPOS$;
thus $\oSAT(\alg{A})$ and $\oSATPOS(\alg{A})$ complement $\SAT(\alg{A})$ and $\SATPOS(\alg{A})$ in $\Fm$.

Let $\alg{A}$ be an $\FLew$-algebra and $\varphi(\bar x)$ an $\FLew$-term.
By definition of the $\SATPOS$ relation, 
$\oSATPOS(\alg{A}, \varphi)$ iff $\alg{A}\models\forall \bar x (\varphi(\bar x)\approx 0)$;
 we say that  $\varphi$ is \emph{unsatisfiable} in $\alg{A}$ (a \emph{contradiction}). 
The following lemma provides a relationship between the set of positively unsatisfiable terms
and \emph{negative tautologies}, i.e., tautologies in the form of a negated term.

\begin{lemma} \label{lmSATPOSvsTAUTNeg}
$\varphi\in \oSATPOS(\alg{A})$ iff $\neg\varphi \in \TAUT(\alg{A})$.
\end{lemma}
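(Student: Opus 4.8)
The plan is to prove the biconditional $\varphi\in\oSATPOS(\alg{A})$ iff $\neg\varphi\in\TAUT(\alg{A})$ by unwinding both sides to the same universally quantified algebraic statement. First I would recall from the discussion preceding the lemma that, by the definition of $\SATPOS$ and the duality of quantifiers in the classical metatheory, $\oSATPOS(\alg{A},\varphi)$ holds iff $\alg{A}\models\forall\bar x\,(\varphi(\bar x)\approx 0)$. So the whole task reduces to showing that, for every $\FLew$-algebra $\alg{A}$,
\begin{equation*}
\alg{A}\models\forall\bar x\,(\varphi(\bar x)\approx 0)\quad\text{iff}\quad\alg{A}\models\forall\bar x\,(\neg\varphi(\bar x)\approx 1).
\end{equation*}

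The key step is the pointwise equivalence $a=0$ iff $\neg a=1$, valid for every element $a$ of any $\FLew$-algebra. This is exactly Fact \ref{FLew_facts}(1): taking $y$ to be $0$ there gives $\neg x=1$ iff $x=0$. With that in hand I would argue as follows. Fix an arbitrary $\alg{A}$-assignment $e$ sending the variables $\bar x$ to elements of $A$, and write $a=\varphi^\alg{A}(e(\bar x))$ for the value of $\varphi$ under $e$; then $\neg\varphi$ evaluates to $\neg^\alg{A}a$. By Fact \ref{FLew_facts}(1), $a=0^\alg{A}$ iff $\neg^\alg{A}a=1^\alg{A}$. Since this holds for the value under every assignment, the universal statement $\forall\bar x\,(\varphi(\bar x)\approx 0)$ holds in $\alg{A}$ iff $\forall\bar x\,(\neg\varphi(\bar x)\approx 1)$ does, which by the definition of $\TAUT(\alg{A})$ is precisely $\neg\varphi\in\TAUT(\alg{A})$. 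Chaining the two equivalences closes the proof.

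There is essentially no hard part here: the whole content is the elementary algebraic identity of Fact \ref{FLew_facts}(1) together with a careful but routine passage between the metalevel negation of an existential statement and a universal statement. The only point that deserves a word of care is the quantifier manipulation at the very start — making explicit that $\lnot\exists\bar x\,\lnot(\varphi\approx 0)$ is the same as $\forall\bar x\,(\varphi\approx 0)$ — but this is standard classical first-order reasoning in the algebraic metatheory and is already flagged in the text immediately before the lemma. I would therefore state the proof compactly, citing Fact \ref{FLew_facts}(1) as the single substantive ingredient.
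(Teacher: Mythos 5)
Your proposal is correct and matches the paper's proof, which is exactly the one-line appeal to Fact \ref{FLew_facts}(1) (the equivalence $\neg x = 1$ iff $x = 0$); you have merely spelled out the routine quantifier duality that the paper records in the text immediately before the lemma. No differences in substance.
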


\begin{proof}
Use Fact \ref{FLew_facts}(1).
\end{proof}

$\SATPOS(\alg{A})$ and $\SAT(\alg{A})$ 
are syntactic fragments of the existential theory of $\alg{A}$;
indeed this is our definition of the two relations.
We point out that $\oSATPOS(\alg{A})$ and $\oSAT(\alg{A})$ 
are syntactic fragments of its universal theory.
Namely, $\oSATPOS(\alg{A})$ corresponds to the \emph{negative tautologies} of $\alg{A}$; 
algebraically, it is a syntactic fragment of the equational theory of $\alg{A}$.
Moreover, $\oSAT(\alg{A})$ is a syntactic fragment of the quasi-equational theory of $\alg{A}$:

\begin{lemma}
Let $\alg{A}$ be a nontrivial $\FLew$-algebra. Then $\varphi\in\oSAT(\alg{A})$
iff $\alg{A}\models \varphi\approx 1 \Rightarrow 0\approx 1$.
\end{lemma}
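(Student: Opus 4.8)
The plan is to unfold both sides into statements about assignments into $\alg{A}$ and to observe that they coincide once nontriviality is invoked. By definition, $\varphi\in\oSAT(\alg{A})$ is the negation of $\SAT(\alg{A},\varphi)$; it asserts that $\alg{A}\not\models\exists\bar x(\varphi(\bar x)\approx 1)$, which is the same as saying that no assignment $\bar a$ of elements of $A$ to the variables satisfies $\varphi^\alg{A}(\bar a)=1^\alg{A}$. On the other side, the quasi-identity $\alg{A}\models\varphi\approx 1\Rightarrow 0\approx 1$ is, read with its implicit universal closure, the statement that for every assignment $\bar a$, if $\varphi^\alg{A}(\bar a)=1^\alg{A}$ then $0^\alg{A}=1^\alg{A}$.

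The key observation is that the consequent $0\approx 1$ never holds in $\alg{A}$: since $\alg{A}$ is nontrivial, $0^\alg{A}\neq 1^\alg{A}$, so $0^\alg{A}=1^\alg{A}$ fails under every assignment. Consequently, for a given $\bar a$ the implication ``$\varphi^\alg{A}(\bar a)=1^\alg{A}$ entails $0^\alg{A}=1^\alg{A}$'' holds exactly when its premise fails, that is, when $\varphi^\alg{A}(\bar a)\neq 1^\alg{A}$. Quantifying over all $\bar a$, the quasi-identity holds iff no assignment sends $\varphi$ to $1^\alg{A}$, which is precisely the condition describing $\oSAT(\alg{A})$ above.

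To present this as a clean two-direction argument, for the left-to-right direction I would assume $\varphi\in\oSAT(\alg{A})$, so the premise $\varphi\approx 1$ is unsatisfiable in $\alg{A}$ and the implication holds vacuously. For the converse I would argue contrapositively: if $\varphi\notin\oSAT(\alg{A})$ then some assignment $\bar a$ yields $\varphi^\alg{A}(\bar a)=1^\alg{A}$, and instantiating the quasi-identity at $\bar a$ forces $0^\alg{A}=1^\alg{A}$, contradicting nontriviality. There is essentially no computational obstacle here; the only load-bearing hypothesis is nontriviality, which is exactly what makes the consequent $0\approx 1$ unconditionally false and thereby turns the quasi-identity into the negation of satisfiability. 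One may note the parallel with Lemma~\ref{lmSATPOSvsTAUTNeg}, where $\oSATPOS(\alg{A})$ was recast as a negative tautology via Fact~\ref{FLew_facts}(1); here $\oSAT(\alg{A})$ is recast as a quasi-identity, placing it within the quasi-equational theory of $\alg{A}$ as the surrounding text anticipates.
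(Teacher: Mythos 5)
Your proposal is correct: unfolding both sides into statements about assignments and noting that nontriviality makes the consequent $0\approx 1$ unconditionally false is exactly the intended argument, which the paper regards as immediate and states without proof. Nothing is missing, and your observation that nontriviality is the only load-bearing hypothesis matches the role the lemma plays in placing $\oSAT(\alg{A})$ inside the quasi-equational theory of $\alg{A}$.
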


Let us look at the behaviour of some class operators with respect to satisfiability. Let $\alg{A}$, $\alg{B}$ be $\FLew$-algebras.
If $\alg{B}$ is a subalgebra of $\alg{A}$, then $\SAT(\alg{B})\subseteq \SAT(\alg{A})$ and $\SATPOS(\alg{B})\subseteq \SATPOS(\alg{A})$;
the inclusions can be strict, as exemplified by taking $\stBool$ as $\alg{B}$  and the standard MV-algebra as $\alg{A}$.
If $\alg{B}$ is a homomorphic image of $\alg{A}$, then $\SAT(\alg{A})\subseteq \SAT(\alg{B})$.
If $\alg{A}_i, i\in I$ is a family of $\FLew$-algebras and 
$\prod_i \alg{A}_i$ is the product of $\alg{A}_i, i\in I$, then $\SAT(\prod_i \alg{A}_i) = \bigcap_i \SAT(\alg{A}_i)$
and $\SATPOS(\prod_i \alg{A}_i) = \bigcup_i \SATPOS(\alg{A}_i)$.

\medskip

The following is Theorem 3.4.1 (ii) of \cite{Hanikova:Handbook}.
Corollary \ref{CF-NP-complete} in this paper provides
a simpler proof.

\begin{theorem} {\rm (\cite{Hanikova:Handbook})} 
\label{th_SAT_SATPOS_NP-hard} 
Let $\alg{A}$ be a nontrivial $\FLew$-algebra.
Then $\SAT(\alg{A})$ and $\SATPOS(\alg{A})$ are $\NP$-hard.
\end{theorem}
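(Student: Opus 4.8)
The plan is to give a single polynomial-time reduction from classical satisfiability of formulas in conjunctive normal form (CNF-SAT), which is $\NP$-complete by Cook's theorem, and to arrange it so that the \emph{same} reduction simultaneously witnesses $\NP$-hardness of $\SAT(\alg{A})$ and of $\SATPOS(\alg{A})$. The guiding observation is that in a nontrivial $\alg{A}$ the pair $\{0^\alg{A},1^\alg{A}\}$ carries a copy of $\stBool$ as a subalgebra, on which all operations act classically, so any Boolean assignment can be replayed inside $\alg{A}$. Given a CNF $\varphi=\bigwedge_{j=1}^{m}C_j$ with $C_j=\bigvee_{k}\ell_{jk}$ (each $\ell_{jk}$ a variable or a negated variable), I would map it to the term $\chi:=\prod_{j=1}^{m}(\bigvee_{k}\ell_{jk})$: the within-clause disjunctions stay lattice joins, while the outer conjunction of clauses is turned into the \emph{multiplicative} conjunction $\cdot$. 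This is computable in linear time, and the claim is that $\chi\in\SAT(\alg{A})$, and likewise $\chi\in\SATPOS(\alg{A})$, exactly when $\varphi$ is classically satisfiable.

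For the forward direction I would take a classical satisfying assignment, reading its values $0,1$ as $0^\alg{A},1^\alg{A}$; on this two-element subalgebra the operations are Boolean, so every clause evaluates to $1^\alg{A}$ and hence $\chi$ evaluates to $1^\alg{A}$. Thus $\chi\in\SAT(\alg{A})$, and by Lemma \ref{sat_incl} also $\chi\in\SATPOS(\alg{A})$. The backward direction is the crux: I would show that if $\varphi$ is classically unsatisfiable, then $\chi\approx 0$ holds in \emph{every} $\FLew$-algebra, in particular in $\alg{A}$, so that $\chi$ is then neither fully nor positively satisfiable. Using distributivity of $\cdot$ over finite joins (Fact \ref{dist_law}) together with commutativity and associativity of $\cdot$, I would expand $\chi$ into the join $\bigvee_{\sigma}\prod_{j}\ell_{j\sigma(j)}$ over all selector functions $\sigma$ choosing one literal from each clause. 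The combinatorial heart is the equivalence: $\varphi$ is classically unsatisfiable iff every selector $\sigma$ picks a complementary pair $z,\neg z$ — indeed, a selector avoiding complementary pairs is precisely a partial assignment satisfying all clauses. For such a selector, commutativity lets me factor out $z\cdot\neg z$, which is $0^\alg{A}$ by Fact \ref{FLew_facts}(2), and $0^\alg{A}$ absorbs the rest of the product since $0^\alg{A}\cdot a=0^\alg{A}$; hence each disjunct, and so $\chi$, equals $0^\alg{A}$.

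I expect the main obstacle to lie entirely in this backward step: carefully justifying the full distributive expansion of the product of joins into a join of products of literals, and proving the combinatorial equivalence between unsatisfiability and ``every selector hits a complementary pair.'' Everything else is routine. Note that the reduction is uniform — it does not inspect $\alg{A}$ beyond its distinguished constants — and, pleasingly, the single term $\chi$ serves both problems at once: when $\varphi$ is unsatisfiable $\chi\approx 0$ rules out both relations, and when $\varphi$ is satisfiable $\chi$ attains $1^\alg{A}$, giving both. Degenerate inputs are handled by fixed images, mapping a CNF containing the empty clause to $x\cdot\neg x$ and the empty CNF to $1$.
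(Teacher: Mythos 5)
Your proposal is correct and takes essentially the same route as the paper: the paper derives Theorem \ref{th_SAT_SATPOS_NP-hard} from Corollary \ref{CF-NP-complete}, whose core is Lemma \ref{lm_SAT_Bool} showing that a $(\cdot,\vee)$-term is (positively) satisfiable in a nontrivial $\FLew$-algebra iff it is classically satisfiable, proved by the same distributive expansion of a CF-term into a disjunctive form and the observation that, under classical unsatisfiability, every resulting monomial contains a complementary pair $x\cdot\neg x$, which is $\FLew$-equivalent to $0$. Your selector-function expansion and the replay of Boolean assignments inside the subalgebra $\{0^\alg{A},1^\alg{A}\}\cong\stBool$ are exactly the paper's ingredients, packaged as an explicit reduction from CNF-SAT.
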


In Section \ref{section:generalOnSAT}, satisfiability and positive satisfiability 
are discussed with respect to their importance and related notions from logic.

\section{Glivenko equivalence and positive satisfiability}
\label{section:satpos}

The relation between 
positively satisfiable terms and negative tautologies of an $\FLew$-algebra,
given as Lemma \ref{lmSATPOSvsTAUTNeg}, is explored in this section. This connection
gives a clearer picture of various $\SATPOS$ problems for $\FLew$-algebras,
especially as regards their partial order by inclusion.

Further, a characterization is given of those $\FLew$-algebras $\alg{A}$
for which the set $\SATPOS(\alg{A})$ coincides with $\SAT(\stBool)$.
Any $\FLew$-algebra $\alg{A}$ with this property is nontrivial,
and by Lemma \ref{sat_incl}, for any such $\alg{A}$ 
also $\SAT(\alg{A})$ coincides with $\SAT(\stBool)$
and the set $\SATPOS\setminus\SAT(\alg{A})$ is empty.

One may consider $\Tm^\neg = \{\neg\varphi \mid \varphi \in \Tm\}$, and for
any logic $\logic{K}$ define
\begin{align*}
\logic{K}^\neg &= \logic{K} \cap \Tm^\neg\\
\TAUT^\neg(\alg{A}) &= \TAUT(\alg{A}) \cap \Tm^\neg
\end{align*}
By definition, $\neg\varphi\in \TAUT(\alg{A})$ iff $\neg\varphi\in \TAUT^\neg(\alg{A})$.

\begin{lemma}
\label{neg_subvalence}
Let $\alg{A}$, $\alg{B}$ be $\FLew$-algebras.
Then $$\SATPOS(\alg{A}) \subseteq \SATPOS(\alg{B})\mbox{ \ \ iff \ \ }\TAUT^\neg(\alg{B}) \subseteq \TAUT^\neg (\alg{A}).$$ 
\end{lemma}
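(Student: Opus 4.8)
The plan is to reduce the claim to Lemma \ref{lmSATPOSvsTAUTNeg} by passing to complements and exploiting the fact that inclusions reverse under complementation. First I would note that, since $\SATPOS(\alg{A})$ and $\oSATPOS(\alg{A})$ partition $\Fm$, the inclusion $\SATPOS(\alg{A}) \subseteq \SATPOS(\alg{B})$ is equivalent to the reverse inclusion of complements $\oSATPOS(\alg{B}) \subseteq \oSATPOS(\alg{A})$. This explains the switch of $\alg{A}$ and $\alg{B}$ between the two sides of the stated equivalence, and it reduces the lemma to showing $\oSATPOS(\alg{B}) \subseteq \oSATPOS(\alg{A})$ iff $\TAUT^\neg(\alg{B}) \subseteq \TAUT^\neg(\alg{A})$.

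The bridge between $\oSATPOS$ and $\TAUT^\neg$ is the syntactic map $\varphi \mapsto \neg\varphi$ (that is, $\varphi \mapsto \varphi \to 0$), which is a bijection from $\Fm$ onto $\Tm^\neg$. Lemma \ref{lmSATPOSvsTAUTNeg} says precisely that, for any $\FLew$-algebra $\alg{C}$, $\varphi \in \oSATPOS(\alg{C})$ iff $\neg\varphi \in \TAUT(\alg{C})$; and since $\neg\varphi$ already lies in $\Tm^\neg$, the latter is the same as $\neg\varphi \in \TAUT^\neg(\alg{C})$. Thus the map $\varphi \mapsto \neg\varphi$ carries $\oSATPOS(\alg{C})$ bijectively onto $\TAUT^\neg(\alg{C})$, for every $\alg{C}$. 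Consequently the inclusion relation between the $\oSATPOS$ sets of two algebras is mirrored faithfully by the inclusion relation between their $\TAUT^\neg$ sets.

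Carrying this out explicitly gives both directions at once. For the forward direction, assume $\SATPOS(\alg{A}) \subseteq \SATPOS(\alg{B})$, i.e.\ $\oSATPOS(\alg{B}) \subseteq \oSATPOS(\alg{A})$; taking any $\neg\varphi \in \TAUT^\neg(\alg{B})$, Lemma \ref{lmSATPOSvsTAUTNeg} gives $\varphi \in \oSATPOS(\alg{B}) \subseteq \oSATPOS(\alg{A})$, whence $\neg\varphi \in \TAUT^\neg(\alg{A})$. For the converse, assume $\TAUT^\neg(\alg{B}) \subseteq \TAUT^\neg(\alg{A})$; taking any $\varphi \in \oSATPOS(\alg{B})$, Lemma \ref{lmSATPOSvsTAUTNeg} gives $\neg\varphi \in \TAUT^\neg(\alg{B}) \subseteq \TAUT^\neg(\alg{A})$, hence $\varphi \in \oSATPOS(\alg{A})$, and complementing back yields $\SATPOS(\alg{A}) \subseteq \SATPOS(\alg{B})$. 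There is no real obstacle here: the only point that warrants a sentence of care is the observation that $\varphi \mapsto \neg\varphi$ is onto $\Tm^\neg$, so that no negative tautology is missed in the translation, and that $\neg\varphi \in \TAUT(\alg{C})$ coincides with $\neg\varphi \in \TAUT^\neg(\alg{C})$. The whole argument is a formal manipulation of Lemma \ref{lmSATPOSvsTAUTNeg} together with complementation.
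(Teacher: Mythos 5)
Your proof is correct and follows essentially the same route as the paper's: the paper's one-line proof likewise passes to complements (noting $\SATPOS(\alg{A}) \subseteq \SATPOS(\alg{B})$ iff $\oSATPOS(\alg{B}) \subseteq \oSATPOS(\alg{A})$) and applies Lemma \ref{lmSATPOSvsTAUTNeg} to translate this into $\TAUT^\neg(\alg{B}) \subseteq \TAUT^\neg(\alg{A})$. You have merely spelled out the details the paper leaves implicit, including the (correct) observation that every element of $\Tm^\neg$ has the form $\neg\varphi$, so the translation misses no negative tautology.
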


\begin{proof} Using Lemma  \ref{lmSATPOSvsTAUTNeg},
$\oSATPOS(\alg{B})\subseteq\oSATPOS(\alg{A})$ iff $\TAUT^\neg(\alg{B}) \subseteq \TAUT^\neg (\alg{A})$.
\end{proof}

\begin{corollary} 
\label{neg_equivalence}
$\SATPOS(\alg{A}) = \SATPOS(\alg{B})$ iff  $\TAUT^\neg(\alg{A}) = \TAUT^\neg(\alg{B})$.
\end{corollary}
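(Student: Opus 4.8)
The plan is to prove Corollary \ref{neg_equivalence} as an immediate consequence of Lemma \ref{neg_subvalence}, which has just been established. The corollary asserts the equivalence of two set equalities, $\SATPOS(\alg{A}) = \SATPOS(\alg{B})$ and $\TAUT^\neg(\alg{A}) = \TAUT^\neg(\alg{B})$; the natural route is to split each of these equalities into its two defining inclusions and apply the lemma to each direction separately.

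First I would observe that a set equality $X = Y$ holds iff both inclusions $X \subseteq Y$ and $Y \subseteq X$ hold. Applying this to the left-hand side of the corollary, $\SATPOS(\alg{A}) = \SATPOS(\alg{B})$ is equivalent to the conjunction of $\SATPOS(\alg{A}) \subseteq \SATPOS(\alg{B})$ and $\SATPOS(\alg{B}) \subseteq \SATPOS(\alg{A})$. By Lemma \ref{neg_subvalence}, the first of these holds iff $\TAUT^\neg(\alg{B}) \subseteq \TAUT^\neg(\alg{A})$, and (swapping the roles of $\alg{A}$ and $\alg{B}$ in the lemma, which is legitimate since the lemma is stated for arbitrary $\FLew$-algebras) the second holds iff $\TAUT^\neg(\alg{A}) \subseteq \TAUT^\neg(\alg{B})$.

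Putting these together, $\SATPOS(\alg{A}) = \SATPOS(\alg{B})$ is equivalent to the conjunction of $\TAUT^\neg(\alg{B}) \subseteq \TAUT^\neg(\alg{A})$ and $\TAUT^\neg(\alg{A}) \subseteq \TAUT^\neg(\alg{B})$, which is precisely $\TAUT^\neg(\alg{A}) = \TAUT^\neg(\alg{B})$. This completes the argument; it is a purely formal manipulation of inclusions once the lemma is in hand. I do not anticipate any genuine obstacle here, since the corollary is by design the symmetric (two-sided) version of an asymmetric lemma. The only point requiring mild care is to ensure the lemma is applied with the two algebras in both orders, which is unproblematic because Lemma \ref{neg_subvalence} is symmetric in its hypotheses on $\alg{A}$ and $\alg{B}$. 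One could present the proof in a single line by simply invoking antisymmetry of inclusion on both sides of the biconditional of Lemma \ref{neg_subvalence}.
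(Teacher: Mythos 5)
Your proposal is correct and matches the paper's intended argument: the paper states Corollary \ref{neg_equivalence} without explicit proof precisely because it follows, exactly as you describe, by applying Lemma \ref{neg_subvalence} to both inclusions (with the roles of $\alg{A}$ and $\alg{B}$ interchanged for the second) and combining via antisymmetry of $\subseteq$. No gaps.
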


The equality $\TAUT^\neg(\alg{A}) = \TAUT^\neg(\alg{B})$ is an instance of an equivalence relation on logics 
(here, the logic of $\alg{A}$ and the logic of $\alg{B}$), 
well known as \emph{Glivenko equivalence} 
(\cite{Glivenko:Equivalence}; see also \cite{Galatos-JKO:ResiduatedLattices}, Chapter 8).
For logics extending $\FLew$, this notion can be slightly extended as follows.

\begin{definition}
Let the logics $\logic{K}$ and $\logic{L}$ extend $\FLew$. 
\begin{itemize}
\item $\logic{K}$ is \emph{Glivenko subvalent} to $\logic{L}$ 
(write $\logic{K} \subseteq^G \logic{L}$) 
iff $\logic{K}^\neg \subseteq \logic{L}^\neg$.
\item $\logic{K}$ is \emph{Glivenko equivalent} to $\logic{L}$ 
(write $\logic{K} \sim^G \logic{L}$) 
iff $\logic{K} \subseteq^G \logic{L}$ 
and $\logic{L} \subseteq^G \logic{K}$. 
\end{itemize}
\end{definition}

Quite a lot is known about Glivenko equivalence for substructural logics (within and beyond the realm of $\FLew$);
we take our references mainly from Chapter 8 of \cite{Galatos-JKO:ResiduatedLattices}.
Each equivalence class is convex, with a least and a greatest element in the lattice order of logics extending $\FLew$
(it forms a complete sublattice). 
Since Glivenko subvalence is a preorder, it yields naturally a partial order $\preceq$
on the classes of Glivenko equivalence.
In this order, the top element is the class containing (only) the inconsistent logic, the only coatom is
the class containing classical logic, and the bottom element is the class containing $\FLew$.

Let us point out that while product logic is incomparable to {\L}ukasiewicz logic,
 the class containing {\L}ukasiewicz logic
(and also H\'ajek's $\logic{BL}$, as shown in \cite{Cignoli-Torrens:HajekBasicFuzzyLogic}),
is subvalent to the class containing product logic (and classical logic, as explained below).

One can argue that there are continuum many 
Glivenko equivalence classes within the lattice of subvarieties of $\aFLew$
as follows.
Using \cite{Galatos-JKO:ResiduatedLattices}, Theorem 8.7, 
any involutive logic extending $\FLew$ is the largest element of its Glivenko equivalence class;
this means that distinct involutive logics belong to distinct Glivenko equivalence classes within $\FLew$,
and the cardinality of the set of involutive extensions of $\FLew$ is 
a lower bound on the number of Glivenko equivalence classes.
By \cite{Galatos-JKO:ResiduatedLattices}, Theorem 9.45, there is a continuum of involutive extensions of $\FLew$.
As the lattice of logics extending $\FLew$ has itself the cardinality of the continuum, 
there are continuum many classes of Glivenko equivalence within $\FLew$. 

Lemma \ref{neg_subvalence} carries the order $\preceq$ on Glivenko equivalence classes 
onto $\SATPOS$ problems for $\FLew$-algebras.
The smallest set $T$ of $\FLew$-terms such that 
$T=\SATPOS(\alg{A})$ for some $\FLew$-algebra $\alg{A}$
is the empty set, for $\alg{A}$ trivial. 
Its only direct successor in this order is $\SAT(\stBool)$.
The greatest element in this order is the set of terms $\{\varphi \mid \neg\varphi \not\in \FLew\}$.
The above cardinality argument for Glivenko equivalence classes yields the following statement.

\begin{lemma}
There are continuum many pairwise distinct sets $\SATPOS(\alg{A})$, for $\alg{A}$ an $\FLew$-algebra.
\end{lemma}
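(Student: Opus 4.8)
The plan is to read the statement as a cardinality claim about Glivenko classes, transport it to $\SATPOS$ problems by Corollary \ref{neg_equivalence}, and then pin the count down from both sides. By that corollary, for $\FLew$-algebras $\alg{A}$ and $\alg{B}$ one has $\SATPOS(\alg{A}) = \SATPOS(\alg{B})$ iff $\TAUT^\neg(\alg{A}) = \TAUT^\neg(\alg{B})$, i.e.\ iff the logics $\logic{L}(\alg{A})$ and $\logic{L}(\alg{B})$ are Glivenko equivalent. Hence the assignment $\alg{A} \mapsto \SATPOS(\alg{A})$ factors through the Glivenko class of $\logic{L}(\alg{A})$ and separates distinct classes, so the number of pairwise distinct sets $\SATPOS(\alg{A})$ equals exactly the number of Glivenko classes represented by logics of the form $\logic{L}(\alg{A})$.

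The upper bound is immediate: $\Fm$ is countable, so there are at most $2^{\aleph_0}$ subsets of $\Fm$, and in particular at most continuum many distinct sets $\SATPOS(\alg{A})$.

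For the matching lower bound I would exhibit continuum many algebras whose $\SATPOS$ sets are pairwise distinct. I would start from the family of continuum many distinct involutive extensions of $\FLew$ guaranteed by Theorem 9.45 of \cite{Galatos-JKO:ResiduatedLattices}. By Theorem 8.7 of the same reference, each involutive logic is the greatest element of its own Glivenko class, so distinct involutive logics lie in distinct Glivenko classes. For each such logic $\logic{L}$ I would choose a single $\FLew$-algebra $\alg{A}$ with $\logic{L}(\alg{A}) = \logic{L}$ — for instance the countably generated free algebra of the variety that $\logic{L}$ axiomatizes, whose tautologies are precisely the identities of the whole variety. These continuum many algebras then realize continuum many distinct Glivenko classes, and by the reduction in the first step their sets $\SATPOS(\alg{A})$ are pairwise distinct.

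The genuine content sits entirely in the lower bound, and within it in the existence of continuum many Glivenko classes; I would simply invoke the two cited results rather than reprove them. The only remaining point needing a word of justification is that every logic extending $\FLew$ — in particular each involutive one in the family — arises as $\logic{L}(\alg{A})$ for some $\FLew$-algebra $\alg{A}$; this follows from algebraizability together with the fact that a term involves only finitely many of the countably many variables, so that validity in the free algebra on $\aleph_0$ generators coincides with validity throughout the corresponding variety. I expect this bookkeeping to be routine, with no real obstacle beyond invoking the structural results correctly.
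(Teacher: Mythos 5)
Your proof is correct and takes essentially the same route as the paper: equality of $\SATPOS$ sets is reduced to Glivenko equivalence of the algebras' logics via Corollary \ref{neg_equivalence}, and the continuum lower bound comes from the continuum many involutive extensions of $\FLew$ (Theorem 9.45 of \cite{Galatos-JKO:ResiduatedLattices}), each maximal in its own Glivenko class (Theorem 8.7 there), with a trivial continuum upper bound. Your explicit realization of each such logic as $\logic{L}(\alg{A})$ via the countably generated free algebra of the corresponding variety only spells out a step the paper leaves implicit.
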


We further characterize the class of $\FLew$-algebras with a classical $\SATPOS$ problem,
relying on the Glivenko equivalence class of classical logic within $\FLew$.

\begin{theorem}
\label{not_WCon_SAT}
Let $\alg{A}$ be an $\FLew$-algebra and $\alg{A}\not\in\aWCon$. 
Then $\SATPOS\setminus\SAT(\alg{A})$ is nonempty.
\end{theorem}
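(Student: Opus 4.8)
The plan is to produce a single explicit witness term for $\SATPOS\setminus\SAT(\alg{A})$. The natural candidate, dictated by the defining identity of weak contractivity, is the term $\varphi = x\land\neg x$. Before working with it I would record that $\alg{A}$ must be nontrivial: the one-element algebra satisfies every identity and so lies in $\aWCon$, whence $\alg{A}\not\in\aWCon$ forces $0^\alg{A}\neq 1^\alg{A}$.

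Next I would establish $\varphi\in\SATPOS(\alg{A})$. Since $\aWCon$ is delimited within $\FLew$ by the identity $x\land\neg x\approx 0$ and $\alg{A}\not\in\aWCon$, this identity fails in $\alg{A}$: there is $a\in A$ with $a\land\neg a\neq 0$. As $0$ is the least element of the lattice, $a\land\neg a > 0$, so the assignment $x\mapsto a$ witnesses positive satisfiability of $\varphi$.

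It remains to show $\varphi\not\in\SAT(\alg{A})$, i.e.\ that no assignment sends $x\land\neg x$ to $1$. Suppose $b\land\neg b = 1$ for some $b\in A$. From $b\land\neg b\leq b$ we get $1\leq b$, hence $b=1$, and then $\neg b=\neg 1=0$ by Fact \ref{FLew_facts}(2) (taking $x=1$). But $b\land\neg b\leq\neg b$ now yields $1=b\land\neg b\leq\neg b=0$, so $1\leq 0$, contradicting nontriviality. Hence $\varphi$ is not fully satisfiable, and $\varphi\in\SATPOS\setminus\SAT(\alg{A})$, as required.

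I do not anticipate a genuine obstacle here: the entire argument hinges on selecting the witness $x\land\neg x$, after which both membership claims reduce directly to the defining identity of $\aWCon$ and the elementary computation $\neg 1=0$. The only point meriting a moment's care is the observation that $\alg{A}\not\in\aWCon$ already entails nontriviality, which is precisely what makes $0$ differ from $1$ and thereby blocks full satisfiability.
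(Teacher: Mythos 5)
Your proof is correct and follows essentially the same route as the paper: the witness is the same term $x\wedge\neg x$, positive satisfiability comes from the failure of the defining identity $x\wedge\neg x\approx 0$ of $\aWCon$, and unsatisfiability at $1$ is ruled out by an elementary computation in a nontrivial algebra. The only cosmetic difference is that you derive $\neg 1=0$ via Fact~\ref{FLew_facts}(2), where the paper invokes Fact~\ref{FLew_facts}(1) ($\neg x=1$ iff $x=0$) directly.
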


\begin{proof}
If $\alg{A}$ is an $\FLew$-algebra but not a $\WCon$-algebra, then it is nontrivial and,
by assumption, the identity $x\wedge \neg x \approx 0$ is not valid in $\alg{A}$.
This means that, for some assignment $e_\alg{A}$ in $\alg{A}$, one has 
$e_\alg{A}(x\wedge \neg x) > 0^\alg{A}$. 
Hence the term $x\wedge \neg x$ is in  $\SATPOS(\alg{A})$.

On the other hand, using Fact \ref{FLew_facts} (1), the term $x\wedge \neg x$
is not in $\SAT(\alg{A})$ for any nontrivial $\FLew$-algebra $\alg{A}$.
Hence $x\wedge \neg x \in \SATPOS\setminus\SAT(\alg{A})$.
\end{proof}

In particular, $\SATPOS\setminus \SAT(\alg{A})$ is nonempty
for all involutive $\FLew$-algebras that are not Boolean algebras,
using Fact \ref{fact_inv_wcon}. Examples include the Chang algebra $\alg{K}_2$, which has classical SAT,
or $\mathrmL_3$, which has nonclassical SAT.

Below we derive the converse of Theorem \ref{not_WCon_SAT}: 
on any nontrivial $\WCon$-algebra $\alg{A}$, 
one has $\SAT(\alg{A}) = \SATPOS(\alg{A}) = \SAT(\stBool)$.
This follows from Glivenko theorem for $\WCon$ with respect to classical logic,
presented in \cite{Cignoli-Torrens:GlivenkoBCK} as Corollary 5.3.
Glivenko theorem (\cite{Glivenko:Equivalence}) provides a double-negation interpretation 
of classical logic in intuitionistic logic. 
The paper \cite{Cignoli-Torrens:GlivenkoBCK} points out that the same reasons that support the theorem 
for intuitionistic logic support it also for $\WCon$.

\begin{theorem}{\rm (Glivenko theorem for $\WCon$, \cite{Cignoli-Torrens:GlivenkoBCK})}
\label{glivenko_theorem_wcon}
For any $\FLew$-term $\varphi$,
one has $\varphi \in \logic{CL}$ iff $\neg\neg\varphi \in \WCon$.
\end{theorem}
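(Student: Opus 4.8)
The plan is to prove the Glivenko theorem for $\WCon$ by establishing both implications separately, treating the forward direction as routine and the backward direction as the substantive one. For the forward implication, suppose $\varphi\in\CL$. Since the two-element Boolean algebra $\stBool$ is a subalgebra of every nontrivial $\FLew$-algebra and classical logic is the logic of $\stBool$, the idea is to show that $\neg\neg\varphi$ is already a tautology of $\stBool$ (because $\neg\neg$ acts as the identity on Boolean values) and then to lift this validity to all $\WCon$-algebras. The obstacle here is that being valid in $\stBool$ does not immediately give validity in an arbitrary $\WCon$-algebra, so one must argue that for any $\WCon$-algebra $\alg{A}$ and any assignment, the double negation $\neg\neg\varphi$ evaluates to $1^\alg{A}$ whenever $\varphi$ is a classical tautology; this is where the weakly contractive identity $x\wedge\neg x\approx 0$ (equivalently $\neg(x^2)\approx\neg x$) should do the work, ensuring that the double-negation images of elements behave Booleanly.

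For the backward implication, suppose $\neg\neg\varphi\in\WCon$, meaning $\neg\neg\varphi$ is valid in all $\WCon$-algebras, and I must conclude $\varphi\in\CL$. The strategy is contrapositive: if $\varphi\notin\CL$, then there is a Boolean assignment in $\stBool$ sending $\varphi$ to $0$. Since $\stBool$ is itself a (nontrivial) $\WCon$-algebra, and on $\stBool$ one has $\neg\neg\varphi=\varphi$, the same assignment sends $\neg\neg\varphi$ to $0$, contradicting $\neg\neg\varphi\in\WCon\subseteq\logic{L}(\stBool)$. This direction is clean precisely because $\stBool$ lies inside the class of $\WCon$-algebras and double negation is trivial there.

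The core technical content, and the step I expect to be the main obstacle, is the forward direction's passage from classical validity to $\WCon$-validity of the double negation. The natural route is to use the general Glivenko machinery already invoked in the surrounding text: the paper notes (via Corollary~5.3 of \cite{Cignoli-Torrens:GlivenkoBCK} and Theorem~8.7 of \cite{Galatos-JKO:ResiduatedLattices}) that classical logic sits at the top of its Glivenko equivalence class, and that $\WCon$ is Glivenko equivalent to $\CL$. Concretely, I would verify that the negative-tautology fragments satisfy $\TAUT^\neg(\alg{A})=\TAUT^\neg(\stBool)$ for every nontrivial $\WCon$-algebra $\alg{A}$, which by Corollary~\ref{neg_equivalence} is exactly the statement that $\SATPOS(\alg{A})=\SATPOS(\stBool)=\SAT(\stBool)$. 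The double-negation closure $\neg\neg\varphi\mapsto 1$ translates precisely into the statement that $\neg\varphi$ is positively unsatisfiable, i.e.\ $\neg\varphi\in\oSATPOS(\alg{A})$, which by Lemma~\ref{lmSATPOSvsTAUTNeg} is equivalent to $\neg\neg\varphi\in\TAUT(\alg{A})$; reconciling these equivalences for all $\WCon$-algebras simultaneously is the crux.

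Since the statement is attributed to \cite{Cignoli-Torrens:GlivenkoBCK}, the most economical proof simply cites that reference for the substantive forward direction and supplies the short $\stBool$-based argument for the converse; the remark in the text that ``the same reasons that support the theorem for intuitionistic logic support it also for $\WCon$'' signals that the authors intend to invoke the external Glivenko result rather than reprove it, so my plan would be to state the reduction to the known Glivenko equivalence of $\WCon$ and $\CL$ explicitly and defer the double-negation elimination lemma to that source.
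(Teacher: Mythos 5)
Your proposal is correct and takes essentially the same route as the paper: the paper gives no internal proof of Theorem \ref{glivenko_theorem_wcon}, importing the substantive content from \cite{Cignoli-Torrens:GlivenkoBCK} (Corollary 5.3) exactly as you ultimately do for the forward direction, and your converse argument via $\stBool$ (a nontrivial $\WCon$-algebra on which $\neg\neg$ is the identity) is sound and is the same easy observation the paper leaves implicit. One caution about your third paragraph: ``verifying'' $\TAUT^\neg(\alg{A})=\TAUT^\neg(\stBool)$ for all nontrivial $\WCon$-algebras by way of Corollary \ref{neg_equivalence} and the classicality of $\SATPOS$ would be circular within this paper, since Corollary \ref{cr:Glivenko_equivalent} and Theorem \ref{theorem:char_satpos} are derived \emph{from} Theorem \ref{glivenko_theorem_wcon}; so your closing fallback to the external citation is not merely economical but necessary, unless you reprove directly the double-negation (regular-element) construction of \cite{Cignoli-Torrens:GlivenkoBCK}, where the weakly contractive identity forces the algebra of regular elements to be Boolean.
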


It is easy to see that if a consistent $\logic{L}$ extends $\FLew$, then a Glivenko theorem holds for $\logic{L}$
with respect to classical logic iff $\logic{L}$ is Glivenko equivalent to classical logic:
let $\varphi$ be an $\FLew$-term.
For the left-to-right implication, any consistent $\FLew$-extension is Glivenko subvalent to classical logic,
and on the other hand $\neg\varphi \in \logic{CL}$ entails 
$\neg\neg\neg\varphi \in \logic{L}$ by assumption, hence 
$\neg\varphi\in\logic{L}$ by Fact \ref{FLew_facts} (3).
For the right-to-left implication, $\neg\neg\varphi\in \logic{L}$ clearly entails $\varphi\in \logic{CL}$,
and on the other hand $\varphi \in \logic{CL}$ gives $\neg(\neg \varphi) \in \logic{CL}$ and
$\neg(\neg \varphi) \in \logic{L}$ by assumption.

\begin{corollary}
\label{cr:Glivenko_equivalent}
$\WCon$ is Glivenko equivalent to $\CL$.
\end{corollary}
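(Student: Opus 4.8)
The plan is to unfold $\sim^G$ into its two defining Glivenko subvalences and discharge each, reducing the nontrivial one to Theorem~\ref{glivenko_theorem_wcon}. By definition, $\WCon \sim^G \CL$ asserts both $\WCon \subseteq^G \CL$ and $\CL \subseteq^G \WCon$, that is, the equality of negative fragments $\WCon^\neg = \CL^\neg$; so I would prove the two inclusions in turn.

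The inclusion $\WCon \subseteq^G \CL$ is the routine direction. Boolean algebras are weakly contractive (Fact~\ref{fact_inv_wcon}), so $\stBool \in \aWCon$ and every tautology of all $\WCon$-algebras is in particular valid in $\stBool$; hence $\WCon \subseteq \CL$ as logics, which gives $\WCon^\neg \subseteq \CL^\neg$ at once. This step also certifies that $\WCon$ is a consistent logic extending $\FLew$: it extends $\FLew$ because $\aWCon$ is a subvariety of $\aFLew$, and it is consistent because $\WCon \subseteq \CL$ and $0 \notin \CL$.

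The substantive direction is $\CL \subseteq^G \WCon$, and here the Glivenko theorem does the work. I would take an arbitrary negative classical tautology $\neg\psi \in \CL$ and apply Theorem~\ref{glivenko_theorem_wcon} to the term $\neg\psi$ (rather than to $\psi$): this yields $\neg\psi \in \CL$ iff $\neg\neg\neg\psi \in \WCon$, whence $\neg\neg\neg\psi \in \WCon$. Now the identity $\neg x \approx \neg\neg\neg x$ of Fact~\ref{FLew_facts}(3) holds in every $\FLew$-algebra, so $\neg\psi$ and $\neg\neg\neg\psi$ take the same value under every assignment in every $\WCon$-algebra; since the latter is a $\WCon$-tautology, so is the former, giving $\neg\psi \in \WCon$. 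Thus $\CL^\neg \subseteq \WCon^\neg$, and combined with the first inclusion this yields $\WCon^\neg = \CL^\neg$, i.e.\ $\WCon \sim^G \CL$.

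I do not expect a genuine obstacle, since Theorem~\ref{glivenko_theorem_wcon} carries the real content; the only point requiring care is to instantiate that theorem at $\neg\psi$ and then collapse the resulting triple negation through Fact~\ref{FLew_facts}(3). A shorter alternative is simply to invoke the general biconditional established in the paragraph preceding the corollary, whose two hypotheses---that $\WCon$ is a consistent extension of $\FLew$ and that it satisfies a Glivenko theorem with respect to classical logic (Theorem~\ref{glivenko_theorem_wcon})---have both just been verified.
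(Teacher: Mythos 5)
Your proof is correct and follows essentially the same route as the paper: the paper derives the corollary from the biconditional established in the paragraph preceding it, whose substantive direction is exactly your argument---instantiate Theorem~\ref{glivenko_theorem_wcon} at $\neg\psi$ and collapse the triple negation via Fact~\ref{FLew_facts}(3). Your explicit verification of the easy subvalence $\WCon \subseteq^G \CL$ (via $\stBool \in \aWCon$) just spells out what the paper compresses into the remark that any consistent $\FLew$-extension is Glivenko subvalent to classical logic.
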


\begin{theorem}
Among consistent logics extending $\FLew$, $\WCon$ is the weakest logic that is Glivenko equivalent to classical logic.
\end{theorem}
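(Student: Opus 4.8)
The plan is to check two things: that $\WCon$ itself lies in the collection of consistent logics extending $\FLew$ that are Glivenko equivalent to $\CL$, and that it is contained in every other member of that collection, so that it is the weakest (least, in the inclusion order) among them. Membership I would dispatch immediately: $\WCon$ extends $\FLew$ by definition, it is consistent since every Boolean algebra is a $\WCon$-algebra and $\stBool\not\models 0$, and it is Glivenko equivalent to $\CL$ by Corollary \ref{cr:Glivenko_equivalent}. The substance of the theorem is therefore minimality, and that is where I would concentrate.

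For minimality I would take an arbitrary consistent logic $\logic{L}$ extending $\FLew$ with $\logic{L}\sim^G\CL$ and aim to show $\WCon\subseteq\logic{L}$. The crucial observation I would exploit is that the single defining axiom of $\WCon$ over $\FLew$, namely $x\wedge\neg x\approx 0$, can be expressed as a negative tautology. Concretely, by Fact \ref{FLew_facts}(1) an $\FLew$-algebra validates the term $\neg(x\wedge\neg x)=(x\wedge\neg x)\to 0$ iff it satisfies $x\wedge\neg x\approx 0$; and since $\neg(x\wedge\neg x)$ is a classical tautology of the form $\neg\psi$, it lies in $\CL^\neg$. Glivenko equivalence gives $\CL^\neg=\logic{L}^\neg\subseteq\logic{L}$, whence $\neg(x\wedge\neg x)\in\logic{L}$.

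From there I would conclude as follows. Having $\neg(x\wedge\neg x)$ as a theorem of $\logic{L}$ means it is valid in every $\logic{L}$-algebra, so by Fact \ref{FLew_facts}(1) again every $\logic{L}$-algebra satisfies $x\wedge\neg x\approx 0$ and is thus a $\WCon$-algebra. Since the variety of $\logic{L}$-algebras is then contained in $\aWCon$, the dual isomorphism between logics extending $\FLew$ and subvarieties of $\aFLew$ yields $\WCon\subseteq\logic{L}$, as required. The only slightly delicate point I anticipate is the first move of the second paragraph — recognizing that the lone $\WCon$-axiom can be written in the shape $\neg\psi$, which is exactly the form of theorem that transfers across Glivenko equivalence; once that is in hand, the remainder is a routine application of Fact \ref{FLew_facts}(1), Corollary \ref{cr:Glivenko_equivalent}, and algebraizability.
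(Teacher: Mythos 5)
Your proof is correct and takes essentially the same route as the paper's: membership in the class via Corollary \ref{cr:Glivenko_equivalent}, and minimality via the key observation that $\neg(x\wedge\neg x)$ is a negative tautology of $\CL$, hence a theorem of any logic Glivenko equivalent to $\CL$, which forces that logic to extend $\WCon$. You merely spell out the final algebraic step (Fact \ref{FLew_facts}(1) plus algebraizability) that the paper leaves implicit in ``this implies that $\logic{L}$ extends $\WCon$.''
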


\begin{proof}
By Corollary \ref{cr:Glivenko_equivalent}, 
$\WCon$ (and any of its consistent extensions) belongs to the Glivenko equivalence class of classical logic.
On the other hand, if $\logic{L}$ is Glivenko equivalent to classical logic,
then $\neg (x \wedge \neg x) \in \logic{L}$ (as it is a negative tautology of $\CL$). This implies that $\logic{L}$ extends WCon.
\end{proof}

Thus, while tautologousness for $\WCon$-algebras presents a rich structure 
(note that the lattice of superintuitionistic logics has the cardinality of the continuum),
the satisfiability and the positive satisfiability problem for these algebras 
are always identical to the classical satisfiability problem.  
This accounts for its computational complexity: it is $\NP$-complete.

Summing up, we have shown:

\begin{theorem} 
\label{theorem:char_satpos}
Let $\alg{A}$ be a nontrivial\/ $\FLew$-algebra. The following are equivalent:
\begin{itemize}
\item[(1)] $\alg{A}$ is a $\WCon$-algebra;
\item[(2)] $x \wedge \neg x \in \oSATPOS(\alg{A})$;
\item[(3)] $\SATPOS(\alg{A}) = \SAT(\stBool)$;
\item[(4)] $\SATPOS(\alg{A}) =  \SAT(\alg{A})$. 
\end{itemize}
\end{theorem}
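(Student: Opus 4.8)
The plan is to prove the four conditions equivalent via the cycle (1) $\Leftrightarrow$ (2), (1) $\Rightarrow$ (3), (3) $\Rightarrow$ (4), (4) $\Rightarrow$ (1), exploiting that three of these steps are essentially bookkeeping over the definitions and the earlier lemmas, while the single substantial step is (1) $\Rightarrow$ (3), where the Glivenko theorem for $\WCon$ does the real work.

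First I would dispatch the easy implications. The equivalence (1) $\Leftrightarrow$ (2) is immediate from the definition of $\oSATPOS$: since $\varphi \in \oSATPOS(\alg{A})$ means $\alg{A} \models \forall \bar x(\varphi(\bar x) \approx 0)$, the membership $x \wedge \neg x \in \oSATPOS(\alg{A})$ asserts exactly that $\alg{A}$ validates the identity $x \wedge \neg x \approx 0$, one of the identities delimiting $\aWCon$. For (3) $\Rightarrow$ (4) I would use a sandwich: assuming $\SATPOS(\alg{A}) = \SAT(\stBool)$ and invoking Lemma \ref{sat_incl} with $\alg{A}$ nontrivial gives $\SAT(\stBool) \subseteq \SAT(\alg{A}) \subseteq \SATPOS(\alg{A}) = \SAT(\stBool)$, so all three sets coincide and in particular $\SAT(\alg{A}) = \SATPOS(\alg{A})$. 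Finally (4) $\Rightarrow$ (1) is just the contrapositive of Theorem \ref{not_WCon_SAT}: if $\alg{A} \not\in \aWCon$ then $\SATPOS \setminus \SAT(\alg{A})$ is nonempty, so $\SAT(\alg{A}) \neq \SATPOS(\alg{A})$.

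The main step is (1) $\Rightarrow$ (3). The inclusion $\SAT(\stBool) \subseteq \SATPOS(\alg{A})$ is again Lemma \ref{sat_incl}, so the content is the reverse inclusion $\SATPOS(\alg{A}) \subseteq \SAT(\stBool)$, which I would prove by contraposition. Suppose $\varphi \not\in \SAT(\stBool)$; then $\varphi$ is classically unsatisfiable, so $\neg\varphi \in \CL$. Applying Theorem \ref{glivenko_theorem_wcon} to $\neg\varphi$ yields $\neg\neg\neg\varphi \in \WCon$, and since $\neg\neg\neg\varphi = \neg\varphi$ by Fact \ref{FLew_facts}(3) we obtain $\neg\varphi \in \WCon \subseteq \TAUT(\alg{A})$, the last inclusion because $\alg{A}$ is a $\WCon$-algebra. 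Lemma \ref{lmSATPOSvsTAUTNeg} then gives $\varphi \in \oSATPOS(\alg{A})$, i.e.\ $\varphi \not\in \SATPOS(\alg{A})$, as required. The crux --- and the only place a nontrivial theorem is consumed --- is this transfer of classical negative tautologies into $\TAUT(\alg{A})$ via the Glivenko theorem; the only subtleties I anticipate are keeping the triple-negation identity straight and carrying the nontriviality hypothesis, on which both Lemma \ref{sat_incl} and Theorem \ref{not_WCon_SAT} depend. As an alternative framing one could note that $\TAUT(\alg{A})$ lies between $\WCon$ and $\CL$ (the upper bound because $\stBool$ embeds into every nontrivial $\FLew$-algebra), hence is Glivenko equivalent to $\CL$ by convexity of the Glivenko class, and then conclude $\SATPOS(\alg{A}) = \SATPOS(\stBool) = \SAT(\stBool)$ directly from Corollary \ref{neg_equivalence}.
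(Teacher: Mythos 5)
Your proof is correct and follows essentially the paper's own route: the paper assembles the theorem from exactly the ingredients you use --- the definitional equivalence (1)$\Leftrightarrow$(2) via the identity $x \wedge \neg x \approx 0$, the Glivenko theorem for $\WCon$ (Theorem \ref{glivenko_theorem_wcon}) combined with Lemma \ref{lmSATPOSvsTAUTNeg} for (1)$\Rightarrow$(3), the sandwich from Lemma \ref{sat_incl} for (3)$\Rightarrow$(4), and the contrapositive of Theorem \ref{not_WCon_SAT} for (4)$\Rightarrow$(1). Your explicit contraposition argument for (1)$\Rightarrow$(3), including the triple-negation step via Fact \ref{FLew_facts}(3), is precisely the unpacking the paper leaves implicit, and your ``alternative framing'' via Corollary \ref{neg_equivalence} is the paper's own Glivenko-equivalence formulation.
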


\section{Full satisfiability}
\label{section:fulsat}

This section focuses on characterizing classical full satisfiability in a $\FLew$-algebra
by the existence of a two-element congruence in that algebra (or equivalently, 
of a homomorphism onto $\stBool$). 
Moreover, a classically unsatisfiable term is proposed that occurs
in $\SAT(\alg{A})$ for every $\FLew$-chain $\alg{A}$ such that $\SAT(\alg{A})$ is not classical.
We also show that the $\SAT$ problem for the standard MV-algebra contains
the $\SAT$ problem of each nontrivial BL-algebra. 

The following theorem shows that in order to find out whether an $\FLew$-chain 
(i.e., an $\MTL$-chain) has classical satisfiability, it is enough to know whether 
it satisfies a single term in one variable.

\begin{theorem} \label{thChainsClassicalSAT}
For a nontrivial $\FLew$-chain $\alg{A}$, the following are
equivalent:
\begin{itemize}
\item[(1)] $\SAT(\alg{A})$ is classical;
\item[(2)] $(\neg x \to x) \wedge \neg (x^3) \in \oSAT(\alg{A})$;
\item[(3)] $\neg$ in $\alg{A}$ has no fixed point and the
      set $\{x \in A \mid x^2 > 0^\alg{A} \}$ is closed under multiplication;
\item[(4)] there is a homomorphism $h: \alg{A} \rightarrow \stBool$.\footnote{Such a homomorphism is surjective, as it preserves the two constants.}
\end{itemize}
\end{theorem}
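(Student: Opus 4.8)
The plan is to establish the cycle $(1)\Rightarrow(2)\Rightarrow(3)\Rightarrow(4)\Rightarrow(1)$. Throughout I will use an identity valid in any $\FLew$-chain: the set $S=\{x\in A\mid x^2>0^\alg{A}\}$ occurring in (3) equals $\{x\mid \neg x< x\}$, since $x^2=0^\alg{A}$ iff $x\cdot x\le 0^\alg{A}$ iff $x\le\neg x$ by residuation, so in a chain $x^2>0^\alg{A}$ iff $x>\neg x$. As multiplication is monotone, $S$ is automatically upward closed, contains $1^\alg{A}$, and excludes $0^\alg{A}$. I will also use that the witness term $\varphi_0=(\neg x\to x)\wedge\neg(x^3)$ takes value $1^\alg{A}$ under $x\mapsto a$ exactly when $\neg a\le a$ and $a^3=0^\alg{A}$ (Fact \ref{FLew_facts}(1)); moreover $\varphi_0$ is classically unsatisfiable, so $\varphi_0\in\oSAT(\stBool)$. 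This makes $(1)\Rightarrow(2)$ immediate: classical $\SAT$ forces $\varphi_0\in\oSAT(\alg{A})$.

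For $(4)\Rightarrow(1)$ I will pull back along the homomorphism. Given $h:\alg{A}\to\stBool$ and a term $\psi$ satisfied in $\alg{A}$ by an assignment $e$, the composite $h\circ e$ is an assignment in $\stBool$ with $\psi^{\stBool}(h\circ e)=h(\psi^\alg{A}(e))=h(1^\alg{A})=1$; hence $\SAT(\alg{A})\subseteq\SAT(\stBool)$, and the reverse inclusion from Lemma \ref{sat_incl} yields classical satisfiability. For $(3)\Rightarrow(4)$ the idea is that under (3) the set $S$ is a filter whose two-class quotient is $\stBool$. Indeed $S$ is upward closed and, by hypothesis, closed under $\cdot$; I then let $h$ send $S$ to $1$ and its complement to $0$ and verify it is a homomorphism. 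Preservation of $\wedge,\vee,\cdot$ reduces, in a chain, to $S$ being an up-set closed under multiplication (using $x\cdot y\le x\wedge y$). The one substantial point is preservation of $\to$, which amounts to: $x\notin S$ implies $\neg x\in S$; here the absence of a fixed point enters, since $x\notin S$ gives $x\le\neg x$, hence $x<\neg x$, and then $\neg x>\neg\neg x$ (else $\neg x$ would be a fixed point), i.e. $\neg x\in S$.

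The main obstacle is $(2)\Rightarrow(3)$, which I prove contrapositively: assuming (3) fails I will produce an assignment satisfying $\varphi_0$, so $\varphi_0\in\SAT(\alg{A})$ and (2) fails. If $\neg$ has a fixed point $a=\neg a$, then $\neg a\le a$ and $a^2=a\cdot\neg a\le 0^\alg{A}$ by Fact \ref{FLew_facts}(2), so $a^3=0^\alg{A}$ and $x\mapsto a$ satisfies $\varphi_0$. Otherwise $\neg$ has no fixed point but $S$ is not closed under $\cdot$: choosing $a\le b$ in $S$ (without loss of generality) with $a\cdot b\notin S$, we get $a^2>0^\alg{A}$ while $a^4=(a^2)^2\le(ab)^2=0^\alg{A}$. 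If already $a^3=0^\alg{A}$, then $x\mapsto a$ works.

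The delicate case is $a^3>0^\alg{A}$, where the correct witness is $q=\neg(a^2)$. From $a^2\cdot a^2=0^\alg{A}$ and the absence of a fixed point one gets $a^2<q$ and $q\in S$, so $q^2>0^\alg{A}$; moreover $q<a$, because $q\ge a$ would give $a^3\le a^2 q=a^2\cdot\neg(a^2)=0^\alg{A}$, contradicting $a^3>0^\alg{A}$; and finally $q^3\le a q^2\le a^2 q=0^\alg{A}$, using $q\le a$ twice together with $a^2\cdot\neg(a^2)=0^\alg{A}$. Thus $q^2>0^\alg{A}$ and $q^3=0^\alg{A}$, so $x\mapsto q$ satisfies $\varphi_0$. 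Locating and verifying the witness $\neg(a^2)$ in this last case is the crux of the whole proof; the rest is bookkeeping with monotonicity, residuation, and the elementary facts of Fact \ref{FLew_facts}.
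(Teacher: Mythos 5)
Your proposal is correct and takes essentially the same route as the paper's proof: the same cycle of implications, the same contrapositive treatment of $(2)\Rightarrow(3)$ with the identical witness $\neg(a^2)$ in the case $a^3>0^\alg{A}$, and the same two-block split $A_1=\{x\mid x^2>0^\alg{A}\}$ versus its complement for $(3)\Rightarrow(4)$, where the paper packages the verification as ``$A_1$ is a congruence filter and $A_0$ a single congruence class'' while you check the homomorphism directly---a cosmetic difference. One minor imprecision worth fixing: preservation of $\to$ does not reduce solely to ``$x\notin S$ implies $\neg x\in S$'', since the case $h(x)=1$, $h(y)=0$ also needs the multiplicative closure of $S$ (if $x\to y\in S$, then $x\cdot(x\to y)\in S$ together with $x\cdot(x\to y)\le y$ would force $y\in S$), though this is a one-line piece of the bookkeeping you defer.
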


\begin{proof}
Clearly, (4) implies (1) and (1) implies (2).
Let us prove (2) implies (3) by proving that if (3) is not
satisfied, then $(\neg x \to x) \wedge \neg (x^3) \in \SAT(\alg{A})$.

Let $\phi(x) = (\neg x \to x) \wedge \neg (x^3)$.
If the negation has a fixed point $a$,
then $(\neg a \to a) = 1^\alg{A}$ and $a^2 = 0^\alg{A}$.
Hence, we have $\phi(a)=1^\alg{A}$ and $\phi \in \SAT(\alg{A})$.
If the negation has no fixed point and the set
$\{x \mid x^2 > 0^\alg{A} \}$ is not closed under multiplication,
then there are $a, b \in A$ such that $a \le b$,
$a^2 > 0^\alg{A}$, $b^2 > 0^\alg{A}$, and $(a \cdot b)^2 = 0^\alg{A}$.
This implies $a^4 = 0^\alg{A}$ and $\neg a < a$.
If $a^3=0^\alg{A}$, we have $\phi(a)=1^\alg{A}$ and $\phi \in \SAT(A)$.
If $a^3 > 0^\alg{A}$, then $\neg a \ge a^2$ is not satisfied and
also $\neg (a^2) \ge a$ is not satisfied. Since $A$ is a chain,
we have $\neg a < a^2$ and $\neg (a^2) < a$.
Moreover, since $a^4 = 0^\alg{A}$, we have $a^2 \le \neg (a^2)$.
Hence,
$$
\neg a < a^2 \le \neg (a^2) < a
$$
and, clearly,
$$
\neg a < a^2 \le \neg \neg (a^2) \le \neg (a^2) < a \ .
$$
Let $c = \neg (a^2)$. Since $c \le a$, we have
$c^2 \le a^2 \le \neg \neg (a^2) = \neg c \le c$.
In particular, $c^2 \le \neg c \le c$,
which implies $\phi(c)=1^\alg{A}$ and $\phi \in \SAT(A)$.

It remains to prove (3) implies (4). Assume an $\FLew$-chain
$\alg{A}$ satisfying (3) is given.
Define
\begin{align*}
A_0 &= \{x \mid x^2 = 0^\alg{A} \},\\
A_1 &= \{x \mid x^2 > 0^\alg{A} \}.
\end{align*}
Note that $0^\alg{A} \in A_0$ and $1^\alg{A} \in A_1$, so the sets
are both nonempty. Clearly, the sets are disjoint and 
$A = A_0 \cup A_1$. Define a map $h$ by $h(A_0) = 0$ and $h(A_1) = 1$
and let us verify that $h$ is a homomorphism of $\alg{A}$ to $\stBool$.

The set $A_1$ is an upper set. Moreover, (3) implies that $A_1$
is a congruence filter and, hence,
a class of a congruence on $\alg{A}$ containing $1^\alg{A}$.
Let $a \in A_0$. Clearly, $a \le \neg a$ and $\neg (\neg a) \le \neg a$.
Since $\neg$
has no fixed point in $A$, we have $(\neg a)^2 > 0^\alg{A}$, so $\neg a$
is congruent to $1^\alg{A}$. Hence, $a$ and $0=a \cdot \neg a$ are
in the same class of the congruence.
It follows that all elements of $A_0$ are
in the same class, so $A_0$ is a class of the congruence.
Since $A_0$ and $A_1$ are classes of a congruence, $h$ is
a homomorphism.
\end{proof}

Some $\FLew$-algebras can be decomposed into two congruence classes:
for example, the Chang algebra $\alg{K}_2$ can be decomposed into infinitesimals and
co-infinitesimals. There are also non-chains with this property, 
for example, the product $K_2 \times \alg{A}$ or $\stBool \times \alg{A}$
for a nontrivial $\FLew$-algebra $\alg{A}$. 
Any such algebra has classical satisfiability,
as the homomorphism onto $\stBool$ preserves satisfiability of terms.
In Corollary \ref{corSATvsHom} below, we prove that
this property \emph{characterizes} $\FLew$-algebras with classical satisfiability.
We first address finitely generated $\FLew$-algebras.

\begin{theorem} \label{thLocalHomomorphism}
Let $\alg{A}$ be a nontrivial finitely generated $\FLew$-algebra
with generators $a_1, \ldots, a_k \in A$.
The following are equivalent:
\begin{itemize}
\item[(1)] $\SAT(\alg{A})$ is classical;
\item[(2)] for every $\FLew$-term $\phi(x_1,...,x_k)$, if
           $\phi(a_1,...,a_k)=1^\alg{A}$, then $\phi \in \SAT(\stBool)$;
\item[(3)] there is a homomorphism $h: \alg{A} \rightarrow \stBool$.
\end{itemize}
\end{theorem}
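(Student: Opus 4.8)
The plan is to prove the cycle $(3)\Rightarrow(1)\Rightarrow(2)\Rightarrow(3)$, with the last implication carrying essentially all the weight. The implication $(3)\Rightarrow(1)$ is immediate: a homomorphism $h\colon\alg{A}\to\stBool$ is surjective (it preserves the constants $0$ and $1$), so $\stBool$ is a homomorphic image of $\alg{A}$, and by the class-operator observations of Section~\ref{section:satisfiability} this yields $\SAT(\alg{A})\subseteq\SAT(\stBool)$; Lemma~\ref{sat_incl} supplies the reverse inclusion, hence $\SAT(\alg{A})=\SAT(\stBool)$. The implication $(1)\Rightarrow(2)$ is equally direct: if $\varphi(a_1,\dots,a_k)=1^\alg{A}$, then the assignment $x_i\mapsto a_i$ witnesses $\varphi\in\SAT(\alg{A})$, and classicality gives $\varphi\in\SAT(\stBool)$.

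The heart of the argument is $(2)\Rightarrow(3)$, where I would construct the homomorphism by choosing a single Boolean assignment for the generators. Put $T=\{\varphi(x_1,\dots,x_k)\mid \varphi(a_1,\dots,a_k)=1^\alg{A}\}$, and note that $T$ is closed under multiplication, since a product of terms evaluating to $1^\alg{A}$ at the generators again evaluates to $1^\alg{A}$. The first task is to find one assignment $\epsilon\in\{0,1\}^k$ with $\varphi^{\stBool}(\epsilon)=1$ for \emph{every} $\varphi\in T$ simultaneously. Here I would argue by contradiction: if no such $\epsilon$ existed, then for each of the $2^k$ assignments $\epsilon$ one could pick $\varphi_\epsilon\in T$ with $\varphi_\epsilon^{\stBool}(\epsilon)=0$, and the finite product $\prod_\epsilon\varphi_\epsilon$ would still lie in $T$ yet be falsified by every Boolean assignment, so it would not be in $\SAT(\stBool)$ — contradicting $(2)$, which forces each member of $T$ into $\SAT(\stBool)$.

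Having fixed such an $\epsilon$, I would use finite generation to define $h$ termwise: every element of $A$ is $t(a_1,\dots,a_k)$ for some term $t$, and I set $h\bigl(t(a_1,\dots,a_k)\bigr)=t^{\stBool}(\epsilon)$. Well-definedness is where the choice of $\epsilon$ pays off: if $t(\bar a)=s(\bar a)$ in $\alg{A}$, then $t\equiv s\in T$, so $(t\equiv s)^{\stBool}(\epsilon)=1$, and since in $\stBool$ one has $u\equiv v=1$ iff $u=v$ (by Fact~\ref{FLew_facts}(1)), this gives $t^{\stBool}(\epsilon)=s^{\stBool}(\epsilon)$. Being defined through evaluation of terms, $h$ automatically respects all four operations and both constants, hence is a homomorphism. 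The main obstacle is precisely the simultaneous-satisfiability step: a priori $(2)$ only guarantees a (possibly different) satisfying Boolean assignment for each term in the infinite set $T$, and the work lies in collapsing this to a single assignment — which the finiteness of $\{0,1\}^k$ together with the closure of $T$ under multiplication accomplishes via that one product term.
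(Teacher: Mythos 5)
Your proposal is correct and takes essentially the same route as the paper: the decisive step, collapsing the $2^k$ Boolean assignments by taking a product of terms from $T$ that would be classically unsatisfiable yet evaluate to $1^\alg{A}$ at the generators, is exactly the paper's argument for $(2)\Rightarrow(3)$. The only cosmetic difference is at the end, where the paper packages your well-definedness check as the claim that the subalgebra of $\alg{A}\times\stBool$ generated by the pairs $(a_i,b_i)$ is the graph of a function, refuting a failure via the term $\phi_1\to\phi_0$ — a reformulation of your argument via $t\equiv s\in T$.
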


\begin{proof}
Clearly, (3) implies (1) and (1) implies (2). Let us prove
that (2) implies (3).

Assume (2).
First, we prove by contradiction that there are elements
$b_1,...,b_k \in \{0^\alg{A},1^\alg{A}\}$, such that for every
term $\phi(x_1,...,x_k)$, if $\phi(a_1,...,a_k)=1^\alg{A}$, then
$\phi(b_1,...,b_k)=1^\alg{A}$. If not, then for each of the $2^k$
possible choices of
$b_1,...,b_k \in \{0^\alg{A},1^\alg{A}\}$, there is a term
$\phi(x_1,...,x_k)$ such that $\phi(a_1,...,a_k)=1^\alg{A}$
and $\phi(b_1,...,b_k)=0^\alg{A}$. The conjunction of the $2^k$
terms obtained in this way is a classically
unsatisfiable term satisfied by $a_1,...,a_k$ in $\alg{A}$. This is a contradiction
to (2). Hence, the elements $b_1,...,b_k$ with the required
property exist.

Let $b_1,...,b_k \in \{0^\alg{A},1^\alg{A}\}$ be the elements guaranteed
by the previous paragraph. Let us prove that there is a 
homomorphism $h:A \to \{0^\alg{A},1^\alg{A}\}$, such that for $i=1,\ldots,k$,
we have $h(a_i)=b_i$.
Since $a_1,...,a_k$ generate $\alg{A}$,
there is at most one such homomorphism. Let $U$ be the
closure of the elements $(a_i,b_i)$ in $A \times \{0^\alg{A},1^\alg{A}\}$. There
is a homomorphism extending $h(a_i)=b_i$ for $i=1,\ldots,k$ 
iff the relation $U$ is a function.
Assume for a contradiction that $U$ is not a function.
Then there is some $c \in A$ such that $(c,0^\alg{A}) \in U$ and
$(c,1^\alg{A}) \in U$. Since $(c,0^\alg{A}) \in U$, there is a term
$\phi_0$ such that
\begin{eqnarray*}
\phi_0(a_1,...,a_k) & = & c \\
\phi_0(b_1,...,b_k) & = & 0^\alg{A} \ .
\end{eqnarray*}
Similarly, since $(c,1^\alg{A}) \in U$, there is a term
$\phi_1$ such that
\begin{eqnarray*}
\phi_1(a_1,...,a_k) & = & c \\
\phi_1(b_1,...,b_k) & = & 1^\alg{A} \ .
\end{eqnarray*}
Let $\phi = \phi_1 \to \phi_0$. The term $\phi$ satisfies
$\phi(a_1,...,a_k) = 1^\alg{A}$ and $\phi(b_1,...,b_k) = 0^\alg{A}$. Since
this is a contradiction with the construction of $b_1,...,b_k$,
we can conclude that $U$ is a function and defines
a homomorphism of $\alg{A}$ to $\{0^\alg{A},1^\alg{A}\}$. 
\end{proof}

Finitely generated subalgebras of a general $\FLew$-algebra
$\alg{A}$ suffice to determine whether $\SAT(\alg{A})$
is classical, in the following sense.

\begin{lemma} \label{lmExtensionSAT}
An $\FLew$-algebra $\alg{A}$ has classical
satisfiability iff each of its finitely generated
subalgebras has classical satisfiability.
\end{lemma}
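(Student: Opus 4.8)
The plan is to reduce ``classical satisfiability'' to a single inclusion. By Lemma \ref{sat_incl}, every $\FLew$-algebra $\alg{C}$ satisfies $\SAT(\stBool) \subseteq \SAT(\alg{C})$, so $\alg{C}$ has classical satisfiability exactly when the reverse inclusion $\SAT(\alg{C}) \subseteq \SAT(\stBool)$ holds. I would phrase both implications in terms of this one-sided inclusion, which keeps the argument symmetric and avoids repeatedly invoking the trivial half of the equality.

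For the left-to-right direction I would use the monotonicity of $\SAT$ under subalgebras already noted in Section \ref{section:satisfiability}: if $\alg{B}$ is a subalgebra of $\alg{A}$, then $\SAT(\alg{B}) \subseteq \SAT(\alg{A})$. Assuming $\alg{A}$ has classical satisfiability, for any finitely generated subalgebra $\alg{B}$ I would chain $\SAT(\stBool) \subseteq \SAT(\alg{B}) \subseteq \SAT(\alg{A}) = \SAT(\stBool)$, where the first inclusion is Lemma \ref{sat_incl} applied to $\alg{B}$ and the last equality is the hypothesis; equality holds throughout, so $\alg{B}$ is classical.

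For the right-to-left direction, the key observation is that satisfiability of a term is witnessed on finitely many variables. Given $\varphi \in \SAT(\alg{A})$, fix an $\alg{A}$-assignment $e$ with $e(\varphi) = 1^\alg{A}$, and let $\alg{B}$ be the subalgebra of $\alg{A}$ generated by the finitely many values $e(x_1), \dots, e(x_n)$ assigned to the variables occurring in $\varphi$. Since $\alg{B}$ is closed under the operations and contains the constants, the restriction of $e$ is an $\alg{B}$-assignment sending $\varphi$ to $1^\alg{B} = 1^\alg{A}$, so $\varphi \in \SAT(\alg{B})$. As $\alg{B}$ is finitely generated, it is classical by hypothesis, whence $\varphi \in \SAT(\stBool)$; this gives $\SAT(\alg{A}) \subseteq \SAT(\stBool)$, as required.

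I expect no serious obstacle: this is a standard locality argument, with the substance carried entirely by the finite-support nature of a satisfying assignment. The only points needing care are the preliminary reduction via Lemma \ref{sat_incl} and the routine check that restricting $e$ to the generated subalgebra leaves the value of $\varphi$ and the interpretation of the constants $0$ and $1$ unchanged.
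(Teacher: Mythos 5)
Your proposal is correct and follows essentially the same route as the paper: the easy direction is satisfiability restriction to subalgebras, and the substantive direction is the same finite-support locality argument, taking the subalgebra generated by the (finitely many) values of a witnessing assignment. The paper merely states this second direction contrapositively (``a nonclassically satisfiable term in $k$ variables is satisfiable in a subalgebra with at most $k$ generators''), which is the same argument as your direct version via Lemma \ref{sat_incl}.
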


\begin{proof}
If $\SAT(\alg{A})$ is classical, then this is true also
for all subalgebras of $\alg{A}$.
If $\SAT(\alg{A})$ contains a term in $k$ variables that
is not classically satisfiable, then this term is satisfiable
in some subalgebra of $\alg{A}$ with at most $k$ generators.
\end{proof}

\begin{theorem} \label{thExtensionHom}
Let $\alg{A}$ be an $\FLew$-algebra. There is
a homomorphism $h: \alg{A} \rightarrow \stBool$
iff a homomorphism
$h: \alg{B} \rightarrow \stBool$ exists for every
finitely generated subalgebra $\alg{B}$ of $\alg{A}$.
\end{theorem}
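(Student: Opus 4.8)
The plan is to treat the two directions separately, with essentially all the work in the ``if'' direction. The ``only if'' direction is immediate: if $h\colon\alg{A}\to\stBool$ is a homomorphism and $\alg{B}$ is any subalgebra of $\alg{A}$, then the restriction $h|_B\colon\alg{B}\to\stBool$ preserves all four operations and both constants, hence is again a (surjective) homomorphism. So I would concentrate on showing that homomorphisms on all finitely generated subalgebras can be amalgamated into a single homomorphism on $\alg{A}$, and the natural tool is the \emph{compactness theorem of propositional logic}.

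The key idea is to encode ``there is a homomorphism $\alg{A}\to\stBool$'' as the satisfiability of a propositional theory. I would introduce one propositional variable $p_a$ for each $a\in A$, intended to mean ``$h(a)=1$''. Recalling that in $\stBool$ the operation $\cdot$ coincides with the meet $\wedge$, that $\vee$ is the join, that $\to$ is classical implication, and that the constants are $0$ and $1$, I would let $T$ consist of $p_{1^\alg{A}}$, $\lnot p_{0^\alg{A}}$, and, for all $a,b\in A$, the biconditionals
\[
p_{a\cdot b}\leftrightarrow(p_a\wedge p_b),\quad
p_{a\wedge b}\leftrightarrow(p_a\wedge p_b),\quad
p_{a\vee b}\leftrightarrow(p_a\vee p_b),\quad
p_{a\to b}\leftrightarrow(p_a\to p_b)
\]
(where $\wedge,\vee,\to,\leftrightarrow,\lnot$ are the classical connectives of the meta-language). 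A truth assignment $v$ to $\{p_a\}_{a\in A}$ satisfies $T$ exactly when the map sending $a$ to $1$ iff $v(p_a)$ is true is a homomorphism $\alg{A}\to\stBool$: the two constant axioms fix the images of $0^\alg{A}$ and $1^\alg{A}$, and the biconditionals say precisely that this map commutes with each operation. Hence a homomorphism $\alg{A}\to\stBool$ exists if and only if $T$ is satisfiable.

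Now I would invoke compactness: $T$ is satisfiable iff each finite $T_0\subseteq T$ is. Any such $T_0$ names only finitely many elements of $A$ in its subscripts; collecting them into a finite set $S$ and letting $\alg{B}=\langle S\rangle$ be the subalgebra they generate, the hypothesis furnishes a homomorphism $g\colon\alg{B}\to\stBool$. Defining $v(p_a)$ to be true iff $a\in B$ and $g(a)=1$ (and false otherwise) then satisfies $T_0$, because every subscript occurring in $T_0$ lies in $S\subseteq B$, every compound subscript such as $a\cdot b$ again lies in $B$ as $B$ is a subalgebra, and $g$ being a homomorphism makes each biconditional and each constant axiom true. Thus every finite fragment of $T$ is satisfiable, so $T$ is satisfiable, and the sought homomorphism on $\alg{A}$ exists.

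I do not expect a genuine obstacle: the statement is in essence a compactness phenomenon, and this is exactly the lemma needed (together with Lemma \ref{lmExtensionSAT} and Theorem \ref{thLocalHomomorphism}) to lift the local characterization of classical satisfiability to arbitrary $\FLew$-algebras. The only points demanding care are the faithful dictionary between satisfying assignments and homomorphisms---which rests on reading off the $\stBool$ operations correctly, in particular $\cdot=\wedge$ and $\to$ as classical implication---and the observation that a finite fragment of $T$ localizes to a finitely generated subalgebra so the hypothesis applies. A Zorn's-lemma construction of a maximal consistent partial assignment, or an ultraproduct of the homomorphisms $g$ along an ultrafilter on the directed poset of finitely generated subalgebras, would work equally well but seems less economical.
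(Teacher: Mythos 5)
Your proof is correct, but it takes a genuinely different route from the paper's. The paper handles the forward direction the same way (restriction), but for the converse it works model-theoretically: it lets $G$ be the set of finite subsets of $A$, invokes the standard fact that $\alg{A}$ embeds via some $\mu$ into an ultraproduct $\mathcal{P}$ of the finitely generated subalgebras $\alg{B}_g$ along a suitable ultrafilter $\mathcal{F}$ on $G$, and then assembles the given homomorphisms $h_g\colon\alg{B}_g\to\stBool$ into a single $h\colon\mathcal{P}\to\stBool$ by letting $h([f]_{\mathcal F})$ be the value agreeing with $h_g(f(g))$ on a set of indices belonging to $\mathcal{F}$; composing $h$ with $\mu$ gives the homomorphism on $\alg{A}$. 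Your compactness encoding replaces this with the satisfiability of the propositional theory $T$, which is precisely the diagram of a $\{0,1\}$-valued homomorphism, and your key observation---that a finite fragment of $T$ only constrains subscripts lying in a finitely generated subalgebra, where the hypothesis applies---is the exact counterpart of the paper's localization to the $\alg{B}_g$. The two arguments have the same underlying strength (propositional compactness over an arbitrary set of variables and the needed ultrafilter both flow from the Boolean prime ideal theorem), so neither is more general; what yours buys is self-containedness, since the paper leans on an unproved ``well known'' embedding into the ultraproduct and a ``one can easily verify'' check that $h$ is well defined (which needs $\mathcal{F}$ to be an ultrafilter so that exactly one of the two Boolean values wins on a large index set), whereas your dictionary between assignments and homomorphisms is verified directly. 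Indeed, the ultraproduct alternative you dismiss at the end as ``less economical'' is essentially the paper's proof verbatim. Two small points to make explicit in a polished write-up: the subalgebra $\langle S\rangle$ is finitely generated even when $S$ is empty (it is the subalgebra generated by the constants), and your encoding silently handles the trivial algebra correctly---if $0^{\alg{A}}=1^{\alg{A}}$ then $T$ contains both $p_{0^{\alg{A}}}$ and $\lnot p_{0^{\alg{A}}}$ and is inconsistent, matching the absence of homomorphisms on both sides of the equivalence.
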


\begin{proof}
If there is a homomorphism $h: \alg{A} \rightarrow \stBool$,
then its restriction to any finitely generated subalgebra
is again a homomorphism.

For the opposite direction, let $G$ be the set of all finite
subsets of $A$ and let $\alg{B}_g$, $g \in G$ be the subalgebra of $\alg{A}$
generated by $g$. 
Assume that for each $g \in G$ there
is a homomorphism $h_g:\alg{B}_g \to \stBool$.
It is well known that $\alg{A}$ is embeddable into 
an ultraproduct ${\cal P}$ of $\alg{B}_g$, $g \in G$, given by an ultrafilter ${\cal F}$ on $G$,
via an embedding $\mu:\alg{A} \to {\cal P}$.
Define a mapping $h:{\cal P} \to \stBool$:
for each $u \in {\cal P}$ and each $f:G \to A$ such that $u=[f]_{\cal F}$, 
the value $h(u)$ is chosen so that $\{g \mid h_g(f(g)) = h(u) \}$ belongs to ${\cal F}$. 
One can easily verify that the definition of $h(u)$ is correct and that $h$ is a homomorphism.
The composition of $\mu$ and $h$ yields
a homomorphism from $\alg{A}$ onto $\stBool$.
\end{proof}

\begin{corollary} \label{corSATvsHom}
Let $\alg{A}$ be an $\FLew$-algebra. Then $\SAT(\alg{A})$ is classical
iff there is a homomorphism from $\alg{A}$ onto $\stBool$. 
\end{corollary}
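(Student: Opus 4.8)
The plan is to combine Corollary's three ingredients that have just been assembled: Lemma~\ref{lmExtensionSAT} (classical satisfiability of $\alg{A}$ reduces to classical satisfiability of its finitely generated subalgebras), Theorem~\ref{thExtensionHom} (existence of a homomorphism $\alg{A}\to\stBool$ reduces to the same for finitely generated subalgebras), and Theorem~\ref{thLocalHomomorphism} (for a \emph{finitely generated} $\alg{A}$, classical $\SAT$ is equivalent to the existence of such a homomorphism). The whole corollary is then a matter of threading these three reductions together, so the proof should be short.

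Concretely, I would argue as follows. First suppose there is a homomorphism $h\colon\alg{A}\to\stBool$. This is the easy direction: a surjective homomorphism preserves satisfiability of terms, so any term satisfiable in $\alg{A}$ is sent to a term satisfiable in $\stBool$; combined with the general inclusion $\SAT(\stBool)\subseteq\SAT(\alg{A})$ from Lemma~\ref{sat_incl}, this forces $\SAT(\alg{A})=\SAT(\stBool)$, i.e.\ $\SAT(\alg{A})$ is classical. (Alternatively one can simply invoke the already-noted fact that a homomorphism onto $\stBool$ preserves satisfiability, as observed just before Theorem~\ref{thLocalHomomorphism}.)

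For the converse, assume $\SAT(\alg{A})$ is classical. By Lemma~\ref{lmExtensionSAT}, every finitely generated subalgebra $\alg{B}$ of $\alg{A}$ also has classical satisfiability. Each such $\alg{B}$ is a nontrivial finitely generated $\FLew$-algebra, so by the equivalence of (1) and (3) in Theorem~\ref{thLocalHomomorphism} there is a homomorphism $\alg{B}\to\stBool$. Thus a homomorphism into $\stBool$ exists for every finitely generated subalgebra of $\alg{A}$, and Theorem~\ref{thExtensionHom} then yields a homomorphism $h\colon\alg{A}\to\stBool$ (which is automatically onto, as it preserves the constants $0$ and $1$). This completes both directions.

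Since all the substantive work has been front-loaded into the three preceding results, there is no genuine obstacle here; the only point requiring a little care is the bookkeeping around triviality. Theorem~\ref{thLocalHomomorphism} is stated for nontrivial finitely generated algebras, so one should observe that if $\alg{A}$ is trivial the statement holds degenerately, and otherwise every finitely generated subalgebra containing the assumed generators is itself nontrivial (indeed, every $\FLew$-algebra in question contains $\stBool$ as the subalgebra $\{0^\alg{A},1^\alg{A}\}$). Making sure the hypotheses of Theorem~\ref{thLocalHomomorphism} and Theorem~\ref{thExtensionHom} are genuinely met for the relevant subalgebras is the one place where a careless gluing of the lemmas could introduce a gap.
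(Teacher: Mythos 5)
Your proposal is correct and follows essentially the same route as the paper, whose proof is exactly this gluing: the right-to-left direction via preservation of satisfiability under a homomorphism onto $\stBool$, and the left-to-right direction by combining Lemma~\ref{lmExtensionSAT}, Theorem~\ref{thLocalHomomorphism} and Theorem~\ref{thExtensionHom}. Your extra care about triviality (a trivial $\alg{A}$ has non-classical $\SAT$ and admits no homomorphism onto $\stBool$, while every subalgebra of a nontrivial $\alg{A}$ contains $\{0^\alg{A},1^\alg{A}\}$ and is thus nontrivial) is a sound, if implicit-in-the-paper, bookkeeping point.
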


\begin{proof}
The right-to-left implication is trivial. For the left-to-right
implication, use Lemma \ref{lmExtensionSAT},
Theorem \ref{thLocalHomomorphism} and Theorem \ref{thExtensionHom}.
\end{proof}

The maximal set of satisfiable terms for $\BL$-algebras can be characterized
relying on the ordinal sum representation of $\BL$-chains.\footnote{Cf.~\cite{Hajek:1998, Mostert-Shields:OrdinalSum}.
The theorem says that each saturated BL-chain is an ordinal sum of MV-components, $G$-components and $\Pi$-components, 
and in particular, either there is a first MV-component or the chain is an $\SBL$-algebra.}

\begin{theorem}
Let $\alg{A}$ be a nontrivial $\BL$-algebra. Then $\SAT(\alg{A}) \subseteq \SAT(\standardL)$.
\end{theorem}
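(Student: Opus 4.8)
The plan is to peel the claim down, via the structure theory of BL-algebras, from a general nontrivial BL-algebra to BL-chains, then to MV-chains, and finally to $\standardL$ itself through the representation of free MV-algebras. First I would reduce to chains. Since $\aBL$ is semilinear, $\alg{A}$ embeds as a subalgebra into a product $\prod_i \alg{C}_i$ of nontrivial BL-chains (its subdirectly irreducible quotients). By the class-operator facts recorded before Theorem~\ref{th_SAT_SATPOS_NP-hard}, $\SAT(\alg{A}) \subseteq \SAT(\prod_i \alg{C}_i) = \bigcap_i \SAT(\alg{C}_i) \subseteq \SAT(\alg{C}_{i_0})$ for any fixed index $i_0$. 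Hence it is enough to prove $\SAT(\alg{C}) \subseteq \SAT(\standardL)$ for an arbitrary nontrivial BL-chain $\alg{C}$.

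Next I would split on whether $\alg{C}$ is weakly contractive. If $\alg{C}$ is an $\SBL$-chain, then from $x \wedge \neg x \approx 0$ and residuation ($x^2 = 0^\alg{C}$ forces $x \le \neg x$, hence $x = x\wedge\neg x = 0^\alg{C}$) it has no nonzero nilpotents; consequently $\neg$ has no fixed point and $\{x \mid x^2 > 0^\alg{C}\}$ is exactly the positive cone and is closed under multiplication. By Theorem~\ref{thChainsClassicalSAT} this gives $\SAT(\alg{C}) = \SAT(\stBool)$, and $\SAT(\stBool) \subseteq \SAT(\standardL)$ by Lemma~\ref{sat_incl}. Otherwise $\alg{C}$ has a nonzero nilpotent, and by the ordinal-sum representation of BL-chains (\cite{Hajek:1998, Mostert-Shields:OrdinalSum}) its first component $\alg{M}$ is then a nontrivial MV-chain (nonzero nilpotents can only sit in the bottom MV-component). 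The canonical retraction $\psi\colon \alg{C} \to \alg{M}$ onto the first component, namely $\psi(x) = x$ for $x$ at or below the component's top $u$ and $\psi(x) = u$ otherwise, is a surjective homomorphism; so $\alg{M}$ is a homomorphic image of $\alg{C}$ and $\SAT(\alg{C}) \subseteq \SAT(\alg{M})$. This reduces everything to MV-chains.

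Finally, suppose $\varphi \in \SAT(\alg{M})$ for a nontrivial MV-chain $\alg{M}$, witnessed by values $m_1,\dots,m_n$. These generate a finitely generated MV-subchain, which is a homomorphic image $\mathbf{F}_n/P$ of the free MV-algebra $\mathbf{F}_n$ on $n$ generators for some \emph{prime} filter $P$ (prime because the quotient is a chain), and $\varphi(\bar m) = 1$ says exactly that the McNaughton function $f_\varphi$ associated with $\varphi$ lies in $P$. Now $\mathbf{F}_n$ is semisimple and, by McNaughton's theorem, is identified with the algebra of McNaughton functions on $[0,1]^n$, whose maximal filters are precisely the point evaluations $P_p = \{f \mid f(p) = 1\}$ for $p \in [0,1]^n$ (\cite{Cignoli-Ottaviano-Mundici:AlgebraicFoundations}). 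Since every prime filter lies below a maximal one, $f_\varphi \in P \subseteq P_p$ for some $p$, i.e.\ $f_\varphi(p) = 1$; the assignment $\bar x \mapsto p$ then witnesses $\varphi \in \SAT(\standardL)$.

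I expect the main obstacle to be this last MV-chain step: the passage through $\mathbf{F}_n$ and, above all, the identification of its maximal spectrum with $[0,1]^n$ together with the fact that each prime filter sits below a point-evaluation filter. Everything else is either structural bookkeeping (the subdirect reduction and the class-operator inclusions) or a direct appeal to Theorem~\ref{thChainsClassicalSAT}. A secondary technical point to verify carefully is that $\psi$ is genuinely a homomorphism, which depends on the ordinal-sum form of $\alg{C}$ and the behaviour of $\cdot$ and $\to$ across components.
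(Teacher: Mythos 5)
Your proof is correct, and its skeleton coincides with the paper's: subdirect decomposition into nontrivial $\BL$-chains, the dichotomy between $\SBL$-chains and chains whose first ordinal-sum component is a nontrivial MV-chain, and a final reduction from MV-chains to $\standardL$. You differ in two local steps, both legitimately. First, where you retract onto the first component via the truncation $\psi(x)=\min(x,u)$, the paper uses the double-negation map $x \mapsto \neg\neg x$, whose image is $[0,b)\cup\{1\}$. Your $\psi$ is indeed a homomorphism (the kernel congruence has the singleton classes $\{x\}$ for $x<u$ together with the filter $[u,1]$, which is a filter because the idempotent $u$ makes it closed under $\cdot$), but it presupposes that the component top $u$ exists in the algebra, which is guaranteed only after passing to the saturation $\hat{\alg{C}}$; this is harmless, since $\SAT(\alg{C}) \subseteq \SAT(\hat{\alg{C}})$ by the subalgebra monotonicity recorded in Section \ref{section:satisfiability}, but you should state the passage explicitly --- the paper's $\neg\neg$ map is slightly more robust here, as its image makes sense even when the cut point $b$ is missing from the chain. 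Second, for the MV endgame the paper invokes in one line that MV-chains are partially embeddable into $\standardL$, whereas you give a self-contained argument through the free MV-algebra, McNaughton's theorem, and the identification of maximal filters with point evaluations; this is heavier machinery but fully explicit, and it actually proves more than you claim: since every proper filter extends to a maximal one, neither the chain assumption nor primeness of $P$ is needed, so your argument yields $\SAT(\alg{M}) \subseteq \SAT(\standardL)$ for every nontrivial MV-algebra $\alg{M}$. A final minor remark: in the $\SBL$ case you verify condition (3) of Theorem \ref{thChainsClassicalSAT}, which works, but since $\SBL$-algebras are weakly contractive you could instead appeal directly to Theorem \ref{theorem:char_satpos}, which is how the paper gets classical satisfiability there.
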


\begin{proof}
Each $\BL$-algebra $\alg{A}$ is a subdirect product of $\BL$-chains, say $\Pi_i \alg{A}_i$.
Consider an $\FLew$-term $\varphi$. Its satisfiability in $\alg{A}$ entails satisfiability in $\Pi_i \alg{A}_i$
and a fortiori in each $\alg{A}_i$. 
For each $i$, either $\alg{A}_i$ is an $\SBL$-chain, or its saturation has a first MV-component. 
If $\alg{A}_i$ is an $\SBL$-chain for each $i$, then satisfiability (in each $\alg{A}_i$ and)
in $\alg{A}$ is classical, which yields the statement.
Assume for some $i$, $\alg{A}_i$ has a first MV-component $\alg{B}$ on $[0,b]$. 
Then $\alg{B}'=[0,b)\cup 1$ is a homomorphic image of $\alg{A}$ under the map $x \mapsto \neg\neg x$. 
Therefore, $\varphi$ is satisfiable in $\alg{B}'$. Since $\alg{B}'$ is partially 
embeddable in $\standardL$, $\varphi$ is satisfiable in $\standardL$.
Therefore, $\SAT(\alg{A})\subseteq \SAT(\standardL)$.
\end{proof}

$\SAT$ problems for $\FLew$-algebras are ordered by inclusion; $\SAT(\stBool)$ is the bottom element in this order.
The above theorem shows that for satisfiability problems for $\BL$-algebras, $\SAT(\standardL)$ is the top. 
We do not know whether there is a top in this order over all $\FLew$-algebras.

Consider two $\FLew$-algebras $\alg{A}$ and $\alg{B}$.
The previous section tells us that if $\SATPOS(\alg{A})=\SATPOS(\alg{B})=\SATPOS(\stBool)$,
then also $\SAT(\alg{A}) = \SAT(\alg{B})$. However, without the assumption that $\SATPOS$ 
is classical for both algebras, the implication does not hold; we shall see in the following
section that, in the realm of MV-algebras, one can obtain a continuum of distinct $\SAT$ problems,
while there are infinitely countably many $\SATPOS$ problems.

\section{Satisfiability in MV-algebras}
\label{section:MV}

Unlike $\WCon$-algebras discussed earlier in this paper,
MV-algebras (and MV-chains in particular) present a rich variety 
of distinct satisfiability problems.

Let us look first at the $\SATPOS$ relation for MV-algebras and, 
indeed, all involutive $\FLew$-algebras.
In an involutive $\FLew$-algebra $\alg{A}$, 
the function $\neg^\alg{A}$ is a bijection on ${A}$; 
in particular, the preimage of $0^\alg{A}$ is (just) $1^\alg{A}$.
In addition to Lemma \ref{lmSATPOSvsTAUTNeg} which holds for each $\FLew$-algebra,
for an involutive algebra $\alg{A}$ one also gets
$$ \varphi\in \TAUT(\alg{A}) \Leftrightarrow \neg\varphi\in \oSATPOS(\alg{A}) $$
by combining Lemma \ref{lmSATPOSvsTAUTNeg} for the instance $\neg\varphi$ 
with the equivalence $\varphi\equiv\neg\neg\varphi$.
In plain words, one can read tautologousness from positive satisfiability.
Moreover, for each term $\varphi$ and each involutive $\FLew$-algebra $\alg{A}$, one has 
$\varphi\in\TAUT(\alg{A})$ iff $\neg(\neg\varphi)\in\TAUT(\alg{A})$ iff $\neg(\neg\varphi)\in\TAUT^\neg(\alg{A})$.
Hence, for involutive $\FLew$-algebras $\alg{A}$ and $\alg{B}$, we have 
$\TAUT^{\neg}(A) = \TAUT^{\neg}(B) \Leftrightarrow \TAUT(A) = \TAUT(B)$.
Recalling Corollary \ref{neg_equivalence}, saying that for any two 
$\FLew$-algebras $\alg{A}, \alg{B}$ one has
$\SATPOS(\alg{A})=\SATPOS(\alg{B})$ iff  $\TAUT^\neg(\alg{A}) = \TAUT^\neg(\alg{B})$, 
we may conclude:

\begin{theorem}
For any two involutive $\FLew$-algebras $\alg{A}$ and $\alg{B}$, one has
$$\SATPOS(\alg{A})=\SATPOS(\alg{B}) \mbox{\ \ iff\ \ } \TAUT(\alg{A}) = \TAUT(\alg{B}).$$
\end{theorem}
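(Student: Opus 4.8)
The plan is to obtain the desired equivalence by chaining two intermediate equivalences, both of which are already prepared in the discussion above. First I would apply Corollary \ref{neg_equivalence}, which is valid for arbitrary $\FLew$-algebras, to rewrite the left-hand side: $\SATPOS(\alg{A})=\SATPOS(\alg{B})$ holds iff $\TAUT^\neg(\alg{A})=\TAUT^\neg(\alg{B})$. This reduces the theorem to showing that, for involutive algebras, equality of the negative fragments of the equational theories amounts to equality of the full equational theories.

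For that reduction I would establish $\TAUT^\neg(\alg{A})=\TAUT^\neg(\alg{B})$ iff $\TAUT(\alg{A})=\TAUT(\alg{B})$. The implication from right to left is immediate and requires no hypothesis on the algebras, since $\TAUT^\neg(\alg{A})=\TAUT(\alg{A})\cap\Tm^\neg$. The converse is where involutivity enters, through the fact that in any involutive $\alg{A}$ the validity of $\neg\neg x\approx x$ yields $\varphi\equiv\neg\neg\varphi$ in $\alg{A}$ for every term $\varphi$; consequently $\varphi\in\TAUT(\alg{A})$ iff $\neg\neg\varphi\in\TAUT(\alg{A})$, and since $\neg\neg\varphi$ is itself a negated term, the latter is equivalent to $\neg\neg\varphi\in\TAUT^\neg(\alg{A})$. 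Assuming $\TAUT^\neg(\alg{A})=\TAUT^\neg(\alg{B})$, I would then read off, for an arbitrary $\varphi$, the chain $\varphi\in\TAUT(\alg{A})$ iff $\neg\neg\varphi\in\TAUT^\neg(\alg{A})=\TAUT^\neg(\alg{B})$ iff $\varphi\in\TAUT(\alg{B})$, yielding $\TAUT(\alg{A})=\TAUT(\alg{B})$.

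The only delicate point is this last step, and it is precisely where the involutive hypothesis is indispensable: the negative tautologies determine the entire equational theory only because each formula is interderivable, inside the algebra, with a negated one. The Heyting algebras serve as the cautionary contrast, since there distinct algebras may agree on all negative tautologies while disagreeing on tautologies. Concatenating the two equivalences then gives $\SATPOS(\alg{A})=\SATPOS(\alg{B})$ iff $\TAUT(\alg{A})=\TAUT(\alg{B})$, completing the argument.
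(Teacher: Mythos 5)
Your proposal is correct and matches the paper's own argument essentially step for step: the paper likewise invokes Corollary \ref{neg_equivalence} to translate equality of $\SATPOS$ sets into equality of the negative fragments $\TAUT^\neg$, and then uses the involutive identity $\neg\neg x \approx x$ to get $\varphi\in\TAUT(\alg{A})$ iff $\neg(\neg\varphi)\in\TAUT^\neg(\alg{A})$, so that the negative fragments determine the full tautology sets. Your explicit observation that involutivity of both algebras is what makes the second equivalence work (with Heyting algebras as the contrast) is exactly the content the paper relies on, so there is nothing to fix.
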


Komori (\cite{Komori:SuperLukasiewiczPropositional}) provided a classification of subvarieties of $\Alg{MV}$
(hence, also, of the sets $\TAUT(\alg{A})$ for $\alg{A}$ being an MV-algebra):
there are countably infinitely many such subvarieties, 
and each is generated by a particular choice of algebras from among $\alg{K}_n$ and $\mathrm{\L}_n$, 
standing for the $n$-segment Komori algebra and the $n$-element finite MV-chain respectively.
Moreover, the generated variety is the full variety of MV-algebras iff the set of generators is infinite.
The cardinality of the set of problems $\SATPOS(\alg{A})$ for $\alg{A}$ an MV-algebra is therefore countably infinite.
\cite{Cintula-Hajek:ComplexityLukasiewicz} shows that axiomatic extensions of {\L}ukasiewicz logic 
(and therefore, all $\TAUT(\alg{A})$ problems for $\alg{A}$ being an MV-algebra) are $\coNP$-complete. 
It follows that, for any choice $\alg{A}$ of a nontrivial MV-algebra, the problem $\SATPOS(\alg{A})$ is $\NP$-complete.

We now turn to the SAT relation for MV-algebras. 
For MV-chains, we show that equality of $\SAT$ problems can
replace the requirement of containing the same rationals
in a characterization of equality of universal theories,
given in \cite{Gispert-Mundici:MVAlgebras}.
For an integer $n\geq 1$, one says that the rationals 
$\{0/n,1/n, 2/n, \dots, n/n\}$ are contained in an MV-chain $\alg{A}$ iff 
$\alg{A}$ contains an isomorphic copy of the $(n+1)$-element MV-chain $\mathrmL_{n+1}$
as a subalgebra. For the latter, we say shortly that $\alg{A}$ contains $\mathrmL_{n+1}$.
See \cite{Gispert-Mundici:MVAlgebras, DiNola-Leustean:Handbook} for the definitions of the notions of order and rank of an MV-chain, used below.

\begin{theorem}[\cite{Gispert-Mundici:MVAlgebras}, Theorem 6.7] 
Two MV-chains have the same universal theory iff they have the same order,
the same rank, and they contain the same rationals.
\end{theorem}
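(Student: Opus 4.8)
The plan is to turn the statement into a finitary embeddability question and then match that question against the three invariants. First I would use the standard fact that two structures of the same signature have the same universal theory iff they have the same existential theory---the two are interdefinable by negating quantifier-free matrices---and that this is governed by the finitely generated substructures: an existential sentence is witnessed by a finite tuple, and the quantifier-free type of that tuple fixes the isomorphism type of the subalgebra it generates. Since subalgebras of MV-chains are again MV-chains, the problem reduces to showing that the class of finitely generated MV-chains embeddable into $\alg{A}$ (equivalently, into an ultrapower of $\alg{A}$) is completely determined by the order, the rank, and the rationals contained in $\alg{A}$.

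Next I would move to lattice-ordered groups via Mundici's categorical equivalence $\Gamma$ between $\MV$ and abelian $\ell$-groups with distinguished strong order unit. Under $\Gamma$, MV-chains correspond to totally ordered abelian groups with strong unit, subalgebras to unital $\ell$-subgroups, and finitely generated subalgebras to finitely generated ones. A finitely generated totally ordered abelian group is torsion-free of finite rank, hence free of finite rank, so the finitely generated MV-subchains form a transparent family---ordered copies of $\mathbb{Z}^{k}$ carrying a lexicographic-type order, each with its unit. This is the concrete family against which embeddability must be decided, and the radical of such a chain cleanly separates its infinitesimal part from its simple (archimedean) quotient.

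For the ``only if'' direction I would show that each of the three invariants is visible in the existential theory. Containing the rationals of denominator $n$, that is, embedding $\mathrmL_{n+1}$, is detected by a single existential sentence asserting the existence of an element $x$ with $\mathrm{ord}(x)=n$ together with the relations forcing the subalgebra generated by $x$ to be $\mathrmL_{n+1}$. Order and rank, as defined in \cite{Gispert-Mundici:MVAlgebras}, are likewise pinned down by which finitely generated configurations can be realized---elements of prescribed order, and elements distributed among a prescribed number of archimedean classes above the radical---and such configurations are recorded faithfully inside finitely generated subchains. Hence equality of existential theories forces agreement on order, rank, and contained rationals.

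The substance, and the step I expect to be the \emph{main obstacle}, is the ``if'' direction: assuming $\alg{A}$ and $\alg{B}$ agree on all three invariants, every finitely generated subchain $\alg{C}$ of $\alg{A}$ must be shown to embed into $\alg{B}$, and symmetrically. Here I would split $\alg{C}$ along its radical: the simple quotient, being finitely generated and archimedean, is controlled by the rank together with the finite subchains $\mathrmL_{n+1}$ it contains, while the infinitesimal part is controlled by the order. Using the classification of finitely generated ordered abelian groups, one builds the embedding in two stages---first placing the simple quotient using the matching rank and rationals, then lifting through the infinitesimals using the matching order---verifying at each stage that $\alg{B}$ affords enough room. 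The delicate point is to confirm that order, rank, and contained rationals form a \emph{complete} set of invariants for finitely generated subchains, so that no finer datum of $\alg{C}$ can block the embedding; this is precisely where the detailed combinatorics of finitely generated ordered abelian groups, and the separate roles played by order and rank, are unavoidable.
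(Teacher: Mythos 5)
First, a framing point: the paper does not prove this statement at all --- it is imported verbatim from \cite{Gispert-Mundici:MVAlgebras} (their Theorem 6.7), and the surrounding text explicitly defers even the definitions of order and rank to the literature. So there is no internal proof to compare against, and your proposal must stand on its own merits.

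It does not, because of one concrete conflation: in your first paragraph you reduce the problem to determining ``the class of finitely generated MV-chains embeddable into $\alg{A}$ (equivalently, into an ultrapower of $\alg{A}$)'', and your entire ``if'' direction then aims to embed every finitely generated subchain of $\alg{A}$ into $\alg{B}$. The parenthetical equivalence is false, because finitely generated MV-chains are in general not finitely presented: an existential sentence records only a \emph{finite} fragment of the quantifier-free diagram of a generating tuple, so equality of existential theories yields only local (partial) embeddability of finitely generated subchains, not embeddability. A counterexample to your stronger claim sits inside the theorem itself: take $\alg{A} = \standardL$ and $\alg{B} = [0,1]\cap\mathds{Q}$ with the restricted operations. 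Via Mundici's $\Gamma$ these correspond to $(\mathds{R},1)$ and $(\mathds{Q},1)$, which are elementarily equivalent (divisible totally ordered abelian groups admit quantifier elimination), so the two chains agree on order, rank, rationals, and indeed on their full universal theories; yet for irrational $\alpha$ the subalgebra of $\standardL$ generated by $\alpha$ does not embed into $\alg{B}$, since by H\"older's uniqueness any unital order embedding of the archimedean group $\mathds{Z}+\mathds{Z}\alpha$ with unit $1$ into $\mathds{Q}$ would have to fix $\alpha$. So the statement your ``main obstacle'' step sets out to verify --- that order, rank and rationals are a complete set of invariants deciding embeddability of finitely generated subchains --- is not merely delicate, it is false, and a proof along your lines would collapse exactly there. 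The correct target is realizability of finite quantifier-free configurations (finite partial subalgebras), i.e.\ the notion of partial embeddability that this paper itself invokes for BL-chains, with irrational data in a generator approximated by rational data inside an ultrapower via \L o\'s's theorem. Relatedly, your description of finitely generated MV-subchains as ``ordered copies of $\mathds{Z}^k$ carrying a lexicographic-type order'' understates the family: already $\mathds{Z}^2$ carries continuum many archimedean (irrational-slope) total orders, none lexicographic, and these are precisely the cases where embeddability and local embeddability diverge. Your ``only if'' direction, by contrast, is essentially sound --- detecting $\mathrmL_{n+1}$ by a single existentially quantified equation is exactly what Lemma \ref{rat_l1} of this paper does with the term $x \equiv (\neg x)^{n-1}$.
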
 

Our aim is to rephrase this theorem in Corollary \ref{corSameTAUTAndSAT} in terms of $\TAUT$ and $\SAT$ operators. 
For this purpose, recall the result of \cite{Komori:SuperLukasiewiczPropositional} saying that two MV-chains have the
same $\TAUT$ problem iff they have the same order and the same rank.

\begin{lemma}
\label{rat_l1}
Let $\alg{A}$ be an MV-chain. Let $n$ be an integer, $n\geq 2$. 
Then $x \equiv (\neg x)^{n-1} \in \SAT(\alg{A})$ iff $\alg{A}$ contains 
$\mathrmL_{n+1}$.
\end{lemma}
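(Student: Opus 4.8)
The plan is to first turn the satisfiability of $x \equiv (\neg x)^{n-1}$ into a purely equational condition on a single element, and then analyze that condition in an MV-chain. Recall that $\varphi \equiv \psi$ abbreviates $(\varphi \to \psi)\cdot(\psi \to \varphi)$; since a product equals $1^\alg{A}$ only if both factors do (each factor dominates the product by weakening), Fact~\ref{FLew_facts}(1) gives that an assignment $e$ sends $\varphi \equiv \psi$ to $1^\alg{A}$ iff $e(\varphi) = e(\psi)$. Hence $x \equiv (\neg x)^{n-1} \in \SAT(\alg{A})$ iff there is $a \in A$ with $a = (\neg a)^{n-1}$. Using involutivity of $\neg$ together with the De Morgan law $\neg(y\cdot z) = \neg y + \neg z$ (so that $\neg((\neg a)^{n-1}) = (n-1)\,a$), applying $\neg$ to both sides shows this is equivalent to $\neg a = (n-1)\,a$; note also $a \neq 0^\alg{A}$, since $a = 0^\alg{A}$ would give $(\neg a)^{n-1} = 1^\alg{A} \neq a$.

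For the right-to-left implication, assume $\alg{A}$ contains $\mathrmL_{n+1}$ and put $a = 1/n$ in that subalgebra. A direct computation in $\mathrmL_{n+1}$ gives $(\neg a)^{n-1} = ((n-1)/n)^{n-1} = 1/n = a$, and since the operations of a subalgebra are the restrictions of those of $\alg{A}$, the assignment sending $x$ to $a$ witnesses $x \equiv (\neg x)^{n-1} \in \SAT(\alg{A})$.

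For the converse, suppose $a \in A$ satisfies $\neg a = (n-1)\,a$ with $a \neq 0^\alg{A}$, and write $a_k = k\,a$ for $0 \le k \le n$. Since $a + \neg a = \neg(a\cdot\neg a) = \neg 0^\alg{A} = 1^\alg{A}$ by Fact~\ref{FLew_facts}(2), we get $a_n = a + (n-1)\,a = a + \neg a = 1^\alg{A}$, while $a_{n-1} = \neg a < 1^\alg{A}$ because $a \neq 0^\alg{A}$ (Fact~\ref{FLew_facts}(1)). As $+$ is monotone, $a_k \le a_{n-1} < 1^\alg{A}$ for $k \le n-1$; and in an MV-chain $x + a > x$ whenever $x < 1^\alg{A}$ and $a > 0^\alg{A}$ (equivalently, $\neg x \cdot \neg a = \neg x$ forces $\neg a = 1^\alg{A}$ or $\neg x = 0^\alg{A}$, which one verifies via the $\ell$-group representation). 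Thus $0^\alg{A} = a_0 < a_1 < \dots < a_{n} = 1^\alg{A}$ are $n+1$ distinct elements.

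It remains to show that $\{a_0, \dots, a_n\}$ is a subalgebra isomorphic to $\mathrmL_{n+1}$, and this is the main obstacle. The cleanest route is Mundici's $\Gamma$ equivalence: write $\alg{A} = \Gamma(G, u)$ for a unital abelian $\ell$-group $(G,u)$, so that $\neg x = u - x$ and, for elements whose multiples stay below $u$, $k\,x$ is the $\ell$-group multiple. Since $(n-1)\,a = \neg a < 1^\alg{A} = u$, the relation $\neg a = (n-1)\,a$ reads $u - a = (n-1)\,a$ in $G$, i.e.\ $u = n\,a$. Hence $\{a_0,\dots,a_n\} = \{0, a, 2a, \dots, na\}$ is exactly the image under $\Gamma$ of the subgroup generated by $a$ inside $[0,u]$, and the map $k/n \mapsto k\,a$ is an MV-isomorphism of $\mathrmL_{n+1} \cong \Gamma(\mathds{Z}, n)$ onto it. Therefore $\alg{A}$ contains $\mathrmL_{n+1}$, completing the equivalence. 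One may alternatively avoid $\ell$-groups by checking directly that $\neg a_k = a_{n-k}$ and that $\{a_0,\dots,a_n\}$ is closed under $\cdot, +, \to, \wedge, \vee$, but the bookkeeping is precisely what the $\Gamma$ translation streamlines.
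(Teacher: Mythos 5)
Your proof is correct, and its hard direction takes a genuinely different route from the paper's. Both arguments share the same skeleton up to the key step: reduce satisfiability of $x \equiv (\neg x)^{n-1}$ to the existence of $a$ with $a=(\neg a)^{n-1}$, equivalently (by involutivity and De Morgan) $\neg a = (n-1)a$, and dispatch the right-to-left direction by evaluating at $1/n$. For the converse, the paper stays entirely inside the equational theory of MV-algebras: it proves $\neg(ka)=(n-k)a$ for $k=1,\dots,n-1$ by induction, using only the MV identities listed in the preliminaries (in particular $x\cdot(\neg x+y)\approx y\cdot(\neg y+x)$ and $(n-1)a\cdot a=0$), and concludes directly that $\{0,a,2a,\dots,(n-1)a,1\}$ is a subalgebra isomorphic to $\mathrmL_{n+1}$. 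You instead pass through the representation of MV-chains as $\Gamma(G,u)$ for a totally ordered unital abelian group: the one delicate point---that the truncated multiple $(n-1)a$ agrees with the group multiple---you handle correctly, since the MV-value $(n-1)a=\neg a< u$ shows no truncation occurs, whence $u=na$ in $G$ and the subalgebra $\{0,a,\dots,na\}\cong\Gamma(\mathds{Z},n)\cong\mathrmL_{n+1}$ falls out with essentially no computation (your middle paragraph establishing $0=a_0<a_1<\dots<a_n=1$ then becomes redundant, as the group picture gives distinctness for free). The trade-off is clear: your route buys conceptual transparency and eliminates the inductive bookkeeping, at the cost of invoking Chang--Mundici representation machinery that the paper never cites; the paper's argument is longer but self-contained, using only facts stated in its preliminaries. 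You even flag this alternative in your closing sentence---the direct verification you describe there is precisely the proof the paper gives. One shared blemish, not counted against you: both your argument and the paper's tacitly assume $\alg{A}$ is nontrivial (needed for $(\neg a)^{n-1}=1\neq a$ when $a=0$), since in the one-element chain the term is vacuously satisfiable while no copy of $\mathrmL_{n+1}$ exists.
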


\begin{proof}
If $\alg{A}$ contains $\mathrmL_{n+1}$, the element $1/n$ of $\mathrmL_{n+1}$ 
satisfies the term in $\alg{A}$. On the other hand, assume $x\equiv(\neg x)^{n-1}$ has a
solution $a$ in $\alg{A}$. Hence, $a=(\neg a)^{n-1}$ and
$\neg a = (n-1)a$. Clearly $0^\alg{A}<a<1^\alg{A}$. 
If $n=2$, the equation gives $a=\neg a$, so $\{0^\alg{A}, a, 1^\alg{A}\}$
is isomorphic to $\mathrmL_3$.
For the rest of the proof, we assume $n\geq 3$ and 
on this assumption, $0^\alg{A}<a \leq (\neg a)^2<\neg a<1^\alg{A}$.
We have $(\neg a)^n = (\neg a)^{n-1}\cdot \neg a = a\cdot\neg a = 0^\alg{A}$.  
Moreover $\neg a = (n-1)a$, so $n a=1^\alg{A}$.
In the rest, we show $\neg ka=(n-k)a$, by induction on $k$ from $1$ up to $n-1$.
For $k=1$ the statement holds by assumption.
Further we prove for $k\geq 2$ on the induction assumption for $k-1$,
i.e., $\neg(k-1)a = (n-k+1)a$.
We have $\neg ka = \neg (a+ (k-1) a) = \neg a \cdot \neg (k-1)a = \neg a \cdot (n-k+1)a = 
\neg a \cdot (a + (n-k)a)$.
The latter is equal to 
$(n-k) a \cdot (\neg (n-k)a + \neg a)$ and to $ (n-k)a \cdot \neg ((n-k)a\cdot a)$.
Since $(n-1)a \cdot a=0^\alg{A}$, we have $(n-k)a\cdot a=0^\alg{A}$.
Altogether, $\neg ka = (n-k)a\cdot \neg 0^\alg{A} = (n-k)a$.
It follows that the subset $\{0^\alg{A},a,2a,\dots, (n-1)a, 1^\alg{A}\}$ of $\alg{A}$
with the corresponding restrictions of the operations is isomorphic
to $\mathrmL_{n+1}$.
\end{proof}

\begin{lemma} \label{lmSameSAT}
If two MV-chains have the same $\SAT$ problem, they contain the same rationals.
\end{lemma}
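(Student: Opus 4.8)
The plan is to read the presence of each finite subchain $\mathrmL_{n+1}$ straight off the $\SAT$ problem, using Lemma \ref{rat_l1} as the bridge. By the definition recalled just before that lemma, the assertion that $\alg{A}$ and $\alg{B}$ contain the same rationals unfolds into the condition that, for every integer $n \geq 1$, the chain $\alg{A}$ contains (an isomorphic copy of) $\mathrmL_{n+1}$ exactly when $\alg{B}$ does. So the whole statement reduces to verifying this single equivalence for each $n$, and each such equivalence can be extracted from equality of the two $\SAT$ problems.

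First I would fix $n \geq 2$ and consider the term $\varphi_n = x \equiv (\neg x)^{n-1}$. Lemma \ref{rat_l1} states that $\varphi_n \in \SAT(\alg{A})$ iff $\alg{A}$ contains $\mathrmL_{n+1}$, and likewise with $\alg{B}$ in place of $\alg{A}$. Combined with the hypothesis $\SAT(\alg{A}) = \SAT(\alg{B})$, which gives $\varphi_n \in \SAT(\alg{A})$ iff $\varphi_n \in \SAT(\alg{B})$, chaining the three equivalences yields that $\alg{A}$ contains $\mathrmL_{n+1}$ iff $\alg{B}$ does. This settles all chains $\mathrmL_m$ with $m \geq 3$. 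The only remaining chain relevant to the rationals is $\mathrmL_2$, which is (isomorphic to) $\stBool$, a subalgebra of every nontrivial $\FLew$-algebra (its least and greatest elements with the restricted operations); hence both chains contain $\mathrmL_2$. One should also note that $\SAT(\alg{A}) = \SAT(\alg{B})$ forces $\alg{A}$ and $\alg{B}$ to be both trivial or both nontrivial, since a trivial chain satisfies every term whereas a nontrivial one fails to satisfy, say, $x \wedge \neg x$; so invoking nontriviality above is legitimate, and if both chains are trivial the conclusion holds vacuously. Collecting the equivalences over all $m \geq 2$ gives that $\alg{A}$ and $\alg{B}$ contain the same rationals.

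Since all the combinatorial content is already packed into Lemma \ref{rat_l1}, I do not expect a genuine obstacle here: the proof is a direct translation from satisfiability of a single family of terms to subalgebra containment. The only points requiring a moment's care are the bookkeeping between the indexing of the rationals (an $n$ yielding $\{0/n, \dots, n/n\}$) and the subalgebra $\mathrmL_{n+1}$, together with the harmless treatment of the smallest chain $\mathrmL_2$ and the trivial case.
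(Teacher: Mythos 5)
Your proof is correct and is essentially the paper's argument: the paper proves the same statement by contraposition, using exactly Lemma \ref{rat_l1} and the same family of terms $x \equiv (\neg x)^{n-1}$ to witness a difference in the $\SAT$ problems when the rationals differ. Your extra remarks on $\mathrmL_2$ and the trivial case are harmless bookkeeping the paper leaves implicit, not a change of method.
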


\begin{proof}
By contraposition, if MV-chains $\alg{A}$ and $\alg{B}$ do not have the same rationals, then
there is an integer $n\geq 2$ such that, without loss of generality, $\alg{A}$ contains $\mathrmL_{n+1}$ while $\alg{B}$ does not.
By Lemma \ref{rat_l1},  $ x\equiv(\neg x)^{n-1}$ is in $\SAT(\alg{A})$ but not in $\SAT(\alg{B})$. 
Thus $\alg{A}$ and $\alg{B}$ do not have the same $\SAT$. 
\end{proof}

The following is a reformulation of Theorem 6.7 of \cite{Gispert-Mundici:MVAlgebras}
as described above.

\begin{corollary} \label{corSameTAUTAndSAT}
Two MV-chains have the same universal theory iff they have the same $\TAUT$ and the same $\SAT$ problems.
\end{corollary}

\begin{proof}
If two MV-chains have the same universal theory, they have the same $\TAUT$ problem and also 
the same $\SAT$ problem. For the opposite direction,
if two MV-chains have the same TAUT problem, they have the same order and the same rank. If they,
moreover, have the same SAT problem, they have the same rationals by Lemma \ref{lmSameSAT}. Hence,
by Theorem 6.7 of \cite{Gispert-Mundici:MVAlgebras}, they have the same universal theory. 
\end{proof}

We address a decidability issue for $\SAT$ now, 
offering a cardinality argument showing that the $\SAT$ problem is undecidable (indeed, nonarithmetical) 
for many subalgebras of the standard MV-algebra $\standardL$.

\begin{theorem}
\label{theorem:MV_SATs}
There are continuum many distinct problems $\SAT(\alg{A})$ for different choices of an MV-algebra $\alg{A}$.
\end{theorem}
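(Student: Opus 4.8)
The plan is to produce continuum many MV-algebras with pairwise distinct $\SAT$ problems by exploiting the characterization developed in this section, specifically Lemma \ref{rat_l1} together with Lemma \ref{lmSameSAT}. The key observation is that for an MV-chain $\alg{A}$, the term $x \equiv (\neg x)^{n-1}$ lies in $\SAT(\alg{A})$ precisely when $\alg{A}$ contains $\mathrmL_{n+1}$, i.e.\ when the rational $1/n$ (and hence all of $\{0/n,\dots,n/n\}$) appears in $\alg{A}$. Thus a single MV-chain's $\SAT$ problem encodes exactly which of these finite chains embed into it, and one can hope to read off the set of integers $\{n \mid \mathrmL_{n+1} \hookrightarrow \alg{A}\}$ from $\SAT(\alg{A})$.

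First I would construct, for each subset $S$ of the primes (or more simply, for each subset of a suitable infinite index set of integers), a subalgebra $\alg{A}_S$ of the standard MV-algebra $\standardL$ whose contained rationals are governed by $S$. The natural candidate is to take $\alg{A}_S$ to be the subalgebra of $\standardL$ generated by $\{1/p \mid p \in S\}$ for $S$ a set of primes; since $\mathrmL_{p+1}$ embeds into this subalgebra iff the rational $1/p$ is present, and because the subalgebra of $\standardL$ generated by rationals with denominators drawn from $S$ contains $1/p$ iff $p$ divides some product of elements of $S$, working with \emph{primes} keeps the divisibility bookkeeping clean: $1/p$ lies in $\alg{A}_S$ iff $p \in S$. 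There are continuum many subsets of the primes, so this yields a continuum-sized family of candidate algebras.

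Next I would show these algebras have pairwise distinct $\SAT$ problems. Given $S \neq S'$, pick a prime $p$ in the symmetric difference, say $p \in S \setminus S'$. Then $\alg{A}_S$ contains $\mathrmL_{p+1}$ while $\alg{A}_{S'}$ does not, so by Lemma \ref{rat_l1} the term $x \equiv (\neg x)^{p-1}$ belongs to $\SAT(\alg{A}_S)$ but not to $\SAT(\alg{A}_{S'})$; equivalently one invokes Lemma \ref{lmSameSAT} directly, since distinct sets of primes give distinct contained rationals. This separates the two $\SAT$ problems, and since the map $S \mapsto \SAT(\alg{A}_S)$ is thereby injective on a continuum-sized domain, there are continuum many distinct $\SAT$ problems.

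The main obstacle I expect is the verification step for the construction, namely pinning down exactly which rationals the subalgebra $\alg{A}_S$ of $\standardL$ contains. One must confirm that generating by $\{1/p \mid p \in S\}$ under the MV-operations ($\cdot$, $\to$, $\neg$, and the derived $+$) does not inadvertently produce rationals with new prime denominators: the closure produces only rationals whose denominators involve primes in $S$, because truncated addition and the {\L}ukasiewicz operations applied to fractions with denominators built from $S$ yield fractions with denominators dividing least common multiples of those already present. Establishing this denominator-invariance carefully is the crux; once it is in place, the separation via Lemma \ref{rat_l1} is immediate and the cardinality conclusion follows.
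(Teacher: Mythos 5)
Your proposal is correct and takes essentially the same route as the paper's own proof: the paper likewise takes, for each set $R$ of primes, the subalgebra $R^\ast$ of $\standardL$ generated by $\{1/r \mid r\in R\}$, notes (your denominator-invariance step) that denominators of elements of $R^\ast$ are squarefree products of primes from $R$, so that $\mathrmL_{p+1}$ embeds iff $p\in R$, and separates the problems via the term $x\equiv(\neg x)^{p-1}$ from Lemma \ref{rat_l1}. The only cosmetic difference is that the paper records the stronger equivalence $\SAT(R_1^\ast)\subseteq\SAT(R_2^\ast)$ iff $R_1\subseteq R_2$ and then invokes a continuum-sized antichain of prime sets, whereas you argue injectivity of $S\mapsto\SAT(\alg{A}_S)$ directly, which already suffices.
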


\begin{proof}
Consider a class of subalgebras of $\standardL$ defined as follows.
Denote $P$ the set of primes. 
For $R\subseteq P$, denote $R^\ast$ the subalgebra of $\standardL$ generated by the set $\{ 1/r \mid r\in R\}$. 
Then clearly $R^\ast$ is a subalgebra of $\standardL \cap \mathds{Q}$, where $\mathds{Q}$ denotes 
the set of rational numbers.

Let $R\subseteq P$. Let $r/s\in R^\ast$, where $r$ and $s$ are coprime. 
Then $s$ is a product of primes from $R$ (each of multiplicity at most $1$).
For a $p\in P$, in particular, $R^\ast$ contains $\mathrmL_{p+1}$  iff $p\in R$, and
using Lemma \ref{rat_l1}, the term $ x\equiv (\neg x)^{p-1}$ is satisfiable in $R^\ast$ iff $p\in R$.
For every $R_1, R_2 \subseteq P$, this implies
$\SAT(R_1^\ast)\subseteq \SAT(R_2^\ast)$ iff $R_1\subseteq R_2$.
To conclude, consider that there are continuum many sets of primes pairwise incomparable by inclusion.
\end{proof}

The above proof can probably be considered folklore for MV-algebras.
Note that $\SAT(R^\ast)$ is undecidable whenever $R$ is, but we do not know whether the converse is true,
i.e., whether a decidable $R$ gives rise to a decidable $\SAT(R^\ast)$.
Although the result speaks of MV-algebras, it pertains to $\SAT$ problems in $\FLew$-algebras in general. 
When pondering the complexity of all possible $\SAT$ problems for $\FLew$-algebras,
one can make, on cardinality alone, the following conclusion.

\begin{corollary} \label{cor_SAT_undecidable}
A majority of $\SAT$ problems for $\FLew$-algebras are nonarithmetical.
\end{corollary}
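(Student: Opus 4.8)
The plan is to read this as a pure cardinality argument resting on Theorem \ref{theorem:MV_SATs}. First I would make precise the sense in which a set $\SAT(\alg{A})$ of $\FLew$-terms can be called \emph{arithmetical}. Since $\Fm$ is countable, I would fix once and for all an effective (injective) Gödel numbering of $\Fm$, so that each $\SAT(\alg{A})$ is identified with a subset of $\mathds{N}$ and ``arithmetical'' acquires its standard meaning: definable by a first-order formula in the language of arithmetic. Because the coding is injective, distinct sets of terms remain distinct as subsets of $\mathds{N}$, so no collapse of the family of $\SAT$ problems occurs under this identification.

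The key observation is that there are only \emph{countably many} arithmetical subsets of $\mathds{N}$: each such set is the extension of one of the countably many first-order arithmetic formulas in a single free variable, hence the collection of arithmetical subsets of $\mathds{N}$ is countable. By contrast, Theorem \ref{theorem:MV_SATs} already exhibits continuum many pairwise distinct problems $\SAT(\alg{A})$ among MV-algebras, and a fortiori among $\FLew$-algebras. The conclusion then follows by comparing cardinalities: among the continuum many distinct $\SAT$ problems, at most countably many can coincide with an arithmetical set, so all but a countable subcollection of them are nonarithmetical. This is exactly the sense in which ``a majority'' are nonarithmetical.

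There is no genuine mathematical obstacle here; the content is entirely imported from Theorem \ref{theorem:MV_SATs}, and what remains is a counting argument of the standard ``countably many definitions cannot name continuum many objects'' type. The only points demanding care are bookkeeping ones, and I would flag them explicitly: that the fixed numbering of $\Fm$ is injective, so the continuum-many distinct term-sets supplied by Theorem \ref{theorem:MV_SATs} remain pairwise distinct as subsets of $\mathds{N}$; and that ``a majority'' is to be understood in the cardinality sense (all but a countable subfamily), not as any topological or measure-theoretic notion.
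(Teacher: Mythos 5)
Your proposal is correct and coincides with the paper's intended argument: the paper explicitly derives this corollary ``on cardinality alone'' from Theorem \ref{theorem:MV_SATs}, exactly as you do, by observing that only countably many sets of terms can be arithmetical while continuum many distinct $\SAT$ problems exist. Your added bookkeeping (injective G\"odel numbering, reading ``majority'' as all-but-countably-many) just makes explicit what the paper leaves implicit.
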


\section{Some syntactic fragments} 
\label{section:CNFandDNF}

We define two particular types of {\FLew} terms:
terms in \emph{conjunctive form} and terms in \emph{disjunctive form},
since satisfiability within these fragments in
a general $\FLew$-algebra is the same as in $\stBool$.
Although both notions are akin to CNF's and DNF's of classical logic
in this sense, their properties here are rather different, and in particular,
neither represents all definable functions in a general $\FLew$-algebra.

\begin{definition}{\ } 
\begin{itemize}
\item A \emph{literal} is a variable (such as $x$), or a negation
      thereof (such as $\neg x$).
\item A \emph{$(\cdot, \vee)$-term} is any term built up from literals using
      an arbitrary combination of the symbols $\cdot$ and $\vee$.
\item In particular, a \emph{clause}/\emph{monomial} 
      is a term built up from literals using only the symbols $\vee$/only the symbols $\cdot$.
\item A term is in \emph{conjunctive form} (a CF-term),
      iff it is built up from clauses using only the symbols $\cdot$.
\item A term is in \emph{disjunctive form} (a DF-term),
      iff it is built up from monomials using only the symbols $\vee$. 
\end{itemize}
\end{definition}

The above definition uses the multiplication $\cdot$ rather
than the lattice meet $\wedge$ in rendering the conjunction of classical logic.
The reason is that $\cdot$ distributes over $\vee$ in $\FLew$
(cf.~Fact \ref{dist_law}), while $\wedge$ in general does not,
thus $\cdot$ better approximates a key interaction between the classical operations.
Other classical properties that our function symbols ($\cdot$, $\vee$, and also $\wedge$) 
retain are  the laws of commutativity and associativity; 
$\vee$ and $\wedge$ are moreover idempotent.
That is why one might consider $\vee$-disjunctions as sets of literals.
On the other hand, the multiplication $\cdot$ is not
in general idempotent; for $\cdot$, one must dispense with the (classically implicit)
assumption that the arguments of a conjunction may be specified as a set.

Using distributivity of $\cdot$ over $\vee$, 
one can bring any CF-term to a DF-term. Not every $\FLew$-term has an equivalent CF- or DF-term.
For example, for a free Heyting algebra with $n$ generators, there are infinitely
many non-equivalent terms in $n$ variables. On the other hand, since $\cdot$ is
idempotent in Heyting algebras, there are only finitely many non-equivalent DF-terms
of any given number of variables. 
Another example is the term $x \wedge y$: 
consider the standard product algebra $\standardP$ and choose $a,b \in [0,1]$, such
that $b^2 < a < b$. If $\phi(x, y)$ is a DF-term, then one can
show\footnote{$\phi(x, y)$ has to contain
a monomial, which evaluates to $a$ with $x=a$ and $y=b$, and the only monomial
with this property is $x$.}
that $\phi(a, b) = a$ implies $\phi(x, y) \ge x$. Hence, $\phi(x, y)$ does not
define $x \wedge y$.

\begin{lemma}
\label{lm_SAT_Bool}
Let $\alg{A}$ be a nontrivial $\FLew$-algebra. 
A $(\cdot, \vee)$-term is (positively) satisfiable in $\alg{A}$
iff it is satisfiable in $\stBool$.
\end{lemma}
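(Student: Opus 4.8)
The plan is to prove both directions of the biconditional, the nontrivial content being that a $(\cdot,\vee)$-term satisfiable (or merely positively satisfiable) in $\alg{A}$ is already satisfiable in $\stBool$, the easy direction following from the fact that $\stBool$ embeds into every nontrivial $\FLew$-algebra. For the easy direction: if a $(\cdot,\vee)$-term $\varphi$ is satisfiable in $\stBool$, then since $\stBool$ is a subalgebra of the nontrivial $\alg{A}$, the inclusion $\SAT(\stBool)\subseteq\SAT(\alg{A})$ from Lemma \ref{sat_incl} gives satisfiability in $\alg{A}$, and a fortiori positive satisfiability by the second part of that lemma. So it remains to handle the forward direction, and it suffices to prove the weaker hypothesis, namely that positive satisfiability in $\alg{A}$ implies satisfiability in $\stBool$, since full satisfiability in $\alg{A}$ implies positive satisfiability (again Lemma \ref{sat_incl}).

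The key idea is a \emph{rounding} argument: given an $\alg{A}$-assignment $e$ witnessing positive satisfiability of a $(\cdot,\vee)$-term $\varphi$, I define a $\stBool$-assignment $e'$ by setting $e'(x)=1$ if $e(x)=1^\alg{A}$ and $e'(x)=0$ otherwise, and show that $e'$ sends $\varphi$ to $1$. The cleanest way to organize this is to prove, by induction on the structure of the $(\cdot,\vee)$-term $\psi$, the implication: if $e(\psi)>0^\alg{A}$ then $e'(\psi)=1$. For a literal $x$, $e(x)>0^\alg{A}$ forces $e(x)=1^\alg{A}$ only in $\stBool$, so here the induction must be set up carefully — the honest statement to push through is that $e'$ tracks whether $e$ equals $1^\alg{A}$, so I would instead prove: $e'(\psi)=1$ iff $e(\psi)=1^\alg{A}$, and separately use that $e(\varphi)>0^\alg{A}$ together with the structure of $(\cdot,\vee)$-terms forces $e(\varphi)=1^\alg{A}$. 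The latter uses Fact \ref{FLew_facts}: a product $a\cdot b>0^\alg{A}$ implies $a\cdot b>0^\alg{A}$ and hence, since $a\cdot b\leq a\wedge b$ is false in general — here I must be careful. The correct route is to show directly that on $(\cdot,\vee)$-terms the value $e(\psi)$ is already forced into $\{0^\alg{A},1^\alg{A}\}$ is \emph{not} true in general; rather, I prove the rounding map $e\mapsto e'$ is a value-preserving homomorphism-like correspondence on $(\cdot,\vee)$-terms that maps positivity to $1$.

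The main obstacle is the inductive step for multiplication, where I need that $e(\psi_1\cdot\psi_2)>0^\alg{A}$ forces both factors to be positive: this holds because $\psi_1\cdot\psi_2\leq^\alg{A}\psi_1$ and $\psi_1\cdot\psi_2\leq^\alg{A}\psi_2$ by weakening (Fact \ref{FLew_facts}(4) with $\cdot$ below $\wedge$ via residuation, or directly $x\cdot y\leq x$), so a positive product gives positive factors. For the literal base case, $e(x)>0^\alg{A}$ need only yield $e'(x)=1$ if I interpret $e'(x)=1$ whenever $e(x)>0^\alg{A}$ for positive literals, but negated literals $\neg x$ break the uniform positivity reading; the clean resolution is to prove the stronger claim \emph{$e'(\psi)=1$ iff $e(\psi)>0^\alg{A}$} by induction, handling the literal $\neg x$ via $e(\neg x)>0^\alg{A}$ meaning $e(x)<1^\alg{A}$ hence $e'(x)=0$ hence $e'(\neg x)=1$, and the clause/monomial steps via $\vee$ (positive iff some disjunct positive, matching Boolean $\vee$) and via $\cdot$ (positive iff both factors positive, using $x\cdot y\leq x\wedge y$ for the forward implication and Fact \ref{dist_law}-type behaviour, i.e.\ $a,b>0^\alg{A}$ does \emph{not} guarantee $a\cdot b>0^\alg{A}$). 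This last point is the genuine difficulty: multiplication can send two positive elements to $0^\alg{A}$, so the equivalence \emph{$e(\psi)>0^\alg{A}$ iff $e'(\psi)=1$} fails at the $\cdot$-step. I therefore expect the correct proof to use only the \emph{forward} direction of rounding for the full-satisfiability hypothesis — i.e.\ prove $e(\varphi)=1^\alg{A}$ implies $e'(\varphi)=1$ by induction (which is clean: $1^\alg{A}\cdot 1^\alg{A}=1^\alg{A}$ and $a\vee b=1^\alg{A}$ forces, in a chain-free setting, at least structural agreement) — and to obtain the positive-satisfiability case by a separate observation reducing positive satisfiability of a $(\cdot,\vee)$-term to full satisfiability. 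Pinning down this reduction, so that the single statement covers both ``satisfiable'' and ``positively satisfiable'' uniformly, is the step I expect to require the most care.
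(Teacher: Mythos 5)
Your easy direction is fine, but the hard direction is never actually proved, and the inductive scheme you sketch for it is false. The rounding map $e'(x)=1$ iff $e(x)=1^\alg{A}$ does not satisfy the claim you call ``clean,'' namely that $e(\varphi)=1^\alg{A}$ implies $e'(\varphi)=1$: take the four-element Boolean algebra with an atom $a$, the term $\varphi = x \vee y$, and $e(x)=a$, $e(y)=\neg a$; then $e(\varphi)=1^\alg{A}$ while $e'(x)=e'(y)=0$, so $e'(\varphi)=0$. So the $\vee$-step breaks (a join can be $1^\alg{A}$ with neither joinand equal to $1^\alg{A}$), just as you correctly observed that the $\cdot$-step breaks for positivity (in $\standardL$, $\tfrac12\cdot\tfrac12=0$). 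The deeper point your counterexample-hunting should have revealed is that \emph{no} pointwise rounding of the witnessing assignment can work: the classical satisfying assignment is in general unrelated to $e$. Finally, your closing move --- ``obtain the positive-satisfiability case by a separate observation reducing positive satisfiability of a $(\cdot,\vee)$-term to full satisfiability'' --- is left unspecified and is essentially the lemma itself (both notions coincide with $\SAT(\stBool)$ on this fragment), so as written it is circular.

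The missing idea is to argue by contraposition through a normal form, which handles both satisfiability notions uniformly and avoids rounding altogether. Since $\cdot$ distributes over $\vee$ in every $\FLew$-algebra (Fact \ref{dist_law}), every $(\cdot,\vee)$-term is $\FLew$-equivalent to a DF-term $\varphi^\circ$, a join of monomials. If $\varphi^\circ$ is unsatisfiable in $\stBool$, then each of its monomials contains a complementary pair of literals, hence (by commutativity and associativity of $\cdot$) a subterm $x\cdot\neg x$, which is $\FLew$-equivalent to $0$ by Fact \ref{FLew_facts}(2); so every monomial, and therefore $\varphi^\circ$ and $\varphi$, is identically $0^\alg{A}$ under \emph{every} assignment in \emph{every} $\FLew$-algebra --- not even positively satisfiable. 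Combined with your (correct) easy direction via Lemma \ref{sat_incl}, this closes the biconditional for both $\SAT$ and $\SATPOS$ at once.
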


\begin{proof} Let $\varphi$ be a $(\cdot, \vee)$-term. 
One can bring $\varphi$ to an $\FLew$-equivalent DF-term $\varphi^\circ$ using distributivity 
of  $\cdot$ over $\vee$  (Fact \ref{dist_law}).  
If $\varphi^\circ$ is satisfiable in $\stBool$, 
then both $\varphi^\circ$ and  $\varphi$ are satisfiable in $\alg{A}$ 
by the same assignment.
On the other hand, if $\varphi^\circ$ is unsatisfiable in $\stBool$,
then each of its monomials contains a pair of complementary literals, i.e., 
(relying on commutativity of $\cdot$) a subterm of the form $x\cdot\neg x$ for some variable $x$.
The term $x\cdot\neg x$ is $\FLew$-equivalent to $0$.
Therefore, each of the monomials in $\varphi^\circ$ is interpreted as  $0^\alg{A}$.
Hence, $\varphi^\circ$ is (positively) unsatisfiable in $\alg{A}$, and so is $\varphi$.
 \end{proof}

The restriction to CF-terms of the set $\SAT(\alg{A})$ will be denoted $\SAT^{{\rm CF}}(\alg{A})$.
The sets $\SATPOS^{{\rm CF}}(\alg{A})$, $\oSAT^{{\rm CF}}(\alg{A})$ and $\oSATPOS^{{\rm CF}}(\alg{A})$
are defined analogously.

\begin{corollary} 
\label{CF-NP-complete}
For any nontrivial $\FLew$-algebra $\alg{A}$, 
the sets $\SAT^{\rm CF}(\alg{A})$ and $\SATPOS^{\rm CF}(\alg{A})$
are $\NP$-complete.
\end{corollary}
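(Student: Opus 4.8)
The plan is to establish $\NP$-completeness by separately proving membership in $\NP$ and $\NP$-hardness, drawing on Lemma \ref{lm_SAT_Bool} for both. The crucial observation is that Lemma \ref{lm_SAT_Bool} tells us that for a $(\cdot,\vee)$-term, and in particular for any CF-term $\varphi$, satisfiability (and positive satisfiability) in an arbitrary nontrivial $\FLew$-algebra $\alg{A}$ coincides exactly with satisfiability in $\stBool$. This reduces both $\SAT^{\rm CF}(\alg{A})$ and $\SATPOS^{\rm CF}(\alg{A})$ to the classical problem, independently of which nontrivial $\alg{A}$ we picked.

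For membership in $\NP$, I would argue as follows. Given a CF-term $\varphi$, Lemma \ref{lm_SAT_Bool} says $\varphi \in \SAT^{\rm CF}(\alg{A})$ iff $\varphi$ is satisfiable in $\stBool$, i.e.\ iff $\varphi$ is classically satisfiable as a Boolean formula (reading $\cdot$ as classical conjunction and the clauses as ordinary disjunctions). A satisfying Boolean assignment to the variables of $\varphi$ is a polynomial-size certificate that can be checked in polynomial time by evaluating $\varphi$ in $\stBool$. Hence $\SAT^{\rm CF}(\alg{A}) \in \NP$, and the identical argument via the ``positively satisfiable'' clause of Lemma \ref{lm_SAT_Bool} gives $\SATPOS^{\rm CF}(\alg{A}) \in \NP$. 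Notice that membership does not require us to compute anything inside $\alg{A}$ itself, which is what makes the result uniform across all nontrivial $\FLew$-algebras.

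For $\NP$-hardness, I would reduce from classical $\SAT$ for CNF formulas, which is $\NP$-complete by \cite{Cook:Complexity1971}. Given a classical CNF formula $\psi$, regard it directly as an $\FLew$ CF-term by interpreting each propositional conjunction as $\cdot$ and each clause as a $\vee$-disjunction of literals; this is a linear-time syntactic transformation. By Lemma \ref{lm_SAT_Bool}, the resulting CF-term is satisfiable (equivalently, positively satisfiable) in $\alg{A}$ iff $\psi$ is satisfiable in $\stBool$, which is precisely classical satisfiability of $\psi$. Thus classical CNF-$\SAT$ reduces in polynomial time to both $\SAT^{\rm CF}(\alg{A})$ and $\SATPOS^{\rm CF}(\alg{A})$, establishing $\NP$-hardness.

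Combining the two halves yields $\NP$-completeness of both $\SAT^{\rm CF}(\alg{A})$ and $\SATPOS^{\rm CF}(\alg{A})$ for every nontrivial $\alg{A}$. I do not anticipate a genuine obstacle here: the entire content is packaged into Lemma \ref{lm_SAT_Bool}, and the corollary is essentially an observation that a semantic coincidence with classical logic transports both the easy-certification and the hardness of classical $\SAT$. The only point requiring a small amount of care is that the reduction and the certificate must not depend on effective access to the (possibly infinite) algebra $\alg{A}$, and indeed they do not, since all verification happens in the fixed two-element algebra $\stBool$.
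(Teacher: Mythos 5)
Your proof is correct and follows essentially the same route as the paper: the paper's proof is a one-liner observing that Lemma \ref{lm_SAT_Bool} gives $\SAT^{\rm CF}(\stBool) = \SAT^{\rm CF}(\alg{A}) = \SATPOS^{\rm CF}(\alg{A})$ for any nontrivial $\alg{A}$, and then appeals to the classical $\NP$-completeness of CNF-satisfiability. You have merely unpacked the membership and hardness halves that this identification implicitly carries, including the (correct) observation that verification happens entirely in $\stBool$ and never requires effective access to $\alg{A}$.
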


\begin{proof}
Lemma \ref{lm_SAT_Bool} gives
$$
\SAT^{\rm CF}(\stBool) = \SAT^{\rm CF}(\alg{A}) = \SATPOS^{\rm CF}(\alg{A})$$
for any nontrivial $\FLew$-algebra $\alg{A}$.
\end{proof}

Let us mention the problem of computing or approximating
the maximum value of a term
in the standard MV-algebra $\standardL$. For a term $\phi$
with variables $x_1,\ldots,x_n$, let $\max(\phi)$ be the
maximum of its value over $[0,1]^n$, which is well defined,
since it is the maximum of a continuous function on a compact
subset of the Euclidean space. The following theorem may
be obtained as a consequence of Lemma \ref{lm_SAT_Bool},
however, we present a proof based on geometric properties
of $(\cdot, \vee)$-terms in $\standardL$.

\begin{theorem}
If $\delta < 1/2$ is a positive real constant and there
is a polynomial-time algorithm computing, for every $(\cdot, \vee)$-term
$\phi$ in the algebra $\standardL$, a real number $\mathrm{alg}(\phi)$
satisfying $|\mathrm{alg}(\phi) - \max(\phi)| \le \delta$,
then $\Po=\NP$.
\end{theorem}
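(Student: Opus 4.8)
The plan is to prove the contrapositive: assuming a polynomial-time approximation algorithm with error bounded by $\delta < 1/2$, I will construct a polynomial-time decision procedure for an $\NP$-complete problem, forcing $\Po = \NP$. The natural target is the satisfiability of $(\cdot, \vee)$-terms, which by Corollary \ref{CF-NP-complete} (applied to $\standardL$, a nontrivial $\FLew$-algebra) is $\NP$-complete. The bridge between this decision problem and the approximation task is the observation that for a $(\cdot, \vee)$-term $\phi$, the value $\max(\phi)$ over $[0,1]^n$ should be exactly $1$ when $\phi$ is satisfiable in $\stBool$ and exactly $0$ when it is not. The first direction is immediate: a Boolean satisfying assignment is a point in $\{0,1\}^n \subseteq [0,1]^n$ witnessing value $1$. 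The second direction is the substance: when $\phi$ is classically unsatisfiable, I must show its value is identically $0$ on all of $[0,1]^n$, not merely on Boolean points.

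First I would establish this gap precisely. By the proof of Lemma \ref{lm_SAT_Bool}, an unsatisfiable $(\cdot, \vee)$-term, when brought to its equivalent DF-form $\phi^\circ$, has every monomial containing a complementary pair $x \cdot \neg x$, which is $\FLew$-equivalent to $0$; hence $\phi^\circ$ (and $\phi$) evaluates to $0^\alg{A}$ under every assignment, in particular every real assignment in $\standardL$. So $\max(\phi) = 0$ in the unsatisfiable case. This yields the clean dichotomy $\max(\phi) \in \{0, 1\}$ with the two values corresponding exactly to unsatisfiability and satisfiability. Given the promised algorithm, I compute $\mathrm{alg}(\phi)$; since $|\mathrm{alg}(\phi) - \max(\phi)| \le \delta < 1/2$, the approximation separates the two cases: $\mathrm{alg}(\phi) < 1/2$ forces $\max(\phi) = 0$ (unsatisfiable) and $\mathrm{alg}(\phi) > 1/2$ (or $\ge 1/2$, handled by the strict bound) forces $\max(\phi) = 1$ (satisfiable). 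Thus I decide $(\cdot, \vee)$-satisfiability in polynomial time.

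The paper signals, however, that it wants a proof \emph{based on geometric properties} of $(\cdot, \vee)$-terms in $\standardL$ rather than invoking Lemma \ref{lm_SAT_Bool} directly. The plan there is to argue geometrically that a $(\cdot, \vee)$-term defines a piecewise-linear function whose maximum over the cube $[0,1]^n$ is attained at a vertex, i.e., at a Boolean point. The operations $\cdot$ (truncated sum $\max(0, x+y-1)$) and $\vee$ (maximum) are both monotone and convex-friendly in a suitable sense: $\vee$ preserves the property of being a maximum of affine functions, and McNaughton-style reasoning shows any $(\cdot, \vee)$-term is a continuous piecewise-linear function with integer coefficients. The key geometric claim is that such a function attains its maximum at a point of $\{0,1\}^n$; combined with the fact that on Boolean inputs the function agrees with the corresponding classical formula, this again gives $\max(\phi) \in \{0,1\}$ matching Boolean satisfiability.

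The main obstacle is making the geometric maximum-at-a-vertex argument rigorous, since $\cdot$ is not itself convex (it is concave along the anti-diagonal due to the truncation at $0$). The clean route is to exploit monotonicity together with the structure of $\vee$ over monomials: each monomial $m$ built from literals by $\cdot$ is monotone in each of its literals, and one checks that pushing each variable toward whichever endpoint $0$ or $1$ increases (weakly) the value of every monomial simultaneously is not generally possible—so instead I argue per-monomial. For a single monomial, the truncated sum of literals, the supremum over the cube is attained at an extreme point of the feasible region, and since the constraints are box constraints, at a vertex; taking the outer $\vee$ preserves attainment at a vertex. I expect the careful bookkeeping of this vertex-attainment claim, handling the interaction between negated literals and the truncation, to be the delicate step, whereas the reduction from satisfiability and the final $\Po = \NP$ conclusion are routine once the dichotomy $\max(\phi) \in \{0,1\}$ is in hand.
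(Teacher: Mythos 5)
Your first argument is a complete and correct proof, but it takes exactly the route the paper explicitly declines: the sentence preceding the theorem notes that the result ``may be obtained as a consequence of Lemma \ref{lm_SAT_Bool}'', which is what you do --- if $\phi$ is classically satisfiable, the Boolean witness in the subalgebra $\{0,1\}$ of $\standardL$ gives $\max(\phi)=1$; if not, the DF-form argument from the proof of Lemma \ref{lm_SAT_Bool} (every monomial contains $x\cdot\neg x$, which is identically $0$) gives $\max(\phi)=0$, and the $\delta<1/2$ approximation then decides $\SAT^{\rm CF}(\stBool)$, an $\NP$-hard problem. The paper instead proves the dichotomy $\max(\phi)\in\{0,1\}$ geometrically: literals are affine, hence convex; if $f,g$ are convex then so are $\max(f,g)$ and $\max(0,f+g-1)$, so by induction every $(\cdot,\vee)$-term defines a convex function on $[0,1]^n$; and since $[0,1]^n$ is the convex hull of $\{0,1\}^n$, the maximum of a convex function over the cube equals its maximum over the Boolean vertices, where evaluation is classical. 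Your route buys independence from geometry at the cost of invoking the distributivity-based normal form; the paper's route is self-contained, needs no DF-conversion, and isolates a reusable structural fact (convexity of $(\cdot,\vee)$-definable functions in $\standardL$).

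Your attempted reconstruction of the geometric proof, by contrast, rests on a false premise and a false general claim, and would not have worked as sketched. You assert that $\cdot$ ``is not itself convex (it is concave along the anti-diagonal due to the truncation at $0$)'': in fact $x\cdot y=\max(0,x+y-1)$ is the pointwise maximum of two affine functions and is therefore convex (along the anti-diagonal it is constantly $0$), and more to the point, $\max(0,f+g-1)$ is convex whenever $f$ and $g$ are, which is precisely what makes the paper's induction go through with no per-monomial bookkeeping. Missing this, you retreat to ``McNaughton-style'' piecewise-linearity, but continuous piecewise-linear functions do \emph{not} in general attain their maximum at a vertex of the cube (e.g.\ $\min(x,1-x)$, definable with $\wedge$, peaks at $1/2$); vertex attainment is exactly what convexity supplies. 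Your fallback per-monomial argument on the DF-form can be repaired (a monomial $\ell_1\cdots\ell_k$ equals $\max\bigl(0,\sum_i \ell_i-(k-1)\bigr)$, an affine function truncated at $0$, whose maximum over the box is attained at a vertex, and $\vee$ preserves this), but the ``delicate bookkeeping'' you anticipate is an artifact of having overlooked that convexity is preserved by both operations directly.
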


\begin{proof}
The interpretations of the literals $x_i$ and $\neg x_i$ are
linear functions and, hence, they are convex in the geometric sense.
Moreover, if $f,g$ are convex functions, then so are $\max(f,g)$
and $\max(0, f+g-1)$ taken pointwise. Using simple induction,
the interpretation of any $(\cdot, \vee)$-term $\varphi(x_1,\dots,x_n)$
in $\standardL$ is a convex function in $[0,1]^n$.
Since $[0,1]^n$ is the convex hull of $\{0,1\}^n$, the maximum
of $\varphi(x_1,\dots,x_n)$ over $[0,1]^n$ is equal to
its maximum over $\{0,1\}^n$.

Assume an algorithm exists with the property given in the theorem.
Since $\delta < 1/2$ and for every $(\cdot, \vee)$-term,
we have $\max(\phi) \in \{0,1\}$, we have also
$$
\phi \in \SAT(\stBool) \iff \max(\phi)=1 \iff \mathrm{alg}(\phi) \ge 1/2 \ .
$$
Since testing the leftmost condition is $\NP$-hard
and the rightmost condition can be verified
using the output $\mathrm{alg}(\phi)$ of the algorithm, the
conclusion follows.
\end{proof}

\section{Positive, but not full satisfiability}
\label{section:DP}

For each nontrivial $\FLew$-algebra $\alg{A}$,
we investigate the set $\SATPOS \setminus \SAT(\alg{A})$ 
and we look at this set from a computational point of view.
Recall that both $\SAT(\alg{A})$ and $\SATPOS(\alg{A})$
are $\NP$-hard for a nontrivial $\FLew$-algebra $\alg{A}$
(Theorem \ref{th_SAT_SATPOS_NP-hard}).

Suppose that both $\SAT(\alg{A})$ and $\SATPOS(\alg{A})$ are $\NP$-sets.
Then it follows from the definition that the set $\SATPOS \setminus \SAT(\alg{A})$ 
 is a $\Delta_2$ set within the polynomial hierarchy. 
 The class of decision problems that arise as a set-theoretic difference of two problems in $\NP$ 
has been shown to have complete problems under polynomial-time reducibility.

\begin{definition}
A decision problem $L$ is in the class $\DP$ iff $L=L_1\setminus L_2$ for some decision problems $L_1, L_2\in\NP$.
\end{definition}

\begin{fact}[\cite{Papadimitriou:CC}]
\label{th_standard_DP_complete}
If $L_1, L_2$ are $\NP$-complete sets, then $L_1 \times \overline{L_2}$
is $\DP$-complete.
\end{fact}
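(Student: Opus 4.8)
The statement asserts two things at once---that $L_1 \times \overline{L_2}$ lies in $\DP$, and that every $\DP$ problem reduces to it---so the plan is to treat these two halves separately, reading $L_1 \times \overline{L_2}$ as the language of pairs $(x,y)$ with $x \in L_1$ and $y \notin L_2$. The only hypothesis available is that $L_1$ and $L_2$ are each $\NP$-complete, and I expect this to be used once per half; in the hardness half it will be used once per coordinate of the product, which is the conceptual crux.

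For membership I would exhibit $L_1 \times \overline{L_2}$ directly as a set-theoretic difference of two $\NP$ sets, matching the definition of $\DP$. Put $A = \{(x,y) \mid x \in L_1\}$ and $B = \{(x,y) \mid y \in L_2\}$. Both are in $\NP$: to certify $(x,y) \in A$ one guesses and checks an $\NP$-witness for $x \in L_1$ while ignoring $y$, and symmetrically for $B$ using $L_2 \in \NP$. Since $A \setminus B = A \cap \overline{B}$ consists precisely of the pairs with $x \in L_1$ and $y \notin L_2$, we obtain $L_1 \times \overline{L_2} = A \setminus B$, so it is in $\DP$. This step is routine; the only care needed is to keep the two coordinates independent, so that each projected membership test stays inside $\NP$.

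For hardness I would take an arbitrary $L \in \DP$ and, by definition, write $L = C \setminus D$ with $C, D \in \NP$. Because $L_1$ is $\NP$-complete there is a polynomial-time many-one reduction $f$ with $x \in C \iff f(x) \in L_1$; because $L_2$ is $\NP$-complete there is a polynomial-time $g$ with $x \in D \iff g(x) \in L_2$. I would then set $h(x) = (f(x), g(x))$, which is polynomial-time computable since $f$ and $g$ are. The chain $x \in L \iff (x \in C \et x \notin D) \iff (f(x) \in L_1 \et g(x) \notin L_2) \iff h(x) \in L_1 \times \overline{L_2}$ shows that $h$ is a valid reduction, establishing $\DP$-hardness.

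There is no genuine obstacle here, but the point to get right---and the reason the hypothesis is stated for both $L_1$ and $L_2$ rather than just one---is that the difference $C \setminus D$ must be split across the two coordinates of the target: the positive part $C$ is routed into $L_1$ and the negative part $D$ into the complemented coordinate $\overline{L_2}$. This forces the use of $\NP$-completeness twice, once for each coordinate, and it is exactly this pairing that converts an arbitrary $\DP$ instance into a single instance of $L_1 \times \overline{L_2}$.
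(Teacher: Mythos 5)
Your proof is correct and complete: the membership half exhibits $L_1 \times \overline{L_2}$ as a difference of two $\NP$ sets exactly matching the paper's definition of $\DP$, and the hardness half pairs the two reductions coordinatewise, which is the standard textbook argument (it is the proof of the $\DP$-completeness of SAT--UNSAT in the cited reference \cite{Papadimitriou:CC}). The paper itself states this as a Fact with a citation and gives no proof, so there is nothing to diverge from; your argument is precisely the one the citation points to.
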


\noindent
Examples of algebras $\alg{A}$ with $\SATPOS \setminus \SAT(\alg{A}) \in \DP$ include the standard
and the finite MV-algebras, since the corresponding $\SATPOS(\alg{A})$ and
$\SAT(\alg{A})$ are in NP.
$\SAT^{{\rm CF}}(\stBool)$ is $\NP$-complete.
Consequently,
\begin{equation} \label{fl_DP_complete}
\SAT^{{\rm CF}}(\stBool) \times \oSAT^{{\rm CF}} (\stBool)
\end{equation}
is complete for $\DP$.
One may assume that, for each given instance $\langle \varphi_1, \varphi_2 \rangle$, 
the terms $\varphi_1$ and $\varphi_2$ share no variables.

We show the following as a lower bound 
on the complexity of $\SATPOS\setminus\SAT(\alg{A})$.

\begin{theorem} \label{th_DP_hard}
 Let $\alg{A}$ be an $\FLew$-algebra. 
Assume that $\SATPOS \setminus \SAT(\alg{A})$ is  nonempty.
Then $\SATPOS \setminus \SAT(\alg{A})$ is  $\DP$-hard.
\end{theorem}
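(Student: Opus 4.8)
The plan is to give a polynomial-time many-one reduction from the $\DP$-complete problem $\SAT^{\rm CF}(\stBool)\times\oSAT^{\rm CF}(\stBool)$ of \eqref{fl_DP_complete} to $\SATPOS\setminus\SAT(\alg{A})$. First observe that nonemptiness of $\SATPOS\setminus\SAT(\alg{A})$ forces $\alg{A}$ to be nontrivial, since in a trivial algebra $\SATPOS(\alg{A})$ is empty; this makes Lemma~\ref{lm_SAT_Bool} and Corollary~\ref{CF-NP-complete} available. The nonemptiness assumption also fixes, once and for all, a witness term $\chi$ together with an assignment $e$ such that $e(\chi)=c>0^\alg{A}$, while $d(\chi)\neq 1^\alg{A}$ for \emph{every} assignment $d$. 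This $\chi$ depends only on $\alg{A}$, so it is a constant of the reduction.

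Given an instance $\langle\varphi_1,\varphi_2\rangle$ of CF-terms sharing no variables, I would rename variables so that $\chi$, $\varphi_1$, $\varphi_2$ are pairwise variable-disjoint and output
\[ \psi \;=\; (\chi\cdot\varphi_1)\vee\varphi_2 . \]
I would then establish two separate equivalences: $\psi\in\SAT(\alg{A})$ iff $\varphi_2\in\SAT(\stBool)$, and $\psi\in\SATPOS(\alg{A})$ iff $\varphi_1\in\SAT(\stBool)$ or $\varphi_2\in\SAT(\stBool)$. Conjoining them, the second disjunct of the positive-satisfiability condition cancels against the negation of the full-satisfiability condition, leaving exactly $\psi\in\SATPOS\setminus\SAT(\alg{A})$ iff $\varphi_1\in\SAT^{\rm CF}(\stBool)$ and $\varphi_2\in\oSAT^{\rm CF}(\stBool)$, as required.

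The engine behind both equivalences is the dichotomy for CF-terms extracted from Lemma~\ref{lm_SAT_Bool} and Corollary~\ref{CF-NP-complete}: a classically unsatisfiable CF-term is $\FLew$-equivalent to $0$ in $\alg{A}$ (each monomial of its DF-form carries a subterm $x\cdot\neg x\equiv 0$), whereas a classically satisfiable one lies in $\SAT(\alg{A})$ and hence is settable to exactly $1^\alg{A}$. For full satisfiability: if $\varphi_2\in\SAT(\stBool)$, set $\varphi_2=1^\alg{A}$ and get $\psi=1^\alg{A}$; otherwise $\varphi_2\equiv 0$, so $\psi$ collapses to $\chi\cdot\varphi_1\le\chi$ (Fact~\ref{FLew_facts}, monotonicity of $\cdot$), which is never $1^\alg{A}$ because $\chi\neq 1^\alg{A}$. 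For positive satisfiability, since $a\vee b>0^\alg{A}$ iff $a>0^\alg{A}$ or $b>0^\alg{A}$ in any bounded lattice, I only need to analyze $\chi\cdot\varphi_1$: if $\varphi_1\notin\SAT(\stBool)$ then $\varphi_1\equiv 0$ and the product is identically $0$; if $\varphi_1\in\SAT(\stBool)$, setting $\varphi_1=1^\alg{A}$ and $\chi$ via $e$ gives $\chi\cdot\varphi_1=c>0^\alg{A}$, and conversely positivity of the product forces $\varphi_1>0^\alg{A}$, i.e.\ $\varphi_1\in\SATPOS(\alg{A})=\SAT(\stBool)$.

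The step I expect to demand the most care is verifying these equivalences \emph{without} assuming that $\alg{A}$ is a chain, where neither $a\vee b=1^\alg{A}$ nor $a\cdot b>0^\alg{A}$ decomposes naively over the operands. The design above is precisely what sidesteps this: I use $\vee$ only to read off positivity (which it handles in every lattice), and I use $\cdot$ both to cap $\chi\cdot\varphi_1$ strictly below $1^\alg{A}$ and to let the Boolean-valued behaviour of the CF-terms $\varphi_1,\varphi_2$ carry the combinatorial content, so the awkward intermediate values of $\chi$ never interfere. I would close by noting that the map is computable in polynomial time, as it merely substitutes the fixed term $\chi$ and renames variables, which completes the reduction and hence the $\DP$-hardness of $\SATPOS\setminus\SAT(\alg{A})$.
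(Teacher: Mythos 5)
Your proposal is correct and takes essentially the same route as the paper's proof: you fix a witness in $\SATPOS\setminus\SAT(\alg{A})$ (your $\chi$, the paper's $\alpha$), reduce the same $\DP$-complete problem $\SAT^{{\rm CF}}(\stBool)\times\oSAT^{{\rm CF}}(\stBool)$ via the term $(\chi\cdot\varphi_1)\vee\varphi_2$ with pairwise disjoint variables, and drive the case analysis by the CF-term dichotomy of Lemma~\ref{lm_SAT_Bool}, exactly as the paper does. Your only deviation is using $\chi\cdot\varphi_1$ where the paper uses $\alpha\wedge\varphi_1$, which is immaterial since in any $\FLew$-algebra both $a\cdot b$ and $a\wedge b$ lie below each argument and equal $a$ when $b=1^\alg{A}$.
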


\begin{proof}
The assumptions on $\alg{A}$ entail nontriviality. 
Let $\alpha\in\SATPOS \setminus \SAT(\alg{A})$.
We give a polynomial-time reduction of the set (\ref{fl_DP_complete})
to $\SATPOS \setminus \SAT(\alg{A})$.
For any pair of CF-terms $\langle \varphi_1,\varphi_2\rangle$,
such that $\alpha$, $ \varphi_1$ and $\varphi_2$ have no variables
in common, we show 
$$\langle \varphi_1,\varphi_2\rangle \in \SAT^{{\rm CF}}(\stBool) \times \oSAT^{{\rm CF}} (\stBool)\mbox{ \ \ iff \ \  } 
(\alpha \wedge \varphi_1) \vee \varphi_2 \in \SATPOS \setminus \SAT(\alg{A}).$$

Assume 
$\langle \varphi_1,\varphi_2\rangle \in \SAT^{{\rm CF}}(\stBool) \times \oSAT^{{\rm CF}} (\stBool)$,
i.e., $\varphi_1$ is classically satisfiable, while $\varphi_2$ is not.
Then one can choose an assignment $e_1$ in $\alg{A}$ such that
$e_1(\alpha)>0^\alg{A}$, 
$e_1(\varphi_1)=1^\alg{A}$, and
$e_1(\varphi_2)=0^\alg{A}$. 
Then $e_1((\alpha \wedge \varphi_1) \vee \varphi_2) =e_1(\alpha)$, so $(\alpha \wedge \varphi_1) \vee \varphi_2 \in \SATPOS(\alg{A})$.
On the other hand, by Lemma \ref{lm_SAT_Bool}, $\varphi_2\not\in\SATPOS(\alg{A})$, so $\varphi_2$ is zero under any assignment in $\alg{A}$.
Because $\alpha$ is not fully satisfiable in $\alg{A}$, neither is the term $(\alpha \wedge \varphi_1) \vee \varphi_2 $. 

In the remaining cases, we assume 
$\langle \varphi_1,\varphi_2\rangle \not\in \SAT^{{\rm CF}}(\stBool) \times \oSAT^{{\rm CF}} (\stBool)$
for a pair of CF-terms $\langle \varphi_1, \varphi_2\rangle$. In particular:

Assume $\varphi_2$ is classically satisfiable, no matter what $\varphi_1$ is.
Then one can choose an assignment $e_2$ in $\alg{A}$ such that 
$e_2(\varphi_2) = 1^\alg{A}$. 
Then $e_2((\alpha \wedge \varphi_1) \vee \varphi_2) = 1^\alg{A}$. 

Assume that 
neither $\varphi_1$ nor $\varphi_2$ are classically satisfiable.
This entails that neither term is positively satisfiable in $\alg{A}$, using 
Lemma \ref{lm_SAT_Bool}.
Therefore, for each $e$ in $\alg{A}$, we have that $e(\alpha \wedge \varphi_1)=0^\alg{A}$
and $e(\varphi_2)=0^\alg{A}$. Thus $(\alpha \wedge \varphi_1) \vee \varphi_2$ is
not positively satisfiable in $\alg{A}$.

\end{proof}

\section{General remarks on satisfiability}
\label{section:generalOnSAT}

In this section, we discuss the relation of satisfiability problems
in general $\FLew$-algebras to other notions in logic and applications.

An $\FLew$-algebra $\alg{A}$ and an $\FLew$-term $\varphi$
together define a \emph{function} $\varphi^\alg{A}$.
In investigating satisfiability and positive satisfiability of $\varphi$ in $\alg{A}$,
we are asking two particular questions about the range of $\varphi^\alg{A}$.
In the interpretation given by the two-element Boolean algebra, 
all functions in the algebra are term-definable,
tautologousness and satisfiability of terms are decidable problems, and
the membership of a term in either of the two sets 
is  quite informative about the range of the function defined by the term.
In a general $\FLew$-algebra $\alg{A}$, none of the above is the case:
not all functions on $\alg{A}$ need to be term-definable;
satisfiability and positive satisfiability
capture comparatively less information about the range of the defined function\footnote{We are making 
two comparisons against constants provided by the language; 
other meaningful comparisons are tautologousness and positive tautologousness, defined by
H\'ajek: a term $\varphi$ is a positive tautology in $\alg{A}$ iff $\varphi^\alg{A}$ never assumes the value $0^\alg{A}$.};
if $\alg{A}$ is infinite, there is no obvious algorithm to decide satisfiability or tautologousness in $\alg{A}$. 

\emph{Consistency.} 
A theory $T$ in a logic $\logic{L}$ extending $\FLew$ is a set of $\FLew$-terms that is closed under
deduction over $\logic{L}$. $T$ is consistent iff it does not contain $0$.
We note that $T$ is consistent iff there is a nontrivial $\logic{L}$-algebra $\alg{A}$ and 
an assignment $e_\alg{A}$ such that $e_\alg{A}(\varphi)=1^\alg{A}$ for each $\varphi\in T$ (a nontrivial \emph{model} of $T$).
Indeed, for a consistent $T$, one can get a nontrivial model by considering the Lindenbaum-Tarski algebra of $T$ and
the assignment sending all elements of $T$ to the top element of the algebra; consistency ensures the algebra is nontrivial.
On the other hand, deduction preserves full satisfiability in an algebra:
 if $e(\phi)=1$ and $e(\phi\to\psi)=1$,  then $e(\psi)=1$.
Thus, consistency of $T$ coincides with its satisfiability in a nontrivial $\FLew$-algebra.

\emph{Normal and subnormal functions.} 
In fuzzy mathematics, it is sometimes useful to consider normal sets and normal functions.  
Given a universe $U$, a function $f\colon U \to \alg{A}$ is \emph{normal} iff $f(u)=1^\alg{A}$ for some $u\in U$;
otherwise it is \emph{subnormal}.
Often the function is a membership function of a set, then this terminology also applies to the set itself.  
Clearly, satisfiable terms define normal functions and
 positively satisfiable terms define functions that are not identically zero.

\emph{Solvability of equations.} 
For a given $\FLew$-algebra $\alg{A}$, and two terms $\varphi$ and $\psi$,
one may be interested in the problem whether the equation 
$\varphi \approx \psi$ has a solution in $\alg{A}$.
Such an equation is solvable in $\alg{A}$
iff the equivalence $\varphi\equiv \psi$ is satisfiable in $\alg{A}$, 
and on the other hand, a term $\varphi$ is satisfiable in $\alg{A}$
iff the equation $\varphi\approx 1$ is solvable in $\alg{A}$. 
One can extend this to finite sets of equations by considering their conjunction.
As already remarked, both problems are fragments of the existential theory of $\alg{A}$.

\emph{Definability.}
Let $\alg{A}$ be an $\FLew$-algebra and let $a\in A$.
The value $a$ is implicitly definable by a term $\varphi(x, \bar y)$ in $\alg{A}$
(in the variable $x$) 
iff $\varphi$ is satisfiable in $\alg{A}$ and, for any assignment $e_\alg{A}$ in $\alg{A}$,
if $e_\alg{A}(\varphi)=1^\alg{A}$, then $e_\alg{A}(x)=a$.
For example, it is not difficult to see that all rationals within $[0,1]$ are
implicitly definable in the standard MV-algebra $\standardL$ (cf.~\cite{Hajek:1998}).
Definable values can be used to introduce more general types of satisfiability.
To continue the example of a standard MV-algebra $\standardL$, 
for any term $\phi(\bar x)$, an assignment $e$ in $\standardL$ such that $e(\phi) \ge 1/2$ exists 
iff the term $(y \equiv \neg y) \cdot (y \to \phi(\bar x))$ is fully satisfiable.

\medskip

\section{Conclusion and further research}
\label{section:concluding}

This work investigated satisfiability of $\FLew$-terms in $\FLew$-algebras.
It identified $\WCon$-algebras as the subvariety of $\FLew$-algebras whose nontrivial members 
have classical positive satisfiability;
it characterized classical satisfiability by means of another property, namely, 
the existence of a two-element congruence.
It discussed inclusion order of satisfiability and positive satisfiability problems for $\FLew$-algebras.
It has shown that there are many different satisfiability and positive satisfiability problems in $\FLew$-algebras,
and therefore, most of them are undecidable. 
For any nontrivial $\FLew$-algebra, its satisfiability and positive satisfiability problems
are $\NP$-hard, while the set of positively, but not fully satisfiable terms, if nonempty, is hard for the class $\DP$.

\medskip

We conclude by pointing out some related topics that have not been addressed by this paper, 
and some possibilities of further research into problems investigated here.

One might consider (positive) satisfiability of sets of terms.
Let $T$ be a set of terms. 
If we take `$T$ is (fully) $\alg{A}$-satisfiable' 
to mean `there is an $\alg{A}$-assignment that (fully) satisfies all terms in $T$', 
then for a finite $T=\{\varphi_1, \varphi_2, \dots, \varphi_n\}$, 
we have that $T$ is $\alg{A}$-satisfiable iff so is $\varphi_1 \cdot \varphi_2\cdot \dots \cdot \varphi_n$; 
thus our approach covers finite theories under the given definition of satisfiability of sets of terms. 

On a similar note, one might discuss satisfiability of terms not in a single algebra, but in classes thereof.
This has been initiated in  \cite{Hanikova:Handbook}.

This or other works on satisfiability may prompt a look at 
properties of term-definable functions in $\FLew$-algebras or some related classes. For example, 
one might wonder about term definability for fragments of the algebraic language.
An example of an application of a language fragment is in Section \ref{section:CNFandDNF}.
Logics in fragments of language are often studied in detail (for example, BCK as the
pure implication fragment of $\FLew$) and looking at satisfiability for language
fragments is a natural counterpart.
 
One might propose a different definition of satisfiability: an $\FLew$-term $\varphi$ is satisfiable in an $\FLew$-algebra $\alg{A}$
iff $\sup \{ e_\alg{A}(\varphi) \}=1^\alg{A}$. This definition is inspired by the semantics
of the existential quantifier in first-order extensions of $\FLew$ (cf.~\cite{Rasiowa-Sikorski:MathematicsMetamathematics, Hajek:1998}). 
In general, this would lead to a different set of satisfiable terms in infinite $\FLew$-algebras: 
for example, the term $(x + x) \cdot (\neg x + \neg x)$ is satisfiable in $\standardL$ by a single element, namely, $1/2$;
it is unsatisfiable in a dense subalgebra of $\standardL$ not containing $1/2$, 
however the supremum of values the term takes in such an algebra is $1$.

One might think of suitable generalizations of the issues studied in this paper. 
A possible generalization lies in considering a broader class of algebras. 
Commutativity of multiplication might be dropped, which leads to $\FL{w}$-algebras; 
as these algebras are bounded, both positive and full satisfiability are meaningful notions quite
in the classical vein. The definition of full satisfiability also makes sense for $\FL{i}$-algebras.
In more generality, one might investigate existential theory of (classes of) $\FLew$-algebras or its syntactic fragments.

\bigskip
\noindent{\bf Unsolved problems.}
\begin{itemize}
\item
Assume $R$ is an algorithmically decidable set of primes and consider $\SAT(R^\ast)$
as defined in the proof of Theorem \ref{theorem:MV_SATs}. Is $\SAT(R^\ast)$ decidable?

\item
Is there a nontrivial $\FLew$-algebra $\alg{A}$, such that for every
nontrivial $\FLew$-algebra $\alg{B}$, we have $\SAT(\alg{B}) \subseteq \SAT(\alg{A})$?
\end{itemize}

\bigbreak
{\bf Acknowledgements.} The authors were supported by CE-ITI and GA\v CR
under the grant number GBP202/12/G061 and by RVO:67985807.
The authors are indebted to Rostislav Hor\v c\'{\i}k and an anonymous reviewer for their comments.

\end{document}